\documentclass[11pt]{article}
\usepackage{graphicx} 
\usepackage{float}

\usepackage[utf8]{inputenc}
\usepackage{setspace}

\usepackage[backend=biber,style=numeric,sorting=nty,
maxbibnames=99,doi=false,giveninits=true]{biblatex}
\usepackage{mhchem}
\usepackage{bbold}
\usepackage{dsfont}
\usepackage{mathtools}
\usepackage{stmaryrd}
\usepackage{verbatim}
\usepackage{amsthm}
\usepackage{amssymb}
\usepackage{titlesec}
\usepackage{mathrsfs}

\newcommand{\eps}{\varepsilon}

\newcommand{\PP}{\mathds{P}}
\newcommand{\R}{\mathds{R}}

\newcommand{\Z}{\mathds{Z}}
\newcommand{\E}{\mathds{E}} 

\newcommand{\I}{\mathcal{I}}

\newcommand{\J}{\mathcal{J}}

\newcommand{\CC}{\mathcal{C}}

\newcommand{\X}{\mathcal{X}}

\newcommand{\T}{\mathcal{T}}

\newcommand{\one}{\mathds{1}}

\usepackage{amsthm}
\newtheorem{theorem}{Theorem}[section]

\newtheorem{lemma}{Lemma}[section]

\newtheorem{remark}{Remark}[section]
\newtheorem{assumption}{Assumption}[section]

\numberwithin{equation}{section}

\theoremstyle{definition}

\usepackage{xcolor}
\usepackage[hidelinks]{hyperref}
\usepackage{xurl}
\hypersetup{
    colorlinks=true,
    linkcolor=blue,
    filecolor=blue,      
    urlcolor=blue,
    citecolor=blue,
    pdftitle={Overleaf Example},
    pdfpagemode=FullScreen,
}

\DeclarePairedDelimiterX{\floor}[1]{\lfloor}{\rfloor}{#1}
\DeclarePairedDelimiterX{\qvar}[1]{\langle}{\rangle}{#1}
\DeclarePairedDelimiterX{\inn}[1]{\langle}{\rangle}{#1}
\newcommand{\abs}[1]{\left\lvert #1 \right\rvert} 
\newcommand{\norm}[1]{\left\lVert #1 \right\rVert}

\newcommand{\Pn}{\nu}

\newcommand{\numI}{p}
\newcommand{\numJ}{\tilde{p}}
\newcommand{\Mo}{G}
\newcommand{\Wo}{H}

\newcommand{\FP}{\pi}
\newcommand{\QSS}{\eta}

\newcommand{\rr}[2]{\bar{r}_{#1}^{(#2)}}
\newcommand{\dd}[2]{\bar{d}_{#1}^{(#2)}}
\newcommand{\mi}[3]{\bar{\mu}_{#1}^{(#2,#3)}}
\newcommand{\size}[2]{\mathfrak{s}_{#1}^{(#2)}}
\newcommand{\fpone}[2]{\FP_{#1}^{(#2)}}
\newcommand{\fp}[2]{\rho_{#1}^{(#2)}}
\newcommand{\qss}[2]{\QSS_{#1}^{(#2)}}
\newcommand{\basisvec}[2]{e_{#1}^{(#2)}}

\newcommand{\sizenumeric}[2]{\mathfrak{s}_{#1}^{(#2)}}
\newcommand{\fponenumeric}[2]{\FP_{#1}^{(#2)}}
\newcommand{\fpnumeric}[2]{\rho_{#1}^{(#2)}}
\newcommand{\qssnumeric}[2]{\QSS_{#1}^{(#2)}}

\newcommand{\rrrate}[2]{r_{#1}^{(#2)}}
\newcommand{\ddrate}[2]{d_{#1}^{(#2)}}
\newcommand{\mirate}[3]{\mu_{#1}^{(#2,#3)}}

\usepackage[toc,page,titletoc]{appendix}
\usepackage[capitalise,nameinlink]{cleveref}

\usepackage[customcolors]{hf-tikz} 
\usetikzlibrary{patterns}
\usetikzlibrary{matrix,decorations.pathreplacing}
\pgfkeys{tikz/mymatrixenv/.style={decoration={brace},every left delimiter/.style={xshift=8pt},every right delimiter/.style={xshift=-8pt}}}
\pgfkeys{tikz/mymatrix/.style={matrix of math nodes,nodes in empty cells,left delimiter={[},right delimiter={]},inner sep=1pt,outer sep=1pt,column sep=3pt,row sep=3pt,nodes={minimum width=25pt,minimum height=13pt,anchor=center,inner sep=0pt,outer sep=0pt}}}
\pgfkeys{tikz/mymatrixbrace/.style={decorate,thick}}
\newcommand*\mymatrixbraceleftempty[4][m]{
	\draw[mymatrixbrace,color=white] (#1.east|-#1-#2-1.north east) -- node[right=2pt] {\textcolor{black}{#4}} (#1.east|-#1-#3-1.south east);
}
\newcommand*\mymatrixbraceleft[4][m]{
	\draw[mymatrixbrace] (#1.east|-#1-#2-1.north east) -- node[right=2pt] {#4} (#1.east|-#1-#3-1.south east);
}
\newcommand*\mymatrixbracetopempty[4][m]{
	\draw[mymatrixbrace,color=white] (#1.north-|#1-1-#2.north west) -- node[above=2pt] {\textcolor{black}{#4}} (#1.north-|#1-1-#3.north east);
}
\newcommand*\mymatrixbracetop[4][m]{
	\draw[mymatrixbrace] (#1.north-|#1-1-#2.north west) -- node[above=2pt] {#4} (#1.north-|#1-1-#3.north east);
}

\usepackage{authblk}
\title{Mutant Fixation for a Stochastic Evolutionary Model in Fragmented Populations}

\author[1]{Yi Fu}
\author[1]{Natalia Komarova}
\affil[1]{Department of Mathematics, University of California, San Diego, 9500 Gilman Drive, La Jolla CA 92093-0112. Email: {\tt\small {yif064,nkomarova}@ucsd.edu}}
\date{}

\begin{document}

\maketitle

\begin{abstract}
Population fragmentation is a common feature of many biological systems. Understanding mutant fixation in such systems is challenging because the underlying stochastic dynamics are high-dimensional. In this work, we develop a general mathematical framework for analyzing stochastic evolution in fragmented populations connected by rare migration. The framework is sufficiently general to accommodate heterogeneous deme sizes, deme-dependent birth and death processes, and migration on arbitrary strongly connected directed networks with asymmetric migration rates. We show that, in the limit where migration occurs on a much slower timescale than within-deme dynamics, the full stochastic process can be reduced to a lower-dimensional Markov chain whose states correspond to configurations of fully mutant and fully wild-type demes. The reduction theorem establishes that fixation probabilities and absorption times of the original process are asymptotically determined by the corresponding quantities of a reduced chain. As an application, we derive explicit formulas for mutant fixation probabilities and fixation times in fragmented populations initiated by the introduction of a single mutant. The results provide a general and tractable approach for studying evolutionary dynamics in complex fragmented populations.

\end{abstract}

\begin{refsection}

\section{Introduction}
In population genetics, mutant fixation occurs when a mutant lineage introduced into a wild-type population eventually takes over the entire population. In this context, the key questions are:  how likely a mutant is to take over, and how long it takes. Fixation  of mutants in well-mixed populations has been studied in various contexts, starting from classical work of Haldane \cite{haldane1927mathematical}, Moran \cite{moran1958random}, and  Kimura \cite{kimura1962probability}. Starting from the stepping-stone model by Kimura and Weiss \cite{kimura1964stepping}, fixation probability and timing have been investigated extensively in the context of subdivided populations \cite{maruyama1970fixation, slatkin1981fixation},  lattice models \cite{durrett2008probability}, and graph-structured populations  \cite{bib:2005LHN, traulsen2009stochastic, allen2017evolutionary, hindersin2014counterintuitive}.

It has been shown that the evolutionary dynamics  change significantly when the population is fragmented -- individuals mainly evolve within each deme via reproduction or death events, and demes are connected through occasional migration events. Marrec et al. \cite{bib:2021MLB} proposed a formal mapping to reduce each deme to a single entity and provide asymptotics of mutant fixation probabilities for certain structured populations. They showed how migration asymmetry changes the fixation of mutants in this system. Yagoobi and Traulsen \cite{yagoobi2021fixation} developed a coarse-grained description for fragmented populations where the dynamics in demes was described  by the Moran process, and demonstrated that the underlying graph structure can lead to amplification and suppression of selection.  Wodarz and Komarova \cite{bib:2025KW} proposed a coarse-grained approach  to  study probability of mutant fixation for mutants with proportionally scaled division and death rates (quasi-neutral mutants), showing that population fragmentation can  convert differences in turnover into a new effective fitness, even reversing the direction of selection.

In this work, we give a rigorous formulation of the coarse-graining for generic parameter values through stochastic singular perturbation analysis. We describe an evolutionary model in fragmented populations as a multi-dimensional birth-death process with migration, where we use a small parameter $\eps$ capturing the time scale separation between fast reproduction and death and slow migration. We further characterize the mutant fixation probabilities, the expected fixation time, and the conditional expected fixation time provided that the mutant fixates. This model can be described as a Stochastic Chemical Reaction Network (SCRN), which is a continuous time Markov chain. For a general introduction to SCRNs, we refer the reader to \cite{Anderson_Kurtz2015StochAnalysis,Wilkinson2006SMS}.
Avrachenkov et al. \cite{Avrachenkov2013} studied singular perturbation for discrete time Markov chains with finite state space. Some of the results in \cite{Avrachenkov2013} were generalized to the continuous time Markov chain setting in Bruno et al. \cite{bib:epifinite}, in which the perturbed continuous time Markov chains are considered to be irreducible. Here, the perturbed chains are not irreducible in nature, and we characterize the transient regime of the system. In particular, we study the probabilities for the system to be absorbed in each recurrent class, the expected time until exiting the transient set, and the conditional expected exiting time provided the system is absorbed in a particular recurrent class. The singular perturbation technique for continuous time Markov chains developed in this paper is general and has potential applications to systems other than this evolutionary model in fragmented populations as well.

\subsection{Notation}
Denote the set of non-negative integers by $\Z_+ = \{0,1,2, \ldots \}$. For an integer $d \geq 2$ we denote by $\Z_+^d$ the set of $d$-dimensional vectors with entries in $\Z_+$. 
The set of real numbers will be denoted by $\R$, and $d$-dimensional Euclidean space will be denoted by $\R^d$ for $d \geq 2$. 
We use $(\cdot)_+$ to mean that $(x)_+ = \max\{ 0,x\}$.
Vectors are column vectors unless indicated otherwise.
For integers $n,m \geq 1$, the set of $n \times m$ matrices with real-valued entries will be denoted by $\R^{n \times m}$. 
We denote by $\one$ a vector of any dimension where all entries are $1$'s. The size of $\one$ will be understood from the context. 
For a finite set $\X$, a matrix $Q = (Q_{x,y})_{x,y \in \X}$ is called an infinitesimal generator if $Q_{x,y} \geq 0$ for every $x \neq y \in \X$ and $Q\one = 0$.

\subsection{Structure of the paper}
The paper is organized as follows. We introduce the evolutionary model for fragmented population in
section \ref{sec:model}. In section \ref{sec:MotivatingExamples}, we use two motivating examples to introduce the mathematical setting and questions to address in this paper. The results presented in section \ref{sec:2D} are consequences of our key theorems in stochastic singular perturbation stated in section \ref{sec:SP}. These theorems are general to study the transient regime for singularly perturbed continuous time Markov chains with finite state space, and we apply these theorems to study the evolutionary model for fragmented population in section \ref{sec:general}. The proofs of these theorems are in \cref{sec:proofs}, and some additional details can be found in Supplementary Information (SI).

\section{Evolutionary Model for Fragmented Population}
\label{sec:model}

In this paper, we study fragmented population dynamics among $D$ demes. 
We present our evolutionary model for fragmented poplutation as a Stochastic Chemical Reaction Network (SCRN). More specifically, in each deme $\ell \in \{ 1, \dots, D \}$, there is a mixed population of wild-type species $W_\ell$ and mutant species $M_\ell$, where each individual can reproduce or die and the individuals compete under a shared carrying-capacity constraint $K_\ell$.
Migration events among demes happen at slower rates than reproduction or death. That is, individuals mainly evolve within each deme but demes are connected through occasional migration events. We use a small parameter $\eps$ to capture this time scale difference between the migration rate and the reproduction and death rates. We denote the number of species $W_\ell$ by $w_\ell$, and denote the number of species $M_\ell$ by $m_\ell$. 
Then, we can describe the dynamics of this system using the following reactions:
\begin{equation}
\begin{aligned}
\label{eqn:DemeReactionsMain}
    & W_\ell \ce{->[$\rrrate{w}{\ell} (x)$]} 2 W_\ell, \qquad  W_\ell \ce{->[$\ddrate{w}{\ell} (x)$]} \emptyset, \qquad W_\ell \ce{->[$\eps \mirate{w}{\ell}{k} (x)$]} W_{k}, \\
    & M_\ell \ce{->[$\rrrate{m}{\ell} (x)$]} 2 M_\ell, \qquad M_\ell \ce{->[$\ddrate{m}{\ell} (x)$]} \emptyset, \qquad M_\ell \ce{->[$\eps \mirate{m}{\ell}{k} (x)$]} M_{k},
\end{aligned}
\end{equation}
for $\ell \neq k \in \{ 1,2,\dots,D \}$, and over the arrow for these reactions, we write the reaction rates per capita, which depend on the current state $x=(w_1,m_1,w_2,m_2,\dots,w_D,m_D)$.

We will focus on systems with density-dependent birth and density-independent death\footnote{On the other hand, our theorems developed in section \ref{sec:SP} may be applied to the general version of the reaction network model \eqref{eqn:DemeReactionsMain}.}, and we assume that each deme does not undergo extinction. As a result, when there is only one individual in a deme, we set both the death rate and the migration rate (to other demes) for the deme to be zero. More explicitly, the reaction rates per capita in \eqref{eqn:DemeReactionsMain} are such that for $x=(w_1,m_1,w_2,m_2,\dots,w_D,m_D)$ and $\ell \neq k \in \{ 1,2,\dots,D \}$,
\begin{equation}
\label{eqn:ReactionRates}
\begin{aligned}
    \rrrate{w}{\ell} (x) = \rr{w}{\ell} \left(1 - \frac{w_\ell+m_\ell}{K_\ell} \right)_+, \quad & \ddrate{w}{\ell} (x) = \dd{w}{\ell} \one_{\{w_\ell+m_\ell>1\}}, & \mirate{w}{\ell}{k} (x) = \mi{w}{\ell}{k} \one_{ \left\{ \substack{w_\ell+m_\ell > 1, \\ w_k+m_k < K_k} \right\}}, \\
    \rrrate{m}{\ell} (x) = \rr{m}{\ell} \left(1 - \frac{w_\ell+m_\ell}{K_\ell} \right)_+ , \quad & \ddrate{m}{\ell} (x) = \dd{m}{\ell} \one_{\{w_\ell+m_\ell>1\}}, & \mirate{m}{\ell}{k} (x) = \mi{m}{\ell}{k} \one_{\left\{\substack{w_\ell+m_\ell > 1, \\ w_k+m_k < K_k} \right\}},
\end{aligned}
\end{equation}
where $\rr{w}{\ell}, \rr{m} {\ell}>0$ are  wild-type and mutant basic division rates, $\dd{w}{\ell}, \dd{m}{\ell} > 0$ are  wild-type and mutant  death rates, and $\mi{w}{\ell}{k}, \mi{m}{\ell}{k} \geq 0$ are wild-type and mutant migration rates.  Here, demes are connected through migration, but we do not require demes to be directly connected to all other demes.

For each $\eps \geq 0$, we let
\begin{equation}
\label{eqn:MC}
    X^{\eps} = \left\{ \left(X^\eps_{W_1} (t), X^\eps_{M_1} (t), X^\eps_{W_2} (t), X^\eps_{M_2} (t), \dots, X^\eps_{W_D} (t), X^\eps_{M_D} (t) \right) \in \Z_+^{2D}: t \geq 0 \right\}
\end{equation}
track the numbers of wild-type species $W_\ell$ and mutant species $M_\ell$ in each deme $\ell \in \{ 1, \dots, D \}$ in the system \eqref{eqn:DemeReactionsMain} over time $t$, and this $X^{\eps}$ is a continuous time Markov chain. The family of Markov chains $\{ X^\eps: \eps \geq 0 \}$ has a common finite state space
\begin{equation*}
    \X = \{(w_1,m_1,w_2,m_2,\dots,w_D,m_D): 1 \leq w_\ell+m_\ell \leq K_\ell  \text{ for each } \ell \in \{ 1,2,\dots,D\} \} \subseteq \Z_+^{2D}.
\end{equation*}
For $\ell \in \{ 1, \dots, D \}$, let $\basisvec{w}{\ell} = e_{2\ell-1}$ and $\basisvec{m}{\ell} = e_{2\ell}$, which are the standard unit basis vectors in $\R^{2D}$ for the coordinates designated for the wild-type in deme $\ell$ and the mutant in deme $\ell$, respectively. Then, the infinitesimal generator $Q(\eps)$ of $X^\eps$ is given by
\begin{equation}
\label{eqn:Qgen}
    Q_{x,x+v} (\eps) = 
    \begin{cases}
        \frac{\rr{w}{\ell}}{K_{\ell}} \left(K_{\ell} - (w_{\ell}+m_{\ell}) \right)  w_{\ell}, & \text{for } \ell \in \{ 1, \dots, D \} \text{ and } v = \basisvec{w}{\ell}, \\
        \frac{\rr{m}{\ell}}{K_{\ell}} \left(K_{\ell} - (w_{\ell}+m_{\ell}) \right)  m_{\ell}, & \text{for } \ell \in \{ 1, \dots, D \} \text{ and } v = \basisvec{m}{\ell}, \\
        \dd{w}{\ell} w_\ell & \text{for } \ell \in \{ 1, \dots, D \} \text{ and } v = - \basisvec{w}{\ell}, \\
        \dd{m}{\ell} m_\ell & \text{for } \ell \in \{ 1, \dots, D \} \text{ and } v = - \basisvec{m}{\ell}, \\
        \eps \mi{w}{\ell}{k} w_{\ell}, & \text{for } \ell \neq k \in \{ 1, \dots, D \} \text{ and } v=\basisvec{w}{k} - \basisvec{w}{\ell}, \\
        \eps \mi{m}{\ell}{k} m_{\ell}, & \text{for } \ell \neq k \in \{ 1, \dots, D \} \text{ and } v=\basisvec{m}{k} - \basisvec{m}{\ell}, \\
        0, & \text{for other } v \neq 0, \\
    \end{cases}
\end{equation}
for $x=(w_1,m_1,w_2,m_2,\dots,w_D,m_D) \in \X$ and $x+v \in \X$,
and $Q_{x,x} (\eps)$ is defined so that the row sums for $Q (\eps)$ are zeros.

\section{Motivating Examples}
\label{sec:MotivatingExamples}
Here we are concerned with the fate of a mutant introduced into a population initially composed of wild-type individuals. Of particular interest are the probability that the mutant ultimately replaces the wild-type and the timescale over which this fixation occurs. In section \ref{sec:1D}, we show how one may find probability of mutant fixation 
for system \eqref{eqn:DemeReactionsMain}--\eqref{eqn:ReactionRates} in a well mixed population (i.e., when $D=1$). As previously reported \cite{bib:2021MLB, bib:2025KW}, fragmentation can increase or decrease the probability of mutant fixation. In this paper, we develop a rigorous model reduction method (see section \ref{sec:SP}) that allows us to find the probability of mutant fixation in fragmented populations, as well as  the expected time until either the mutant or the wild-type fixates and the conditional expected time for fixation given that the mutant fixates. In section \ref{sec:2D}, we use the example of two demes of different sizes to present the type of results we can obtain from our theorems. Our theorems are very general in that they apply to any number of demes with different carrying capacities, different division and death rates, and different migration rates.  Our theorems have potential applications to other stochastic systems with time-scale separation as well.

\subsection{Probability of mutant fixation in a well-mixed populations}
\label{sec:1D}

Suppose there is a well-mixed population of wild-type species that has come to its stationary distribution. Then, a single mutant is introduced to this population through an immigration event from an external source.
The immigration event can only take place if the population is below carrying capacity. Then, the wild-type and the mutant evolve according to \eqref{eqn:DemeReactionsMain}--\eqref{eqn:ReactionRates} with $D=1$, i.e.,
\begin{equation}
\label{eqn:CRN-WM}
\begin{aligned}
    W \ce{->[$\bar{r}_w (1 - (w+m)/K)_+$]} 2 W, \qquad W \ce{->[$\bar{d}_w \one_{\{ w+m > 1 \}}$]} \emptyset, \\
    M \ce{->[$\bar{r}_m (1 - (w+m)/K)_+$]} 2 M, \qquad M \ce{->[$\bar{d}_m \one_{\{ w+m > 1 \}}$]} \emptyset.
\end{aligned}
\end{equation}
where $\bar{r}_w$, $\bar{r}_m$, $\bar{d}_w$, $\bar{d}_m$ are positive constants, $w$ and $m$ are the number of wild-type individuals $W$ and mutant individuals $M$, respectively, and $K$ is the carrying capacity.
Let $X = \left\{ \left(X_{W} (t), X_{M} (t) \right) : t \geq 0 \right\}$ track the number of wild-type  and mutant individuals in the system \eqref{eqn:CRN-WM}. 
Here, since there are no migration events, this $X$ does not depend on $\eps$, and so we omit the $\eps$ in the notation in \eqref{eqn:MC}.
Then, $X$ is a continuous time Markov chain with state space $\X = \{(w,m) \in \Z_+^2: 1 \leq w+m \leq K \}$. 
It has two recurrent classes
\begin{equation*}
    \CC_w = \{(w,0) \in \X: 1 \leq w \leq K \} \qquad \text{ and } \qquad
    \CC_m = \{(0,m) \in \X: 1 \leq m \leq K \},   
\end{equation*}
and we shall use $\QSS_w$ and $\QSS_m$ to denote the stationary distributions for $X$ supported on $\CC_w$ and $\CC_m$, respectively. For $1 \leq n \leq K$, we use $\QSS_w (n)$ to denote the probability that $X$ is at the state $(n,0)$ under the stationary distribution $\QSS_w$, which depends on $\bar{r}_w,\bar{d}_w,K$. 
Similarly, for $1 \leq n \leq K$, we use $\QSS_m (n)$ to denote the probability that $X$ is at the state $(0,n)$ under the stationary distribution $\QSS_m$, which depends on $\bar{r}_m,\bar{d}_m,K$.
For $x = (n,k) \in \X$, we use $\FP_m (n,k)$ to denote the probability that $X$ reaches $\CC_m$ starting from $n$ wild-type individuals and $k$ mutant individuals, which depends on $\bar{r}_w,\bar{d}_w,\bar{r}_m,\bar{d}_m,K$. These probabilities $\FP_m (\cdot,\cdot)$ of mutant fixation given initial states can be calculated analytically by setting up a system of equations using first step analysis. 

When an immigration event introduces one mutant individual from an external source to the population of only the wild-type species that has come to its stationary distribution $\QSS_w$, the probability that the mutant takes over the whole population is given by
\begin{equation*}
    \rho_m = \sum_{n=1}^{K-1} \QSS_w (n) \FP_m (n,1).
\end{equation*}

\subsection{Wild-type and mutant dynamics for two demes}
\label{sec:2D}

As an illustrating example for analyzing models for fragmented populations, we consider the evolutionary model \eqref{eqn:DemeReactionsMain}--\eqref{eqn:ReactionRates} with $D=2$ and for $\ell \neq k \in \{ 1,2 \}$,
\begin{equation}
\label{eqn:2DrateConst}
    \rr{w}{\ell} = \bar{r}_w, \quad \rr{m}{\ell} = \bar{r}_m, \quad \dd{w}{\ell} = \bar{d}_w, \quad \dd{m}{\ell} = \bar{d}_m, \quad \mi{w}{\ell}{k} = \bar{\mu}_m, \quad \mi{m}{\ell}{k} = \bar{\mu}_m,
\end{equation}
where $\bar{r}_w,\bar{r}_m,\bar{d}_w,\bar{d}_m,\bar{\mu}_w,\bar{\mu}_m$ are positive constants that do not depend on deme indices. Here, the carrying capacities $K_1$ and $K_2$ for the two demes may or may not be the same.
The family $\{ X^\eps: \eps \geq 0 \}$ of SCRNs for this model has a common finite state space
\begin{equation*}
    \X = \{(w_1,m_1,w_2,m_2) \in \Z_+^4: 1 \leq w_1+m_1 \leq K_1, 1 \leq w_2+m_2 \leq K_2 \}.
\end{equation*}
The infinitesimal generator $Q(\eps)$ of $X^\eps$ is given by \eqref{eqn:Qgen} with \eqref{eqn:2DrateConst} and $D=2$.

Our method is based on using a reduced Markov chain, $\hat{X}_\mathcal{R}$, (see Figure \ref{fig:XhatI}), which has four states $\hat{\X} = \{ ww,wm,mw,mm \}$ for this model. The state $wm$ in $\hat{\X}$, for example, is a condensation of the collection of states in $\X$ for which there are only wild-type individuals in the first deme and only mutant individuals in the second deme, i.e., the collection $\tilde{\CC}_{wm} = \{(w_1,0,0,m_2) \in \X: 1 \leq w_1 \leq K_1, 1 \leq m_2 \leq K_2\}$.
The infinitesimal generator $Q_\mathcal{R}$ for $\hat{X}_\mathcal{R}$ is given by
\begin{equation}
\label{eqn:QR2D}
\begin{aligned}
    (Q_\mathcal{R})_{mw,ww} = \bar{\mu}_w \sizenumeric{w}{2} \fpnumeric{w}{1}, \qquad (Q_\mathcal{R})_{mw,mm} = \bar{\mu}_m \sizenumeric{m}{1} \fpnumeric{m}{2}, \\
    (Q_\mathcal{R})_{wm,ww} = \bar{\mu}_w \sizenumeric{w}{1} \fpnumeric{w}{2}, \qquad (Q_\mathcal{R})_{wm,mm} = \bar{\mu}_m \sizenumeric{m}{2} \fpnumeric{m}{1},
\end{aligned}  
\end{equation}
where for $\ell \in \{ 1,2 \}$, $\size{m}{\ell}$ (resp. $\size{w}{\ell}$) is approximately the expected size of a well-mixed population of the mutant (resp. wild-type) species with carrying capacity $K_\ell$ under its stationary distribution, and $\fp{m}{\ell}$ (resp. $\fp{w}{\ell}$) is the probability that the mutant (resp. wild-type) eventually fixates starting from the stationary distribution for a well-mixed wild-type (resp. mutant) population with carrying capacity $K_\ell$ as one mutant (resp. wild-type) individual is introduced from outside deme $\ell$.
The precise definitions for these quantities are given in \eqref{eqn:srho1}--\eqref{eqn:srho2}.

\begin{figure}[H]
    \centering
    \includegraphics[width=\linewidth]{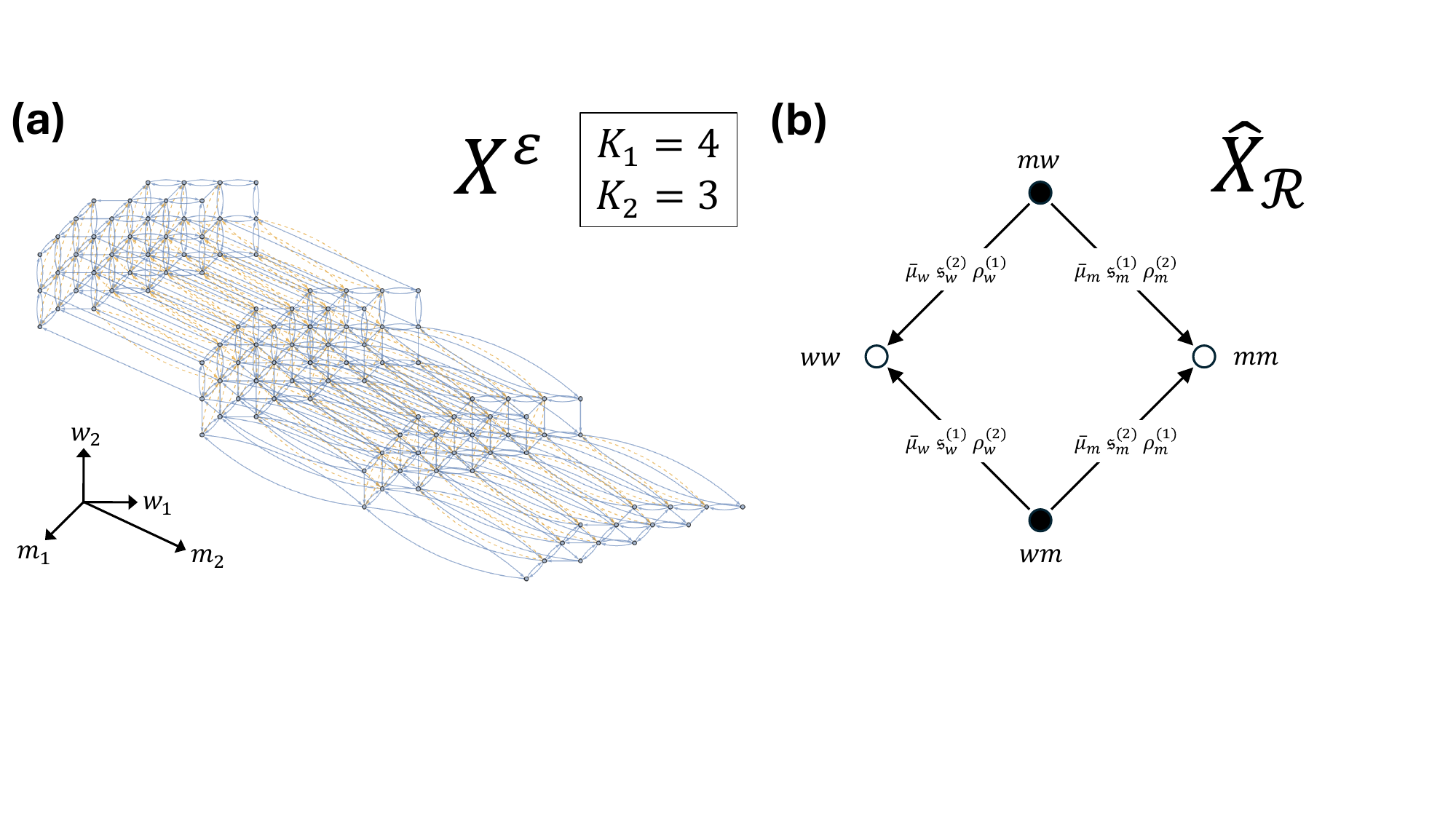}
    \caption{\textbf{Markov chain graphs for wild-type and mutant dynamics with two demes.} (a) Markov chain graph for $X^\eps$ whose infinitesimal generator is given in \eqref{eqn:Qgen} with $D=2$, $K_1=4$ and $K_2=3$. We use blue arrows to represent transitions with $O(1)$ rates, and yellow dotted arrows to represent transitions with $O(\eps)$ rates. (b) Markov chain graph for the reduced model $\hat{X}_\mathcal{R}$ whose infinitesimal generator is given in \eqref{eqn:QR2D}. 
    }
    \label{fig:XhatI}
\end{figure}

Here we present the main results for this evolutionary model with two demes, while the derivation is deferred to section \ref{sec:general} (and SI - section \ref{sec:2Dresults}).
Assuming the wild-type population has come to equilibrium, we then consider an immigration event where a mutant is introduced from an external source and the probability it falls in a particular deme is proportional to the carrying capacity of that deme.
Applying Theorem \ref{thm:AP}, we obtain that the probability that both demes are fixated by the mutant is
\begin{equation}
\label{eqn:FP2D}
    \frac{K_1}{K_1+K_2}  \frac{\bar{\mu}_m \sizenumeric{m}{1} \fpnumeric{m}{1} \fpnumeric{m}{2}}{\bar{\mu}_w \sizenumeric{w}{2} \fpnumeric{w}{1} + \bar{\mu}_m \sizenumeric{m}{1} \fpnumeric{m}{2}} + \frac{K_2}{K_1+K_2}  \frac{\bar{\mu}_m \sizenumeric{m}{2} \fpnumeric{m}{1} \fpnumeric{m}{2}}{\bar{\mu}_w \sizenumeric{w}{1} \fpnumeric{w}{2} + \bar{\mu}_m \sizenumeric{m}{2} \fpnumeric{m}{1}} + O(\eps).
\end{equation}
Applying Theorem \ref{thm:MFPT}, we obtain that the expected time until either the mutant or the wild-type fixates is
\begin{equation}
\label{eqn:MFPT2D}
    \left(\frac{K_1}{K_1+K_2}  \frac{\fpnumeric{m}{1}}{\bar{\mu}_w \sizenumeric{w}{2} \fpnumeric{w}{1} + \bar{\mu}_m \sizenumeric{m}{1} \fpnumeric{m}{2}} + \frac{K_2}{K_1+K_2}  \frac{\fpnumeric{m}{2}}{\bar{\mu}_w \sizenumeric{w}{1} \fpnumeric{w}{2} + \bar{\mu}_m \sizenumeric{m}{2} \fpnumeric{m}{1}} \right) \eps^{-1} + O(1).
\end{equation}
Applying Theorem \ref{thm:CMFPT}, we obtain that conditional expected time for fixation given that the mutant fixates is 
\begin{equation}
\label{eqn:CMFPT2D}
    \left( \frac{K_1}{K_1+K_2} \frac{P^{(1)}_{non-full}}{\bar{\mu}_w \sizenumeric{w}{2} \fpnumeric{w}{1} + \bar{\mu}_m \sizenumeric{m}{1} \fpnumeric{m}{2}}  + \frac{K_2}{K_1+K_2} \frac{P^{(2)}_{non-full}}{\bar{\mu}_w \sizenumeric{w}{1} \fpnumeric{w}{2} + \bar{\mu}_m \sizenumeric{m}{2} \fpnumeric{m}{1}} \right) \eps^{-1} + O(1).
\end{equation}
where $P^{(\ell)}_{non-full}$ is the probability that deme $\ell$ is not at the carrying capacity under the stationary distribution of the wild-type.

\section{Singular Perturbation Analysis}
\label{sec:SP}

For some $\eps_0 > 0$, we consider a family $\{X^{\eps}:\: 0 \leq \eps < \eps_0\}$ of continuous time Markov chains with a common finite state space $\X \subseteq \Z^d_+$ with $|\X| > 1$ and whose infinitesimal transition rates depend on the parameter $\eps$.  Suppose that for each $0 \leq \eps < \eps_0$, $X^{\eps}$ has an infinitesimal generator $Q(\eps)$ where $\eps \mapsto Q(\eps)$ is a \textbf{linear perturbation} of $Q(0)$, i.e.,
\begin{equation}
\label{eqn:QmatrixLP}
Q(\eps) = Q^{(0)} + \eps Q^{(1)}.
\end{equation}
When $\eps>0$, the state space $\X$ for $X^{\eps}$ has a common decomposition\footnote{For $x \neq y \in \X$, we have $Q_{xy} (\eps) = Q^{(0)}_{xy} + \eps Q^{(1)}_{xy} \geq 0$ for $0 \leq \eps<\eps_0$, and so if $Q_{xy} (\eps)$ is positive for some $\eps>0$, then it will be positive for all $0<\eps<\eps_0$ (although $Q_{xy} (\eps_0)$ may be zero). Therefore, under the linear perturbation assumption, the communicating class structure for $X^\eps$, when $\eps > 0$, does not depend on $\eps$.}: 
\begin{equation}
\label{eqn:XepsDecomp}
\X = \left(\bigcup_{i \in \I} \CC_i \right) \cup \T
\end{equation} 
where the set $\I$ is non-empty and finite, the sets $\CC_i: i \in \I$ and $\T$ are mutually disjoint, each $\CC_{i}, i \in \I$ is a non-empty finite set of states in $\X$, which is a (positive) recurrent communicating class for $X^\eps$, and $\T$ is a finite set of transient states for $X^{\eps}$. 
We assume that $\T$ is non-empty.
In this paper, we are interested in studying the first time that $X^\eps$ enters $\bigcup_{i \in \I} \CC_i$ and the probability for $X^\eps$ to be absorbed in each $\CC_i$, starting from a state in $\T$.

When $\eps = 0$, in the absence of $O(\eps)$ transitions, there may be additional (positive) recurrent communicating classes emerging from $\CC_i, i \in \I$ and from $\T$, and states in $\CC_i, i \in \I$ could become transient. Since we are interested in studying the behavior of $X^\eps$ until the time it enters $\bigcup_{i \in \I} \CC_i$, we are not going to further decompose $\CC_i$ according to the communicating class structure for $X^0$ for $i \in \I$. We further decompose the set $\T$ in \eqref{eqn:XepsDecomp}:
\begin{equation*}
    \T = \left(\bigcup_{j \in \J} \tilde{\CC}_j \right) \cup \tilde{\T}.
\end{equation*}
where $\J$ is finite (possibly empty), the sets $\tilde{\CC}_j,j \in \J$ and $\tilde{\T}$ are mutually disjoint, each $\tilde{\CC}_j: j \in \J$ is a non-empty finite set of states in $\X$, which is a (positive) recurrent communicating class for $X^0$, and $\tilde{\T}$ is a finite (possibly empty) set of transient states for $X^0$.

We may and do relabel the states in $\X$ so that $Q^{(0)}$ and $Q^{(1)}$ in \eqref{eqn:QmatrixLP} have the following form:
\begin{equation}
\label{eqn:Q0}
Q^{(0)} = 
\begin{tikzpicture}[baseline={0ex},mymatrixenv]
\matrix [mymatrix,inner sep=5pt] (m)  
{
E_{11}^{(0)} & 0 & 0 & 0 & \cdots & 0 & 0 \\
0 & \ddots & 0 & \vdots & \ddots & \vdots & \vdots \\
0 & 0 & E_{\numI\numI}^{(0)} & 0 & \cdots & 0 & 0 \\
0 & \cdots & 0 & \tilde{E}_{11}^{(0)} & 0 & 0 & 0 \\
\vdots & \ddots & \vdots & 0 & \ddots & 0 & \vdots \\
0 & \cdots & 0 & 0 & 0 & \tilde{E}_{\numJ\numJ}^{(0)} & 0 \\
R_1^{(0)} & \cdots & R_\numI^{(0)} & \tilde{R}_1^{(0)} & \cdots & \tilde{R}_{\numJ}^{(0)} & \tilde{T}^{(0)} \\
};
\draw (-0.5,-2) -- (-0.5,2);
\draw[densely dashed] (2.5,-2) -- (2.5,2);
\draw (-3.3,0.35) -- (3.3,0.35);
\draw[densely dashed] (-3.3,-1.55) -- (3.3,-1.55);
\mymatrixbracetop{1}{1}{$\CC_1$}
\mymatrixbracetopempty{2}{2}{$\cdots$}
\mymatrixbracetop{3}{3}{$\CC_\numI$}
\mymatrixbracetop{4}{4}{$\tilde{\CC}_{1}$}
\mymatrixbracetopempty{5}{5}{$\cdots$}
\mymatrixbracetop{6}{6}{$\tilde{\CC}_{\numJ}$}
\mymatrixbracetop{7}{7}{$\tilde{\T}$}
\mymatrixbraceleft{1}{1}{$\CC_1$}
\mymatrixbraceleftempty{2}{2}{$\vdots$}
\mymatrixbraceleft{3}{3}{$\CC_\numI$}
\mymatrixbraceleft{4}{4}{$\tilde{\CC}_1$}
\mymatrixbraceleftempty{5}{5}{$\vdots$}
\mymatrixbraceleft{6}{6}{$\tilde{\CC}_{\numJ}$}
\mymatrixbraceleft{7}{7}{$\tilde{\T}$}
\end{tikzpicture}
=
\begin{tikzpicture}[baseline={0ex},mymatrixenv]
\matrix [mymatrix,inner sep=5pt] (m)  
{
E_{11}^{(0)} & 0 & 0 & 0\\
0 & \ddots & 0 & \vdots  \\
0 & 0 & E_{\numI\numI}^{(0)} & 0 \\
R_{0,1} & \dots & R_{0,\numI} & T_0\\
};
\draw (1,-1.2) -- (1,1.2);
\draw (-1.9,-0.7) -- (1.9,-0.7);
\mymatrixbracetop{1}{1}{$\CC_1$}
\mymatrixbracetopempty{2}{2}{$\cdots$}
\mymatrixbracetop{3}{3}{$\CC_\numI$}
\mymatrixbracetop{4}{4}{$\T$}
\mymatrixbraceleft{1}{1}{$\CC_1$}
\mymatrixbraceleftempty{2}{2}{$\vdots$}
\mymatrixbraceleft{3}{3}{$\CC_\numI$}
\mymatrixbraceleft{4}{4}{$\T$}
\end{tikzpicture}
\end{equation}
and
\begin{equation*}
Q^{(1)} = 
\begin{tikzpicture}[baseline={0ex},mymatrixenv]
\matrix [mymatrix,inner sep=5pt] (m)  
{
E_{11}^{(1)} & 0 & 0 & 0 & \cdots & 0 & 0 \\
0 & \ddots & 0 & \vdots & \ddots & \vdots & \vdots \\
0 & 0 & E_{\numI\numI}^{(1)} & 0 & \cdots & 0 & 0 \\
R_{11}^{(1)} & \cdots & R_{1 \numI}^{(1)} & \tilde{E}_{11}^{(1)} & \cdots & \tilde{E}_{1\numJ}^{(1)} & \tilde{U}_{1}^{(1)} \\
\vdots & \ddots & \vdots & \vdots & \ddots & \vdots & \vdots \\
R_{\numJ 1}^{(1)} & \cdots & R_{\numJ \numI}^{(1)} & \tilde{E}_{\numJ 1}^{(1)} & \cdots & \tilde{E}_{\numJ\numJ}^{(1)} & \tilde{U}_{\numJ}^{(1)} \\
R_1^{(1)} & \cdots & R_\numI^{(1)} & \tilde{R}_1^{(1)} & \cdots & \tilde{R}_{\numJ}^{(1)} & \tilde{T}^{(1)} \\
};
\draw (-0.5,-2) -- (-0.5,2);
\draw[densely dashed] (2.5,-2) -- (2.5,2);
\draw (-3.3,0.35) -- (3.3,0.35);
\draw[densely dashed] (-3.3,-1.55) -- (3.3,-1.55);
\mymatrixbracetop{1}{1}{$\CC_1$}
\mymatrixbracetopempty{2}{2}{$\cdots$}
\mymatrixbracetop{3}{3}{$\CC_\numI$}
\mymatrixbracetop{4}{4}{$\tilde{\CC}_1$}
\mymatrixbracetopempty{5}{5}{$\cdots$}
\mymatrixbracetop{6}{6}{$\tilde{\CC}_{\numJ}$}
\mymatrixbracetop{7}{7}{$\tilde{\T}$}
\mymatrixbraceleft{1}{1}{$\CC_1$}
\mymatrixbraceleftempty{2}{2}{$\vdots$}
\mymatrixbraceleft{3}{3}{$\CC_\numI$}
\mymatrixbraceleft{4}{4}{$\tilde{\CC}_1$}
\mymatrixbraceleftempty{5}{5}{$\vdots$}
\mymatrixbraceleft{6}{6}{$\tilde{\CC}_{\numJ}$}
\mymatrixbraceleft{7}{7}{$\tilde{\T}$}
\end{tikzpicture}
=
\begin{tikzpicture}[baseline={0ex},mymatrixenv]
\matrix [mymatrix,inner sep=5pt] (m)  
{
E_{11}^{(1)} & 0 & 0 & 0\\
0 & \ddots & 0 & \vdots  \\
0 & 0 & E_{\numI\numI}^{(1)} & 0 \\
R_{1,1} & \dots & R_{1,\numI} & T_1\\
};
\draw (1,-1.2) -- (1,1.2);
\draw (-1.9,-0.7) -- (1.9,-0.7);
\mymatrixbracetop{1}{1}{$\CC_1$}
\mymatrixbracetopempty{2}{2}{$\cdots$}
\mymatrixbracetop{3}{3}{$\CC_\numI$}
\mymatrixbracetop{4}{4}{$\T$}
\mymatrixbraceleft{1}{1}{$\CC_1$}
\mymatrixbraceleftempty{2}{2}{$\vdots$}
\mymatrixbraceleft{3}{3}{$\CC_\numI$}
\mymatrixbraceleft{4}{4}{$\T$}
\end{tikzpicture}
\end{equation*}
where $|\I|=\numI$ and $|\J|=\numJ$, and in the above block matrices (and in all block matrices in this paper), we use $\CC_1, ..., \CC_p$ to enumerate $\{\CC_i: i\in \I \}$ and use $\tilde{\CC}_1, ..., \tilde{\CC}_{\tilde{p}}$ to enumerate $\{\tilde{\CC}_j: j \in \J \}$.

Consider $0 < \eps < \eps_0$. Let $\tau^\eps = \inf \{ t \geq 0 : X^\eps (t) \in \X \setminus \T \}$. For $i \in \I$, denote the probability for $X^\eps$ to be absorbed in $\CC_i$, starting from $y \in \T$, by
\begin{equation}
\label{eqn:Lyi}
    \mathcal{L}_{yi} (\eps) = \PP_y^\eps \left[X^\eps (\tau^\eps) \in \CC_i \right].
\end{equation}
We denote the expected value of the first time that $X^\eps$ exits from $\T$,
starting from $y \in \T$, by
\begin{equation}
\label{eqn:hy}
    h_y (\eps) = \E_y^\eps [\tau^\eps],
\end{equation}
and for $i \in \I$, denote the conditional expected value of the first time that $X^\eps$ exits from $\T$ given that $X^\eps$ is absorbed in $\CC_i$, starting from $y \in \T$, by
\begin{equation}
\label{eqn:gy}
    g_{yi} (\eps) = \E_y^\eps \left[ \tau^\eps | X^\eps (\tau^\eps) \in \CC_i \right].
\end{equation}
In Theorems \ref{thm:AP}--\ref{thm:CMFPT},
we characterize the series expansions of the quantities in \eqref{eqn:Lyi}--\eqref{eqn:gy} using a reduced Markov chain $\hat{X}_\mathcal{R}$ that we now introduce.

\subsection{Reduced Markov chain $\hat{X}_\mathcal{R}$}
\label{sec:reducedMC}

In this section, we will consider a continuous time Markov chain $\hat{X}_\mathcal{R}$ with state space $\mathcal{R} = \I \cup \J$, which helps explain the asymptotics of $\mathcal{L}_{yi} (\eps)$, $h_y (\eps)$ and $g_{yi} (\eps)$ in \eqref{eqn:Lyi}--\eqref{eqn:gy}. The infinitesimal generator $Q_{\mathcal{R}}$ for $\hat{X}_\mathcal{R}$ will be given in \eqref{eqn:QR}.

Let
\begin{equation}
\label{eqn:MW}
\Mo_0 = 
\begin{tikzpicture}[baseline={-0.5ex},mymatrixenv]
\matrix [mymatrix,inner sep=5pt] (m)  
{
\tilde{\Pn}_{1} & 0 & 0 & 0 \\
0 & \ddots & 0 & \vdots \\
0 & 0 & \tilde{\Pn}_{\numJ} & 0 \\
};
\draw[densely dashed] (1,-0.9) -- (1,0.9);
\mymatrixbracetop{1}{1}{$\tilde{\CC}_1$}
\mymatrixbracetopempty{2}{2}{$\cdots$}
\mymatrixbracetop{3}{3}{$\tilde{\CC}_{\numJ}$}
\mymatrixbracetop{4}{4}{$\tilde{\T}$}
\mymatrixbraceleft{1}{3}{$\J$}
\end{tikzpicture}
\quad \text{ and } \quad 
\Wo_0 = 
\begin{tikzpicture}[baseline={-0.5ex},mymatrixenv]
\matrix [mymatrix,inner sep=5pt] (m)  
{
\tilde{\one}_{1} & 0 & 0 \\
0 & \ddots & 0 \\
0 & 0 & \tilde{\one}_{\numJ} \\
\tilde{L}_{1} & \cdots & \tilde{L}_{\numJ} \\
};
\draw[densely dashed] (-1.4,-0.6) -- (1.4,-0.6);
\mymatrixbracetop{1}{3}{$\J$}
\mymatrixbraceleft{1}{1}{$\tilde{\CC}_1$}
\mymatrixbraceleftempty{2}{2}{$\vdots$}
\mymatrixbraceleft{3}{3}{$\tilde{\CC}_{\numJ}$}
\mymatrixbraceleft{4}{4}{$\tilde{\T}$}
\end{tikzpicture},
\end{equation}
where for each $j \in \J$, $\tilde{\Pn}_j$ is the (unique) stationary distribution for the recurrent class $\tilde{\CC}_j$ in $X^0$, $\tilde{\one}_j$ is a column vector of all ones with size $\abs{\tilde{\CC}_j}$, and $\tilde{L}_j = \left(\PP_y^0 [X^0(\tau^0) \in \tilde{\CC}_j] \right)_{y \in \tilde{\T}}$.
Note that rows of $\Mo_0$ forms a basis for the null space of the transpose of $T_0$, and columns of $\Wo_0$ forms a basis for the null space of $T_0$ (see Lemma \ref{lem:MWbases}). 

For each $i \in \I$, let
\begin{equation}
\label{eqn:Ji}
    J_i = \Mo_0 \left( R_{1,i} \one_i + T_1 \mathcal{L}_{0,i} \right) \in \R^{\abs{\J} \times 1},
\end{equation}
where $\one_i$ is a column vector of all ones with size $\abs{\CC_i}$, $L_i = \left(\PP_y^0 [X^0(\tau^0) \in \CC_i] \right)_{y \in \tilde{\T}}$ and
\begin{equation*}
\mathcal{L}_{0,i}  
= 
\begin{tikzpicture}[baseline={-0.5ex},mymatrixenv]
\matrix [mymatrix,inner sep=5pt] (m)  
{
0 \\
\vdots \\
0 \\
L_i \\
};
\draw[densely dashed] (-0.4,-0.6) -- (0.4,-0.6);
\mymatrixbraceleft{1}{1}{$\tilde{\CC}_1$}
\mymatrixbraceleftempty{2}{2}{$\vdots$}
\mymatrixbraceleft{3}{3}{$\tilde{\CC}_{\numI}$}
\mymatrixbraceleft{4}{4}{$\tilde{\T}$}
\end{tikzpicture} \in \R^{\abs{\T}}.
\end{equation*}
One shall see that for each $i \in \I$ and $j \in \J$,  $(J_i)_j = \tilde{\Pn}_j R^{(1)}_{ji} \one_i + \tilde{\Pn}_j U^{(1)}_j L_i$. Define
\begin{equation}
\label{eqn:QR}
Q_{\mathcal{R}} = 
\begin{tikzpicture}[baseline={-0.5ex},mymatrixenv]
\matrix [mymatrix,inner sep=5pt] (m)  
{
0 & \cdots & 0 & 0 \\
\vdots & \ddots & \vdots & \vdots \\
0 & \cdots & 0 & 0 \\
J_1 & \cdots & J_{\numI} & \Mo_0 T_1 \Wo_0 \\
};
\draw (0.7,-1.1) -- (0.7,1.1);
\draw (-2.1,-0.6) -- (2.1,-0.6);
\mymatrixbracetop{1}{3}{$\I$}
\mymatrixbracetop{4}{4}{$\J$}
\mymatrixbraceleft{1}{3}{$\I$}
\mymatrixbraceleft{4}{4}{$\J$}
\end{tikzpicture}.
\end{equation}
Then, $Q_{\mathcal{R}}$ is an infinitesimal generator (see Lemma \ref{lem:QR}). Let $\hat{X}_\mathcal{R}$ be a continuous time Markov chain with state space $\mathcal{R} = \I \cup \J$ and infinitesimal generator $Q_{\mathcal{R}}$. The states in $\I$ are absorbing for $\hat{X}_\mathcal{R}$, 
and we will assume the following throughout the paper.

\begin{assumption}
\label{ass:QRstructure}
    For $\hat{X}_\mathcal{R}$, each state in $\J$ is transient.
\end{assumption}

\begin{remark}
    One may check Assumption \ref{ass:QRstructure} by considering an auxiliary chain living in $\X$ whose transitions consist of the transitions of $X^0$ augmented by the transitions of $X^1$ that emanate from $\X \setminus \T$. For $j \in \J$, the state $j$ is transient for $\hat{X}_\mathcal{R}$ if and only if any state in $\CC_j$ is transient for this auxiliary chain. Note that states in $\CC_j$ communicate in the auxiliary chain.
\end{remark}

Let $\hat{\tau} = \inf \{ t \geq 0 : \hat{X}_\mathcal{R} (t) \in \I \}$. 
For $i \in \I$, denote the probability for $\hat{X}_\mathcal{R}$ to be absorbed in state $i$, starting from $j \in \J$, by
\begin{equation}
\label{eqn:Lhatji}
    \hat{\mathcal{L}}_{ji} = \hat{\PP}_j \left[ \hat{X}_\mathcal{R} (\hat{\tau}) = i \right].
\end{equation}
We denote the expected value of the first time that $\hat{X}_\mathcal{R}$ exits from $\J$, starting from $j \in \J$, by
\begin{equation}
\label{eqn:hhatj}
    \hat{h}_j = \hat{\E}_j [\hat{\tau}],
\end{equation}
and for $i \in \I$, denote the conditional expected value of the first time that $\hat{X}_\mathcal{R}$ exits from $\J$ given that $\hat{X}_\mathcal{R}$ is absorbed in state $i$, starting from $j \in \J$, by
\begin{equation}
\label{eqn:ghatji}
    \hat{g}_{ji} = \hat{\E}_j \left[ \hat{\tau} \Big| \hat{X}_\mathcal{R} (\hat{\tau}) = i \right].
\end{equation}

\subsection{Key theorems on the asymptotic expansion}
\label{sec:mainResult}
Here we introduce our theorems for stochastic singular perturbation analysis, which characterize the leading terms of $\mathcal{L}_{yi} (\eps)$, $h_y (\eps)$ and $g_{yi} (\eps)$ in \eqref{eqn:Lyi}--\eqref{eqn:gy}. We will then apply these theorems to study the mutant fixation for fragmented population in section \ref{sec:general}.

\begin{theorem}
\label{thm:AP}
    Suppose Assumption \ref{ass:QRstructure} holds. Consider $i \in \I$ and $0 < \eps < \eps_0$. 
    The vector $\mathcal{L}_{i} (\eps) = \left( \mathcal{L}_{yi} (\eps) \right)_{y \in \T}$ for the absorbing probabilities given in \eqref{eqn:Lyi} has a power series expansion.
    Moreover, for each $j \in \J$ and $y \in \tilde{\CC}_j$, 
    \begin{equation}
    \label{eqn:mainthmAPCj}
        \mathcal{L}_{yi} (\eps) = \hat{\mathcal{L}}_{ji} + O(\eps),
    \end{equation}
    and for each $y \in \tilde{\T}$,
    \begin{equation}
    \label{eqn:mainthmAP}
        \mathcal{L}_{yi} (\eps) = \left( \left( L_{i} \right)_y + \sum_{j \in \J} (\tilde{L}_j)_y \hat{\mathcal{L}}_{ji} \right) + O(\eps).
    \end{equation}
\end{theorem}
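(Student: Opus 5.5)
Fix $i \in \I$. By first step analysis, $\mathcal{L}_i(\eps)$ is the unique solution of
\[
(T_0 + \eps T_1)\,\mathcal{L}_i(\eps) = -\left(R_{0,i} + \eps R_{1,i}\right)\one_i .
\]
Here $T(\eps) := T_0 + \eps T_1$ is invertible for $0<\eps<\eps_0$, because $\T$ is transient for $X^\eps$. So $\mathcal{L}_i(\eps) = -T(\eps)^{-1}(R_{0,i}+\eps R_{1,i})\one_i$ is a rational function of $\eps$ that is bounded by $1$ entrywise. A rational function bounded near $0$ has no pole at $0$, so $\mathcal{L}_i(\eps)$ is analytic at $\eps=0$ and has a power series expansion $\mathcal{L}_i(\eps) = \sum_{n\ge 0}\eps^n \mathcal{L}_i^{(n)}$. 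Alternatively, one can invoke the Laurent expansion of $T(\eps)^{-1}$ (its pole order is finite) and use boundedness to kill the negative powers.

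The next step is to identify the leading coefficient $\mathcal{L}_i^{(0)}$ by matching powers of $\eps$.

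\emph{Order $\eps^0$.} We get $T_0 \mathcal{L}_i^{(0)} = -R_{0,i}\one_i$. In $T_0$ the blocks $\tilde{\CC}_j$ are closed for $X^0$, and $R_{0,i}$ vanishes on the rows indexed by $\tilde{\CC}_j$. Hence $\mathcal{L}_{0,i}$ is a particular solution: on $\tilde{\T}$ it is the $X^0$ absorption probability into $\CC_i$, and it is zero on each $\tilde{\CC}_j$. By Lemma \ref{lem:MWbases} the columns of $\Wo_0$ span the null space of $T_0$, so
\[
\mathcal{L}_i^{(0)} = \mathcal{L}_{0,i} + \Wo_0 c
\]
for some $c \in \R^{|\J|}$.

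\emph{Order $\eps^1$.} We get $T_0 \mathcal{L}_i^{(1)} + T_1 \mathcal{L}_i^{(0)} = -R_{1,i}\one_i$. Multiply on the left by $\Mo_0$. Since the rows of $\Mo_0$ annihilate $T_0$ (again Lemma \ref{lem:MWbases}), this gives
\[
\Mo_0 T_1 \Wo_0\, c = -\Mo_0\left(R_{1,i}\one_i + T_1\mathcal{L}_{0,i}\right) = -J_i .
\]

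The final step is to match this with the reduced chain. By \eqref{eqn:QR}, $\Mo_0 T_1 \Wo_0$ is exactly the $\J\times\J$ block of $Q_\mathcal{R}$. The first step equations for $\hat{\mathcal{L}}_{\cdot i}$ in $\hat{X}_\mathcal{R}$ read
\[
(\Mo_0 T_1 \Wo_0)\,\hat{\mathcal{L}}_{\cdot i} = -J_i .
\]
Under Assumption \ref{ass:QRstructure} every state of $\J$ is transient, so this $\J$-block is nonsingular (standard for sub-generators whose states are all transient). Therefore $c = \hat{\mathcal{L}}_{\cdot i}$. Reading off $\mathcal{L}_{0,i} + \Wo_0 c$ blockwise then gives the two claims:
\begin{itemize}
\item on $\tilde{\CC}_j$, the block of $\Wo_0$ is $\tilde{\one}_j$, which gives \eqref{eqn:mainthmAPCj};
\item on $\tilde{\T}$, the block is $L_i + \sum_j \tilde{L}_j \hat{\mathcal{L}}_{ji}$, which gives \eqref{eqn:mainthmAP}.
\end{itemize}
The remainder is $O(\eps)$ because the series converges.

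The main obstacle is the first step: establishing analyticity, i.e.\ ruling out negative powers in the expansion. The boundedness argument for rational functions handles it cleanly. A second point needing care is the nonsingularity of $\Mo_0 T_1 \Wo_0$, which is precisely what Assumption \ref{ass:QRstructure} supplies. This nonsingularity is what pins down $c$ uniquely; without it, the $\eps^1$ equation alone would not determine $c$.
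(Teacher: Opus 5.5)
Your proof is correct, but it takes a genuinely different route from the paper.

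\textbf{The paper's route.} The paper first proves Lemma \ref{lem:TepsInverse}. There it computes the Laurent expansion of the full inverse $T^{-1}(\eps)$ and uses invertibility of $\Mo_0 T_1 \Wo_0$ to show the pole has order one. It also derives explicit formulas for $B^{(-1)}$ and $B^{(0)}$, the latter through a generalized inverse of $T_0$ built from deviation matrices. It then multiplies out $-T^{-1}(\eps)R_i(\eps)\one_i$. It checks that $B^{(-1)}R_{0,i}=0$, which kills the $\eps^{-1}$ term, and that $B^{(0)}R_{0,i}\one_i=-(I-B^{(-1)}T_1)\mathcal{L}_{0,i}$. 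Together these give $\mathcal{L}_i(\eps)=\mathcal{L}_{0,i}-\Wo_0(\Mo_0T_1\Wo_0)^{-1}J_i+O(\eps)$.

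\textbf{Your route.} You never expand $T^{-1}(\eps)$ and work with the vector $\mathcal{L}_i(\eps)$ directly. The power series comes from its being a rational function of $\eps$ bounded by $1$. The constant term is then fixed by a solvability condition: you project the order-$\eps$ equation with $\Mo_0$. This uses only Lemma \ref{lem:MWbases} and the nonsingularity of $\Mo_0T_1\Wo_0$ (Lemma \ref{lem:MTWInvertible}).

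\textbf{What each buys.} Your argument is more elementary for this theorem. It needs neither the Laurent theory of Avrachenkov et al.\ nor the explicit $B^{(0)}$. It also separates the existence of a power series, which holds even without Assumption \ref{ass:QRstructure}, from the identification of its constant term, which is exactly where the assumption enters.

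The paper's route buys uniformity. The same expansion of $T^{-1}(\eps)$ is reused for Theorems \ref{thm:MFPT} and \ref{thm:CMFPT}. There $h(\eps)$ and $\phi_i(\eps)$ genuinely carry an $\eps^{-1}$ term, so boundedness is unavailable and some pole-order argument is needed. The paper obtains that once and for all from $T^{-1}(\eps)$. Your projection argument could be adapted to bound the pole order of those vectors, but that would be additional work.
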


Theorem \ref{thm:AP} says that when $\eps > 0$ is small, the probability for $X^\eps$ to be absorbed in $\CC_i$ starting from $y \in \tilde{\CC}_j$ is approximately the probability for the reduced Markov chain $\hat{X}_\mathcal{R}$ to be absorbed at $i$ starting from $j$, and the probability for $X^\eps$ to be absorbed in $\CC_i$ starting from $y \in \tilde{\T}$ is approximately the probability for $\hat{X}_\mathcal{R}$ to directly go to $\CC_i$ plus 
the total weighted probability for $\hat{X}_\mathcal{R}$ to be absorbed at $i$ starting from each state $j \in \J$, where the weights are given by the probability for $X^0$ to be absorbed in $\tilde{\CC}_j$ starting from $y \in \tilde{\T}$.

\begin{theorem}
\label{thm:MFPT}
    Suppose Assumption \ref{ass:QRstructure} holds. Consider $0 < \eps < \eps_0$. Suppose $\abs{\J} \geq 1$. 
    The vector $h (\eps) = \left( h_{y} (\eps) \right)_{y \in \T}$ for the expected absorption times given in \eqref{eqn:hy} has a Laurent series expansion with the order of the pole being one. 
    Moreover, for $j \in \J$ and $y \in \tilde{C}_j$,
    \begin{equation}
    \label{eqn:mainthmETCj}
        h_y (\eps) = \hat{h}_j \eps^{-1} + O(1),
    \end{equation}
    and for each $y \in \tilde{\T}$,
    \begin{equation}
    \label{eqn:mainthmET}
        h_y (\eps) = \left( \sum_{j \in \J} (\tilde{L}_j)_y \hat{h}_j \right) \eps^{-1} + O(1).
    \end{equation}
    On the other hand, if $\abs{\J} = 0$, then for each $y \in \T$, $h_y (\eps)$ has a power series expansion and
    \begin{equation*}
        h_y (\eps) = \E_y^0 \left[ \tau^0 \right] + O(\eps).
    \end{equation*}
\end{theorem}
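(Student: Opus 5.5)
Our approach is to write the mean exit time as the solution of a linear system and to expand that system in $\eps$. We work with the generator restricted to $\T$, namely $T(\eps) = T_0 + \eps T_1$. Since $\T$ is transient for $X^\eps$ when $\eps>0$, the matrix $T(\eps)$ is invertible, and $h(\eps)$ is the unique solution of
\[
T(\eps) h(\eps) = -\one .
\]
Because $T(\eps)$ is linear in $\eps$, $T(\eps)^{-1}$ is a rational function of $\eps$ and so has a Laurent expansion near $0$. What remains is to find the order of the pole and the leading coefficient.

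The case $\abs{\J}=0$ is simple. Here $T_0$ has trivial null space by Lemma \ref{lem:MWbases}, so $T_0$ is invertible. Hence $T(\eps)^{-1} = (I+\eps T_0^{-1}T_1)^{-1}T_0^{-1}$ is analytic at $0$, and
\[
h(0) = -T_0^{-1}\one = \E^0_y[\tau^0].
\]

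Now suppose $\abs{\J}\ge 1$. We substitute the ansatz $h(\eps) = \eps^{-1}h_{-1} + h_0 + \eps h_1 + \cdots$ into the equation and match powers of $\eps$:
\[
\text{order } \eps^{-1}:\ T_0 h_{-1}=0, \qquad \text{order } 1:\ T_0 h_0 + T_1 h_{-1} = -\one .
\]
The first equation and Lemma \ref{lem:MWbases} give $h_{-1} = \Wo_0 c$ for some $c\in\R^{\abs{\J}}$. To get an equation for $c$, we multiply the order-$1$ equation on the left by $\Mo_0$. Since the rows of $\Mo_0$ annihilate $T_0$, this yields
\[
\Mo_0 T_1 \Wo_0\, c = -\Mo_0\one = -\one_{\J},
\]
where the last equality holds because each $\tilde\Pn_j$ is a probability vector.

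Next we identify the matrix in this equation. $\Mo_0T_1\Wo_0$ is exactly the $\J\times\J$ block of $Q_\mathcal{R}$. Under Assumption \ref{ass:QRstructure} this block is invertible, because it is the transient-block generator of a chain in which $\J$ is transient. Standard first-step analysis for $\hat X_\mathcal{R}$ gives that $\hat h$ solves $(\Mo_0T_1\Wo_0)\hat h = -\one$. Therefore $c=\hat h$. Reading off $h_{-1}=\Wo_0\hat h$ row by row gives $\hat h_j$ on $\tilde\CC_j$ and $\sum_j (\tilde L_j)_y \hat h_j$ on $\tilde\T$, which are \eqref{eqn:mainthmETCj} and \eqref{eqn:mainthmET}.

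The main obstacle is justifying the ansatz: we must show the pole has order exactly one, not higher. We would do this with a Schur-complement (aggregation) argument.
\begin{itemize}
\item Write $\R^{\abs{\T}} = \ker T_0 \oplus \operatorname{range} T_0$. This decomposition is valid because $0$ is a semisimple eigenvalue of the generator $T_0$, which follows from the block structure with recurrent classes $\tilde\CC_j$.
\item Use the projections built from $\Wo_0$ and $\Mo_0$ (note $\Mo_0\Wo_0=I$) to write $T(\eps)$ in block form.
\item On the range part, $T_0$ restricted to its range is invertible.
\item On the kernel part, $T(\eps)$ compresses to $\eps(\Mo_0T_1\Wo_0 + O(\eps))$, which is invertible with inverse $O(\eps^{-1})$ by the previous paragraph.
\end{itemize}
The block inverse formula then shows $T(\eps)^{-1} = \eps^{-1}\Wo_0(\Mo_0T_1\Wo_0)^{-1}\Mo_0 + O(1)$, so the pole has order exactly one.

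This step is where Assumption \ref{ass:QRstructure} is essential. If we instead invoke the same reduction already used for Theorem \ref{thm:AP}, the expansion of $\mathcal{L}_i(\eps)$ there shows the structure of $T(\eps)^{-1}$ directly and gives the conclusion.
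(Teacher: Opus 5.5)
Your proof is correct. Its final step is exactly the paper's computation once $T^{-1}(\eps)=\eps^{-1}\Wo_0(\Mo_0T_1\Wo_0)^{-1}\Mo_0+O(1)$ is available: you use $\Mo_0\one_\T=\one_\J$, identify $c$ with $\hat h=-(\Mo_0T_1\Wo_0)^{-1}\one_\J$ via Lemma~\ref{lem:MTWInvertible}, and read off the rows of $\Wo_0$.

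You take a genuinely different route to that expansion, which is the content of Lemma~\ref{lem:TepsInverse}.
\begin{itemize}
\item The paper imports existence of a Laurent series $\sum_{k\ge -p}\eps^k B^{(k)}$ from Avrachenkov et al. It then argues by contradiction on the leading coefficient: $B^{(-p)}T_0=0$ forces $B^{(-p)}=C^{(-p)}\Mo_0$ with $C^{(-p)}\neq 0$. The $\eps^{-p+1}$ coefficient of $T^{-1}(\eps)T(\eps)$, tested against $\Wo_0$, equals $C^{(-p)}\Mo_0T_1\Wo_0\neq 0$, so $p=1$.
\item The paper also computes $B^{(0)}$ through a generalized inverse built from deviation matrices. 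That term is needed for the zeroth-order term in Theorem~\ref{thm:AP}, but not here.
\item You get existence of the expansion for free from rationality of $T(\eps)^{-1}$. You then obtain the pole order and the leading coefficient at once from a Schur complement in the splitting $\ker T_0\oplus\operatorname{range}T_0$.
\end{itemize}
Your version is self-contained and makes transparent that $\Mo_0T_1\Wo_0$ is the aggregated (Schur-complement) generator. The paper's version yields more terms of the expansion.

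Two steps should be made explicit.
\begin{itemize}
\item Semisimplicity of $0$ follows in one line from $\Mo_0\Wo_0=I$, $\Mo_0T_0=0$ and $T_0\Wo_0=0$: if $\Wo_0c=T_0v$, then $c=\Mo_0\Wo_0c=\Mo_0T_0v=0$. This is cleaner than appealing to ``the block structure.''
\item A dimension count gives $\operatorname{range}T_0=\ker\Mo_0$. So $\Wo_0\Mo_0$ is the projection onto $\ker T_0$ along $\operatorname{range}T_0$, and this is what makes the leading term come out in the form $\Wo_0(\cdot)\Mo_0$.
\end{itemize}

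Your closing fallback remark is not valid as stated. The expansion of $\mathcal{L}_i(\eps)=-T^{-1}(\eps)R_i(\eps)\one_i$ only describes $T^{-1}(\eps)$ applied to particular vectors, not $T^{-1}(\eps)\one_\T$. What can be reused is the expansion of $T^{-1}(\eps)$ itself. Your main argument does not need this fallback.
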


Theorem \ref{thm:MFPT} says that when $\abs{\J} \geq 1$ and $\eps > 0$ is small, the expected value of the first time that $X^\eps$ exits from the transient set $\T$ starting from $y \in \tilde{\CC}_j$ is approximately $1/\eps$ times the expected value of the first time for $\hat{X}_\mathcal{R}$ to exit the set $\J$ starting from $j$, and the expected value of the first time that $X^\eps$ exits from the transient set $\T$ starting from $y \in \tilde{\T}$ is approximately $1/\eps$ times the total weighted expected value of the first time for $\hat{X}_\mathcal{R}$ to exit the set $\J$ starting from each state $j \in \J$, where the weights are given by the probability for $X^0$ to be absorbed in $\tilde{\CC}_j$ starting from $y \in \tilde{\T}$.

\begin{theorem}
\label{thm:CMFPT}
Suppose Assumption \ref{ass:QRstructure} holds. Consider $i \in \I$ and $0 < \eps < \eps_0$. Suppose $\abs{\J} \geq 1$. 
Then, for $j \in \J$ and $y \in \tilde{C}_j$, if $j$ is reachable from $i$ in $\hat{X}_\mathcal{R}$ (i.e., $\hat{\mathcal{L}}_{ji} > 0$), then
\begin{equation}
\label{eqn:mainthmCETCj}
    g_{yi} (\eps) = \hat{g}_{ji} \eps^{-1} + O(1),
\end{equation}
and for each $y \in \tilde{\T}$ such that $(L_{i})_y + \sum_{j \in \J} (\tilde{L}_j)_y \hat{\mathcal{L}}_{ji} > 0$ (i.e., $\mathcal{L}_{yi} (\eps)$ is order one),
\begin{equation}
\label{eqn:mainthmCET}
    g_{yi} (\eps) = \left( \frac{\sum_{j \in \J} (\tilde{L}_j)_y (\hat{\phi}_{i})_j}{(L_{i})_y + \sum_{j \in \J} (\tilde{L}_j)_y ( \hat{\mathcal{L}}_{i} )_j } \right) \eps^{-1} + O(1),
\end{equation}
where $\hat{\mathcal{L}}_{i} = \left( \hat{\mathcal{L}}_{ji} \right)_{j \in \J}$ and $\hat{\phi}_i = - \left( \Mo_0 T_1 \Wo_0 \right)^{-1} \hat{\mathcal{L}}_{i}$.
On the other hand, if $\abs{\J} = 0$, then for each $y \in \T$ such that $(L_i)_y =\PP_y^0 [X^0(\tau^0) \in \CC_i]  > 0$, 
\begin{equation*}
    g_{yi} (\eps) = \E_y^0 \left[ \tau^0 | X^0 (\tau^0) \in \CC_i \right] + O(\eps).
\end{equation*}
\end{theorem}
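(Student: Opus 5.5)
The plan is to reduce the conditional expectation to an unconditional quantity that satisfies a linear system. Write
\[
g_{yi}(\eps) = \frac{\phi_{yi}(\eps)}{\mathcal{L}_{yi}(\eps)},\qquad \phi_{yi}(\eps) := \E_y^\eps\left[\tau^\eps \one_{\{X^\eps(\tau^\eps)\in\CC_i\}}\right].
\]
By standard first-step analysis, the vector $\phi_i(\eps)=(\phi_{yi}(\eps))_{y\in\T}$ solves
\[
\big(T_0+\eps T_1\big)\,\phi_i(\eps) = -\mathcal{L}_i(\eps),
\]
since the absorption probability satisfies $(T_0+\eps T_1)\mathcal{L}_i(\eps) = -(R_{0,i}+\eps R_{1,i})\one_i$. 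By Theorem \ref{thm:AP}, $\mathcal{L}_i(\eps)$ has a power series expansion whose leading term is $\mathcal{L}_i^{(0)}$, given by \eqref{eqn:mainthmAPCj}--\eqref{eqn:mainthmAP}. The denominator is therefore $\mathcal{L}_{yi}^{(0)}+O(\eps)$, and this is bounded away from zero under the stated positivity hypotheses. So it suffices to show that $\phi_i(\eps)=\eps^{-1}\phi_i^{(-1)}+O(1)$ with $(\phi_i^{(-1)})_y=(\hat\phi_i)_j$ for $y\in\tilde\CC_j$ and $(\phi_i^{(-1)})_y=\sum_j(\tilde L_j)_y(\hat\phi_i)_j$ for $y\in\tilde\T$. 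Dividing then gives \eqref{eqn:mainthmCETCj}--\eqref{eqn:mainthmCET}, using that for $y\in\tilde\CC_j$ the denominator is $\hat{\mathcal L}_{ji}$.

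The expansion of $\phi_i$ follows the same machinery used for Theorem \ref{thm:MFPT}. The inverse $(T_0+\eps T_1)^{-1}$ has a Laurent expansion with a simple pole, of the form $\eps^{-1}\Wo_0(\Mo_0T_1\Wo_0)^{-1}\Mo_0\cdot(-1)+O(1)$. Here Lemma \ref{lem:MWbases} supplies the null-space bases, and invertibility of $\Mo_0T_1\Wo_0$ comes from Assumption \ref{ass:QRstructure}, because $\Mo_0T_1\Wo_0$ is the transient-to-transient block of $Q_\mathcal{R}$.

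Concretely, I would substitute the ansatz $\phi_i=\eps^{-1}\phi^{(-1)}+\phi^{(0)}+\dots$ and match orders.
\begin{itemize}
\item At order $\eps^{-1}$ we get $T_0\phi^{(-1)}=0$, so $\phi^{(-1)}=\Wo_0 c$ for some $c\in\R^{|\J|}$.
\item At order $\eps^0$ we get $T_0\phi^{(0)}+T_1\Wo_0c=-\mathcal{L}^{(0)}_i$. Multiplying on the left by $\Mo_0$ (which annihilates $T_0$) gives $\Mo_0T_1\Wo_0\,c=-\Mo_0\mathcal{L}_i^{(0)}$.
\end{itemize}
Now $\Mo_0\mathcal{L}_i^{(0)}=\hat{\mathcal L}_i$: on $\tilde\CC_j$ the vector $\mathcal{L}^{(0)}_i$ is constant, equal to $\hat{\mathcal L}_{ji}$, and $\tilde\Pn_j$ is a probability vector. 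Hence $c=\hat\phi_i$, and reading off $\Wo_0\hat\phi_i$ gives exactly the claimed leading coefficients. The $O(1)$ remainder follows from the existence of the Laurent series, since the inverse has a simple pole and $\mathcal L_i$ is analytic.

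The main obstacle is justifying the simple-pole Laurent expansion of $(T_0+\eps T_1)^{-1}$, equivalently the invertibility of $\Mo_0T_1\Wo_0$. I expect this is already established in the proof of Theorem \ref{thm:MFPT}, where $h=-(T_0+\eps T_1)^{-1}\one$, and I would reuse it.

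For the case $|\J|=0$, $T_0$ itself is invertible, so $\phi_i(\eps)$ is analytic with $\phi_i(0)=-T_0^{-1}L_i$, which equals $\E^0_y[\tau^0\one_{\{X^0(\tau^0)\in\CC_i\}}]$. Dividing by $(L_i)_y+O(\eps)$ with $(L_i)_y>0$ gives the final claim.
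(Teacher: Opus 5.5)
Your proposal is correct and follows essentially the paper's route: write $g_{yi}(\eps)=\phi_{yi}(\eps)/\mathcal{L}_{yi}(\eps)$ with $\phi_i(\eps)=-T^{-1}(\eps)\mathcal{L}_i(\eps)$, combine the simple-pole expansion of Lemma~\ref{lem:TepsInverse} with the leading term of $\mathcal{L}_i(\eps)$ from Theorem~\ref{thm:AP}, and use $\Mo_0\mathcal{L}_i^{(0)}=\hat{\mathcal{L}}_i$; your order matching projected by $\Mo_0$ simply re-derives the action of $B^{(-1)}=\Wo_0(\Mo_0T_1\Wo_0)^{-1}\Mo_0$, which the paper substitutes directly. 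One small slip, which your argument never uses: this pole coefficient carries no extra factor $-1$ (check the case $|\T|=1$, $T_0=0$, $T_1=-a$).
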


The conditional expected value of the first time that $X^\eps$ exits from the transient set $\T$ given that $X^\eps$ is absorbed in $\CC_i$ starting from $y \in \T$ is defined only when there is a positive probability that $X^\eps$ is absorbed in $\CC_i$, i.e., when $\mathcal{L}_{yi} (\eps) > 0$. In Theorem \ref{thm:CMFPT}, we further assume $\mathcal{L}_{yi} (\eps)$ is order one. Theorem \ref{thm:CMFPT} says that when $\abs{\J} \geq 1$ and $\eps > 0$ is small, the conditional expected value of the first time that $X^\eps$ exits from the transient set $\T$ given that $X^\eps$ is absorbed in $\CC_i$ starting from $y \in \tilde{\CC}_j$ is approximately $1/\eps$ times the conditional expected value of the first time for $\hat{X}_\mathcal{R}$ to exit the set $\J$ given that $\hat{X}_\mathcal{R}$ is absorbed in state $i$ starting from $j$. The conditional expected value of the first time that $X^\eps$ exits from the transient set $\T$ given that $X^\eps$ is absorbed in $\CC_i$ starting from $y \in \tilde{\T}$ is given by \eqref{eqn:mainthmCET} and, generally speaking, is not approximately $1/\eps$ times the weighted conditional expected value of the first exit time in $\hat{X}_\mathcal{R}$. In our evolutionary model for fragmented population,  for $y \in \tilde{\T}$ that we are interested in looking at, it is true in special cases that $g_{yi} (\eps)$ is $1/\eps$ times the conditional expected value of the first time for $\hat{X}_\mathcal{R}$ to exit the set $\J$ given that $\hat{X}_\mathcal{R}$ is absorbed in state $i$ starting from $j$.

\begin{remark}
    Fix $j \in \J$. For $y \in \tilde{\CC}_j$, the zeroth order term of \eqref{eqn:mainthmAPCj} for $\mathcal{L}_{yi} (\eps)$ does not depend on $y$. Further suppose $\abs{\J} \geq 1$. Then, for $y \in \tilde{\CC}_j$, the leading term of \eqref{eqn:mainthmETCj} for $h_y (\eps)$ does not depend on $y$, and for $y \in \tilde{\CC}_j$ where $j$ is such that $\hat{\mathcal{L}}_{ji} > 0$, the leading term of \eqref{eqn:mainthmCETCj} for $g_y (\eps)$ does not depend on $y$.
\end{remark}

Proofs of Theorems \ref{thm:AP}--\ref{thm:CMFPT} are given in \cref{sec:proofs}.
In the following section \ref{sec:general},  we apply Theorems \ref{thm:AP}--\ref{thm:CMFPT} to study the general evolutionary models \eqref{eqn:DemeReactionsMain}--\eqref{eqn:ReactionRates}. In particular, this involves finding the probabilities for mutant fixation and the (unconditional and conditional) expected time for fixation. A more explicit exposition for deriving these results for the two-deme model introduced in section \ref{sec:2D} is provided in the SI.

\section{Wild-type and Mutant Dynamics for $D$ Demes}
\label{sec:general}

We now apply our theorems to analyze the general evolutionary model for fragmented population introduced in section \ref{sec:model}. 
Here, we focus on the case where the migration network is strongly connected, i.e., each deme is reachable from any other demes through a sequence of migration events with positive rates.
Recall that the family of continuous time Markov chains $\{ X^\eps: \eps \geq 0 \}$ has a common finite state space
\begin{equation*}
    \X = \{(w_1,m_1,w_2,m_2,\dots,w_D,m_D): 1 \leq w_\ell+m_\ell \leq K_\ell \text{ for each } \ell \in \{ 1,2,\dots,D\} \}.
\end{equation*}
Consider a state where only one type of species is present in each deme. We shall designate the deme configuration of this state with a set that consists of the index numbers of the mutant demes for this configuration. For example, we designate the configuration $wwwww$ with $\emptyset$, the configuration $mmmmm$ with $\{ 1,2,3,4,5 \}$ and the configuration $wwwmm$ with $\{ 4,5 \}$. Then, for $\mathcal{M} \subseteq \{ 1,2,\cdots, D\}$, the collection of all states that have the same configuration $\mathcal{M}$ is
\begin{eqnarray*}
    &&\X_\mathcal{M} = \left\{ x=(w_1,m_1,w_2,m_2,\dots,w_D,m_D) \in \X :  \right.  \\
    &&  \left. \qquad\qquad\qquad\text{if } \ell \in \mathcal{M}, w_\ell=0 \text{ and } 1 \leq m_\ell \leq  K_\ell, \text{and if } \ell \notin \mathcal{M}, 1 \leq w_\ell \leq  K_\ell \text{ and } m_\ell = 0 \right\}.
\end{eqnarray*}

When $\eps > 0$, the chain $X^\eps$ has two recurrent classes $\CC_{\emptyset} = \X_{\emptyset}$ and $\CC_{\{ 1,2,\dots,D\}}= \X_{\{ 1,2,\dots,D\}}$ where
\begin{equation*}
\begin{aligned}
\X_{\emptyset} &= \{(w_1,0,w_2,0,\dots,w_D,0): 1 \leq w_1 \leq K_1, 1 \leq w_2 \leq K_2, \dots, 1 \leq w_D \leq K_D \}, \\
\X_{\{ 1,2,\dots,D\}} &= \{(0,m_1,0,m_2,\dots,0,m_D): 1 \leq m_1 \leq K_1, 1 \leq m_2 \leq K_2, \dots, 1 \leq m_D \leq K_D \},   
\end{aligned}
\end{equation*}
and the set of transient states for $X^\eps$ is $\T = \X \setminus \left( \X_{\emptyset} \cup \X_{\{ 1,2,\dots,D\}} \right)$. 
When $\eps = 0$, $\CC_{\emptyset}$ and $\CC_{\{ 1,2,\dots,D\}}$ are still two recurrent classes for $X^0$. In addition, there are $2^D-2$ more recurrent classes for $X^0$, which are $\tilde{\CC}_{\mathcal{M}} =\X_{\mathcal{M}}: \mathcal{M} \subsetneq \{ 1,2,\dots,D\}$ and $\mathcal{M} \neq \emptyset$. The set of transient states for $X^0$ is $\tilde{\T} = \X \setminus \left( \cup_{\mathcal{M} \subseteq \{ 1,2,\dots,D\}} \X_{\mathcal{M}} \right)$.

In the absence of migration events (i.e., when $\eps = 0$), each deme evolves independently, and so the dynamics in deme $\ell$ can be described by the system presented in section \ref{sec:1D} with $K=K_\ell$.
To describe the quantities introduced in section \ref{sec:1D} for deme $\ell$, we add a superscript $\ell$ to the notations. More precisely, we use $\qss{w}{\ell} (\cdot)$ and $\qss{m}{\ell} (\cdot)$ to denote the corresponding stationary distributions for the wild-type and for the mutant in deme $\ell$, respectively, and we use $\fpone{m}{\ell} (\cdot,\cdot)$ to denote the probabilities of mutant fixation in deme $\ell$. Similarly, for $x = (n,k) \in \X$, we use $\fpone{w}{\ell} (n,k)$ to denote the probability of wild-type fixation starting from $n$ wild-type individuals and $k$ mutant individuals in deme $\ell$.
For $\ell \in \{ 1,2,\dots,D \}$,
we let 
\begin{equation}
\label{eqn:srho1}
    \size{w}{\ell} = \sum_{n=2}^{K_\ell} n \qss{w}{\ell} (n) \qquad \text{ and } \qquad \fp{m}{\ell} = \sum_{n=1}^{K_\ell-1} \qss{w}{\ell} (n) \fpone{m}{\ell} (n,1).
\end{equation}
This $\size{w}{\ell}$ is approximately the expected number of wild-type individuals in deme $\ell$ under its stationary distribution, provided $\qss{w}{\ell} (1)$ is small, and $\fp{m}{\ell}$ is the probability of mutant fixation when a single mutant is introduced from outside deme $\ell$ to the wild-type population in deme $\ell$ under its stationary distribution.
Similarly, for $\ell \in \{ 1,2, \dots, D \}$, we let 
\begin{equation}
\label{eqn:srho2}
    \size{m}{\ell} = \sum_{n=2}^{K_\ell} n \qss{m}{\ell} (n) \qquad \text{ and } \qquad \fp{w}{\ell} = \sum_{n=1}^{K_\ell-1} \qss{m}{\ell} (n) \fpone{w}{\ell} (1,n).
\end{equation}
Since each deme evolves independently when $\eps=0$, the stationary distribution $\tilde{\Pn}_{\mathcal{M}}$ supported on $\tilde{\CC}_{\mathcal{M}}$ for $X^0$ is such that for $x=(w_1,m_1,w_2,m_2,\dots,w_D,m_D) \in \tilde{\CC}_{\mathcal{M}}$, 
\begin{equation}
\label{eqn:nuM}
\tilde{\Pn}_{\mathcal{M}} (x) =  \left( \prod_{i \in \mathcal{M}} \qss{m}{i} (m_i) \right) \left( \prod_{j \notin \mathcal{M}} \qss{w}{j} (w_j) \right),
\end{equation}
and moreover, for $y=x + \basisvec{m}{k}$ where $x \in \tilde{\CC}_{\mathcal{M}}$ $k \notin \mathcal{M}$ and $\basisvec{m}{k}$ is the standard unit basis vector in $\R^{2D}$ for the coordinate designated for the mutant in deme $k$, we have
\begin{equation}
\label{eqn:PM}
    \PP_y^0 [X^0(\tau^0) \in \tilde{\CC}_{\mathcal{M} \cup \{ k \}}] = \fpone{m}{k} (w_k,1).
\end{equation}

Using the construction in section \ref{sec:reducedMC}, we obtain a reduced process $\hat{X}_\mathcal{R}$ that is a continuous time Markov chain with state space $\mathcal{R} = \{ \mathcal{M} : \mathcal{M} \subseteq \{ 1,2,\cdots, D\} \}$, which is the power set of $\{ 1,2,\cdots, D\}$. The states $\emptyset$ and $\{ 1,2,\cdots, D\}$ are absorbing and the other states in $\hat{X}_\mathcal{R}$ are transient. For states $\mathcal{M}_1,\mathcal{M}_2 \in \mathcal{R}$, the infinitesimal transition rate from $\mathcal{M}_1$ to $\mathcal{M}_2$ in $\hat{X}_\mathcal{R}$ is positive only if $\abs{\mathcal{M}_1 \Delta \mathcal{M}_2} = 1$\footnote{The symmetric difference of two sets $\mathcal{M}_1$ and $\mathcal{M}_2$ is defined by $\mathcal{M}_1 \Delta \mathcal{M}_2 = \left( \mathcal{M}_1 \setminus \mathcal{M}_2 \right) \cup \left( \mathcal{M}_2 \setminus \mathcal{M}_1 \right)$}.
For $\ell \in \mathcal{M}$ and $k \in \{ 1,2,\cdots, D\} \setminus \mathcal{M}$, let 
\begin{equation*}
    v^{(\ell,k)}_m = \basisvec{m}{k} - \basisvec{m}{\ell} \in \Z^{2n} \quad \text{ and } \quad v^{(k,\ell)}_w = \basisvec{w}{\ell} - \basisvec{w}{k} \in \Z^{2n}
\end{equation*}
be the reaction direction for a mutant migrating from deme $\ell$ to deme $k$ and the reaction direction for a wild-type migrating from deme $k$ to deme $\ell$. Here, $\basisvec{m}{\ell}$, $\basisvec{m}{k}$, $\basisvec{w}{\ell}$, $\basisvec{w}{k}$ are the standard unit basis vectors in $\R^{2D}$ for the coordinates designated for the mutant in deme $\ell$, the mutant in deme $k$, the wild-type in deme $\ell$, the wild-type in deme $k$, respectively.
Then, from \eqref{eqn:MW}--\eqref{eqn:QR} and \eqref{eqn:nuM}--\eqref{eqn:PM}, we have that  for $\mathcal{M} \subsetneq \{ 1,2,\cdots, D\}$ and $k \in \{ 1,2,\cdots, D\} \setminus \mathcal{M}$,
\small
\begin{eqnarray}
    && \left(Q_\mathcal{R} \right)_{\mathcal{M},\mathcal{M} \cup \{ k \}} \nonumber \\
    &=& \sum_{\ell \in \mathcal{M}} \sum_{x \in \X_\mathcal{M}} \tilde{\Pn}_{\mathcal{M}} (x) Q^{(1)}_{x,x+v^{(\ell,k)}_m} \PP_{x+v^{(\ell,k)}_m}^0 \left[ X^0(\tau^0) \in \X_{\mathcal{M} \cup \{ k \}} \right] \nonumber \\
    &=& \sum_{\ell \in \mathcal{M}} \sum_{(w_1,m_1,\dots,w_D,m_D) \in \X_\mathcal{M}} \left( \left( \prod_{i \in \mathcal{M}} \qss{m}{i} (m_i) \right) \left( \prod_{j \notin \mathcal{M}} \qss{w}{j} (w_j) \right) \mi{m}{\ell}{k} m_\ell \one_{\{ m_\ell > 1 , w_k < K_k \}} \fpone{m}{k} (w_k,1) \right) \nonumber
\end{eqnarray}
\normalsize
and so 
\begin{equation}
\label{eqn:QRk}
    \left(Q_\mathcal{R} \right)_{\mathcal{M},\mathcal{M} \cup \{ k \}} = \sum_{\ell \in \mathcal{M}}  \mi{m}{\ell}{k} \size{m}{\ell} \fp{m}{k}.
\end{equation}
This means that the infinitesimal transition rate in $\hat{X}_\mathcal{R}$ by which deme $k$ changes from being all-wild-type  to all-mutant, is obtained by summing, over all mutant demes, of the product the migration rate from that deme,  the (approximate) expected size of the mutant deme, and the probability that a migrant mutant ultimately fixates in a wild-type deme $k$.
Similarly, for $\mathcal{M} \subseteq \{ 1,2,\cdots, D\}$ such that $\mathcal{M} \neq \emptyset$ and $\ell \in \mathcal{M}$, we have
\begin{equation}
\label{eqn:QRl}
    \left(Q_\mathcal{R} \right)_{\mathcal{M},\mathcal{M} \setminus \{\ell\}} = \sum_{k \notin \mathcal{M}} \mi{w}{k}{\ell} \size{w}{k} \fp{w}{\ell}. 
\end{equation}

\begin{remark}
As a special case, if we further assume that for all $\ell,k \in \{ 1,2,\dots,D \}$, 
\begin{equation*}
    K_\ell=K, \quad \rr{w}{\ell} = \bar{r}_w, \quad \rr{m}{\ell} = \bar{r}_m, \quad \dd{w}{\ell} = \bar{d}_w, \quad \dd{m}{\ell} = \bar{d}_m, \quad \mi{w}{\ell}{k} = \bar{\mu}_m, \quad \mi{m}{\ell}{k} = \bar{\mu}_m,
\end{equation*}
where $K,\bar{r}_w,\bar{r}_m,\bar{d}_w,\bar{d}_m,\bar{\mu}_w,\bar{\mu}_m$ are positive constants that do not depend on deme indices, we have that $\size{w}{\ell} = \mathfrak{s}_{w}$, $\size{m}{\ell} = \mathfrak{s}_{m}$, $\fp{w}{k} = \rho_w$, $\fp{m}{k} = \rho_m$ for some positive constants $\mathfrak{s}_{w}$, $\mathfrak{s}_{m}$, $\rho_w$, $\rho_m$ that do not depend on deme indices.
Then, \eqref{eqn:QRk}--\eqref{eqn:QRl} become
\begin{equation*}
    \left(Q_\mathcal{R} \right)_{\mathcal{M},\mathcal{M} \cup \{ k \}} = \bar{\mu}_m \mathfrak{s}_{m} \rho_m \abs{\mathcal{M}} \qquad \text{ and } \qquad \left(Q_\mathcal{R} \right)_{\mathcal{M},\mathcal{M} \setminus \{\ell\}} = \bar{\mu}_w \mathfrak{s}_{w} \rho_w (D - \abs{\mathcal{M}}).
\end{equation*}
In this case, we may project $\hat{X}_\mathcal{R}$ down to a one-dimensional Markov chain, which counts the number of mutant demes. This projection is a birth-death process whose state space is $\{ \abs{\mathcal{M}} : \mathcal{M} \in \mathcal{R} \} = \{ 0,1,\dots,D \}$. At state $n$, the birth rate is $\bar{\mu}_m \mathfrak{s}_{m} \rho_m (D - n) n$ and death rate is $\bar{\mu}_w \mathfrak{s}_{w} \rho_w (D - n) n$.
\end{remark}

\subsection{Mutation Fixation Analysis}
We first analyze mutation fixation starting from a deterministic state $y$ with only one mutant in deme $\ell$, i.e., $y=x + \basisvec{m}{\ell}$ for $x \in \X_{\emptyset}$ and $\basisvec{m}{\ell}$ being the standard unit basis vector in $\R^{2D}$ for the coordinate designated for the mutant in deme $\ell$.
We observe that for $\mathcal{M} \subseteq \{ 1,2,\dots,D \}$ such that $\mathcal{M} \neq \emptyset$ and $\mathcal{M} \neq \{ \ell \}$, we have $\PP_y^0 [X^0(\tau^0) \in \X_{\mathcal{M}}] = 0$. Moreover, since each deme evolves independently when $\eps=0$, $\PP_y^0 [X^0(\tau^0) \in \tilde{\CC}_{\{ \ell \}}] = \fpone{m}{\ell} (w_\ell,1)$.  Since the migration network is strongly connected, we have from \eqref{eqn:QRk}--\eqref{eqn:QRl} that $\emptyset$ and $\{ 1,2,\cdots, D\}$ are the only two absorbing states in $\hat{X}_\mathcal{R}$ and the other states are transient, and so Assumption \ref{ass:QRstructure} holds. Thus, from \eqref{eqn:mainthmAP} in Theorem \ref{thm:AP}, we have 
\begin{equation}
\label{eqn:LhatN}
        \mathcal{L}_{y,\{1,2,\dots,D\}} (\eps) = \fpone{m}{\ell} (w_\ell,1) \hat{\mathcal{L}}_{\{\ell\},\{1,2,\dots,D\}} + O(\eps),
\end{equation}
which represents a two-stage decomposition of the mutant fixation probability: starting from a single mutant in deme $\ell$, the probability of mutant fixation for the whole system is, to the leading order, the product of the probability of its fixation in the founding deme and  the fixation probability in the reduced Markov chain (i.e., the probability of spread from one mutant deme to all demes). Here, $\hat{\mathcal{L}}_{\{\ell\},\{1,2,\dots,D\}} > 0$. Further, from \eqref{eqn:mainthmET} in Theorem \ref{thm:MFPT}, 
\begin{equation}
\label{eqn:hhatN}
    h_y (\eps) = \fpone{m}{\ell} (w_\ell,1) \hat{h}_{\{\ell\}} \eps^{-1} + O(1),
\end{equation}
which means that the leading contribution to the mean absorption time comes from trajectories that first establish a mutant deme (with probability $\fpone{m}{\ell} (w_\ell,1)$) and then  evolve on the slow deme-level timescale (of the order $\eps^{-1}$). Finally, 
from \eqref{eqn:mainthmCET} in Theorem \ref{thm:CMFPT}, 
\begin{equation}
\label{eqn:ghatN}
    g_{y,\{1,2,\dots,D\}} (\eps) = \frac{\hat{\phi}_{\{\ell\},\{1,2,\dots,D\}}}{\hat{\mathcal{L}}_{\{\ell\},\{1,2,\dots,D\}}} \eps^{-1} + O(1) = \hat{g}_{\{\ell\},\{1,2,\dots,D\}} \eps^{-1} + O(1),
\end{equation}
where the second equality holds from Lemma \ref{lem:MTWInvertible}. Like equation (\ref{eqn:LhatN}), this is also a two-stage result, which means that conditional on success, the time to global fixation is determined by the second (global) stage, namely the spread of mutant demes through the metapopulation.

We now apply the general results of equations  (\ref{eqn:LhatN}-\ref{eqn:ghatN}) to the biologically relevant scenario, in which a mutant arises through immigration and is introduced into a randomly chosen deme. Assume the wild-type population has come to equilibrium, i.e., $X^\eps$ is restricted to the recurrent class $\CC_{\emptyset}$. For $\eps \geq 0$, $X^\eps$ restricted to $\CC_{\emptyset}$ is positive recurrent and has a unique stationary distribution $\Pn^\eps_{\emptyset}$. Thus, by Theorem 4.1 in Campos et al. \cite{bib:epifinite}, we have for $x = (w_1,0,w_2,0,\dots,w_D,0) \in \X_{\emptyset}$,
\begin{equation}
\label{eqn:WQSSDD}
    \Pn^\eps_{\emptyset} (x) = \Pn_{\emptyset} (x) + O(\eps)= \prod_{k=1}^D \qss{w}{k} (w_k) + O(\eps),
\end{equation}
where $\Pn_{\emptyset}$ is the stationary distribution supported on $\CC_\emptyset$ for $X^0$.
We then consider an immigration event where a mutant is introduced from an external source and the probability it falls in a particular deme is proportional to the carrying capacity of that deme. From \eqref{eqn:LhatN} and \eqref{eqn:WQSSDD}, we have that the probability that both demes are fixated by the mutant is
\begin{eqnarray}
    && \sum_{x \in \X_\emptyset} \Pn^\eps_{\emptyset} (x) \sum_{\ell=1}^{D} \frac{K_\ell}{K_1+K_2+\dots+K_D} \mathcal{L}_{x+\basisvec{m}{\ell},\{1,2,\dots,D\}} (\eps) \one_{\{x+\basisvec{m}{\ell} \in \X\}} \nonumber \\
    &=& \sum_{\ell=1}^{D} \frac{K_\ell}{K_1+K_2+\dots+K_D} \fp{m}{\ell} \hat{\mathcal{L}}_{\{\ell\},\{1,2,\dots,D\}} + O(\eps). \label{eqn:FPfinal}
\end{eqnarray}
From \eqref{eqn:hhatN} and \eqref{eqn:WQSSDD}, we have that the expected time until either the mutant or the wild-type fixates is
\begin{eqnarray}
    && \sum_{x \in \X_\emptyset} \Pn^\eps_{\emptyset} (x) \sum_{\ell=1}^{D} \frac{K_\ell}{K_1+K_2+\dots+K_D} h_{x+\basisvec{m}{\ell}} (\eps) \one_{\{x+\basisvec{m}{\ell} \in \X\}} \nonumber \\
    &=& \left( \sum_{\ell=1}^{D} \frac{K_\ell}{K_1+K_2+\dots+K_D} \fp{m}{\ell} \hat{h}_{\{\ell\}} \right) \eps^{-1} + O(1). \label{eqn:EFTfinal}
\end{eqnarray}
From \eqref{eqn:ghatN} and \eqref{eqn:WQSSDD}, we have that 
the conditional expected time for mutant fixation is
\begin{eqnarray}
    && \sum_{x \in \X_\emptyset} \Pn^\eps_{\emptyset} (x) \sum_{\ell=1}^{D} \frac{K_\ell}{K_1+K_2+\dots+K_D} g_{x+\basisvec{m}{\ell},\{1,2,\dots,D\}} (\eps) \one_{\{x+\basisvec{m}{\ell} \in \X\}} \nonumber \\
    &=& \left( \sum_{\ell=1}^{D} \frac{K_\ell}{K_1+K_2+\dots+K_D} (1 - \qss{w}{\ell} (K_\ell)) \hat{g}_{\{\ell\},\{1,2,\dots,D\}} \right) \eps^{-1} + O(1). \label{eqn:CEFTfinal}
\end{eqnarray}

\subsection{Example - dynamics for three demes with different migration patterns}
As an illustrative example, here we consider fragmented population in three demes with a total carrying capacity $K_1+K_2+K_3 =100$. We look at the quasi-neutral case (see \cite{bib:2007PQ_b,bib:2010PQP, bib:2025KW}), where 
\begin{equation*}
    \rr{m}{\ell} = \tau \rr{w}{\ell}, \qquad \dd{m}{\ell} = \tau \dd{w}{\ell}, \qquad \mi{m}{\ell}{k} = \tau \mi{w}{\ell}{k}
\end{equation*}
for some constant $\tau>0$, i.e., the mutant basic division and death rates are proportionally scaled with respect to those of the wild type. We study the mutant fixation when $K_1,K_2,K_3$ vary. In particular, we look at three different migration patterns: (1) $\mi{w}{\ell}{k} = 1$ for all $\ell \neq k$, (2) $\mi{w}{1}{3} = \mi{w}{3}{2} = \mi{w}{2}{1} = 1$ and $\mi{w}{1}{2} = \mi{w}{2}{3} = \mi{w}{3}{1} = 0$, (3) $\mi{w}{1}{2}  = \mi{w}{2}{1} = \mi{w}{2}{3} = \mi{w}{3}{2} = 1$ and $\mi{w}{1}{3}  = \mi{w}{3}{1} = 0$. The migration patterns and the corresponding reduced Markov chain structures are shown in Figure \ref{fig:XhatI-3D}. The heatmaps for the fixation probabilities and the (unconditional and conditional) expected fixation times are shown in Figures \ref{fig:3DFP}--\ref{fig:3DCEFT}.

We observe that under constant total carrying capacity, both the size distribution of individual demes and specific migration patterns affect the probability and timing of mutant fixation. For example, for the first (most symmetric) migration pattern, for mutants with proportionally accelerated rates (Figure \ref{fig:3DFP}, $\tau=2$), the probability of fixation tends to decrease for equal deme sizes, while  it tends to increase for equal deme sizes if the mutants are decelerated ($\tau=1/2$). On the contrary, the deme-size dependence of absorption and conditional fixation times for accelerated and decelerated mutants is similar (comparing the top and the bottom rows of Figures  \ref{fig:3DEFT}, \ref{fig:3DCEFT}). In fact, as shown in SI - section S.1.2, the mean absorption time for quasi-neutral  mutants characterized by factors $\tau_0>1$ and $1/\tau_0$ differ by a multiplicative factor of $\tau_0^2$, and the mean conditional fixation times differ by a multiplicative factor of $\tau_0$. Both temporal characteristics vary significantly with deme carrying capacities and migration patterns.

\begin{figure}[H]
    \centering
    \includegraphics[width=\linewidth]{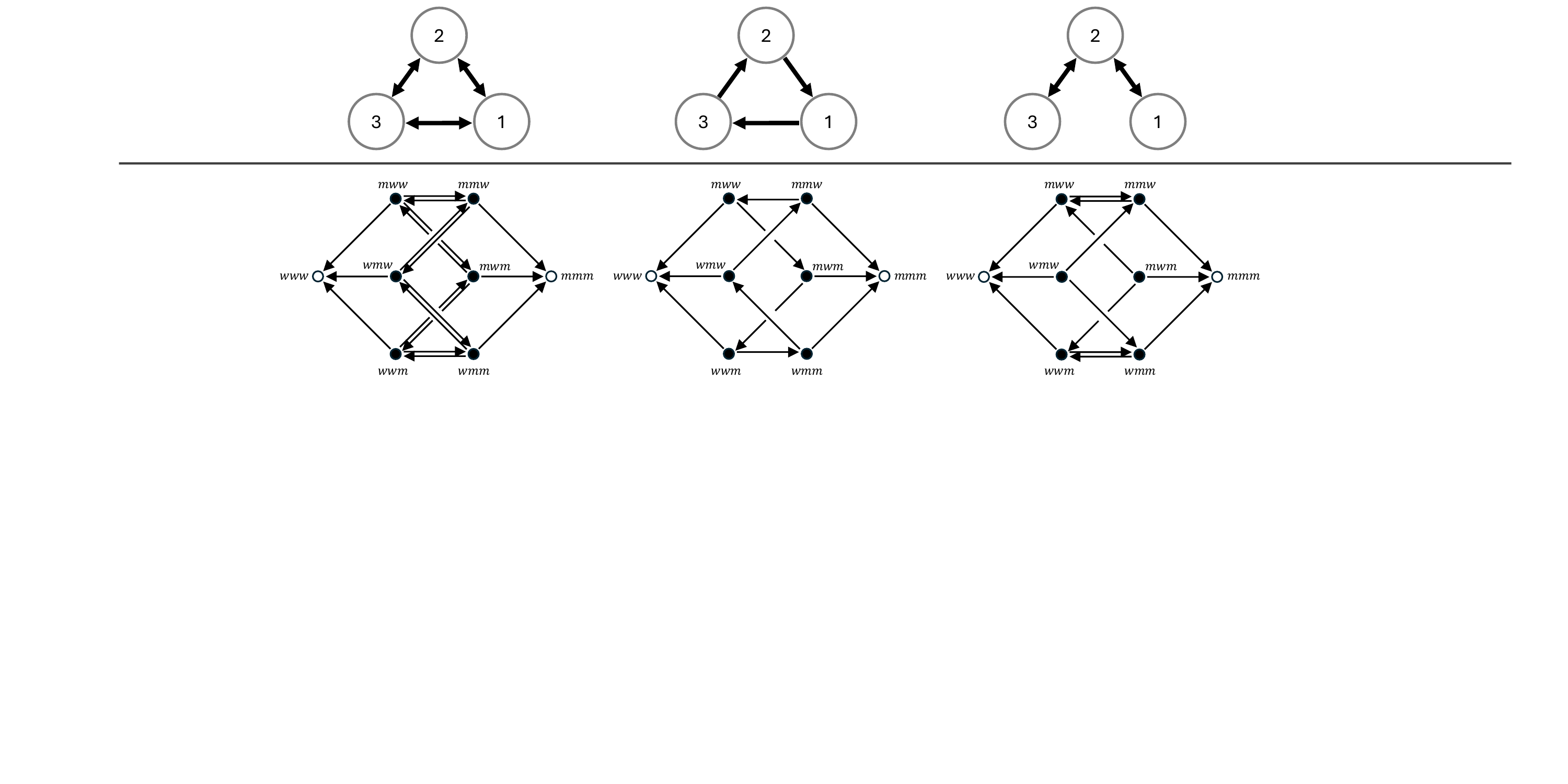}
    \caption{\textbf{Markov chain graphs of $\hat{X}_\mathcal{R}$ for three-deme population with different migration patterns.} Three types of migration patterns are considered: (left) $\mi{w}{\ell}{k} = 1$ for all $\ell \neq k$, (middle) $\mi{w}{1}{3} = \mi{w}{3}{2} = \mi{w}{2}{1} = 1$, (right) $\mi{w}{1}{2}  = \mi{w}{2}{1} = \mi{w}{2}{3} = \mi{w}{3}{2} = 1$. The infinitesimal generator for $\hat{X}_\mathcal{R}$ is given by \eqref{eqn:QRk}--\eqref{eqn:QRl}, and the non-zero infinitesimal transitions are shown in the corresponding lower panel for each type of migration pattern.} 
    \label{fig:XhatI-3D}
\end{figure}

\begin{figure}[H]
    \centering
    \includegraphics[width=0.95\linewidth]{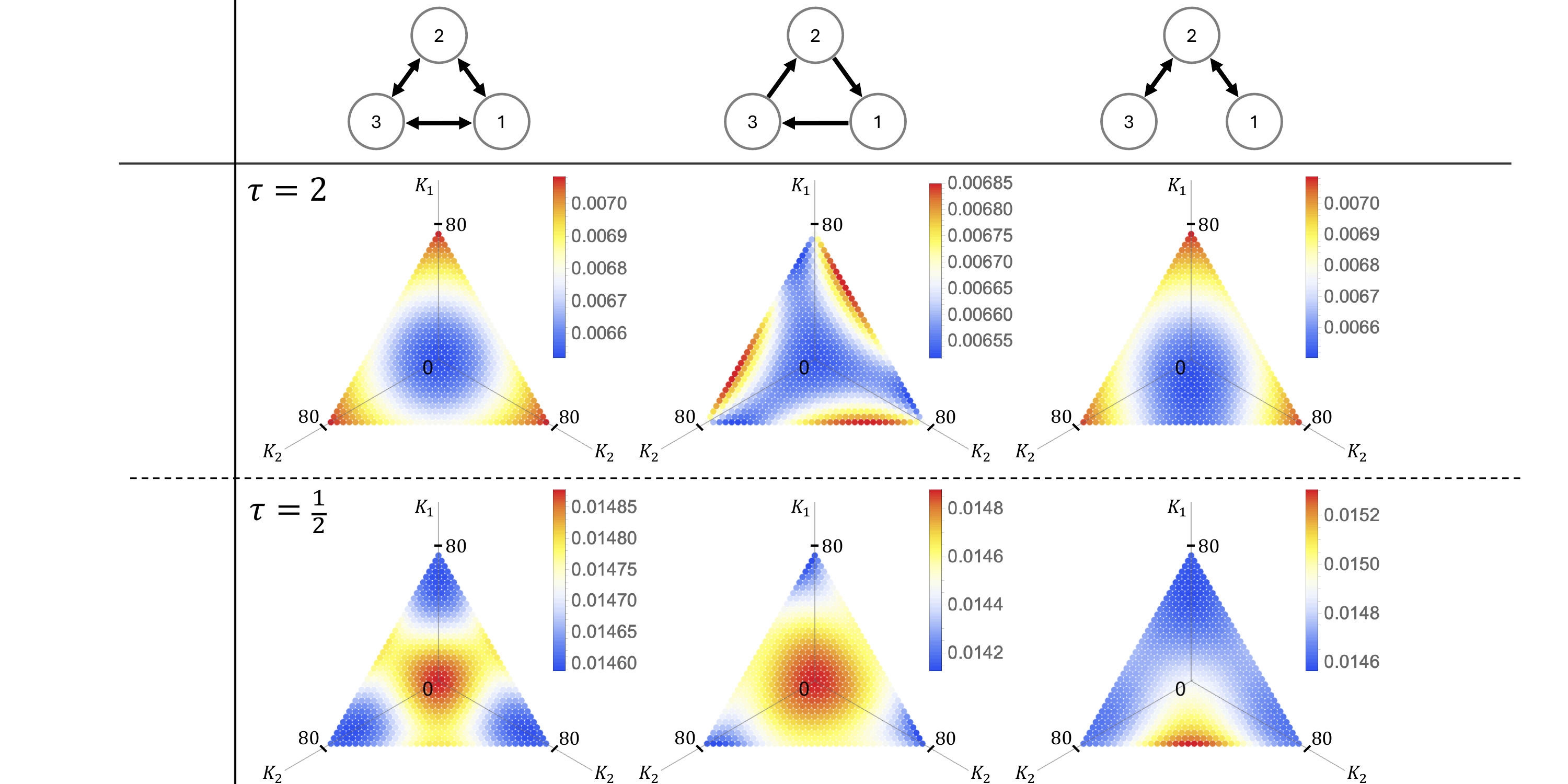}
    \caption{\textbf{Probabilities of mutant fixation for three-deme population with a total capacity $100$.} Here, $K_1+K_2+K_3 = 100$ and we plot for even integers $K_1, K_2, K_3 \geq 10$. For $\ell \in \{ 1,2,3\}$, $\rr{w}{\ell}=10$, $\rr{m}{\ell}= 10 \tau$, $\dd{w}{\ell}=1$ and $\dd{m}{\ell} = \tau$. For $\ell \neq k$, $\mi{w}{\ell}{k} = \tau \mi{m}{\ell}{k}$. Three types of migration patterns are considered: (left) $\mi{w}{\ell}{k} = 1$ for all $\ell \neq k$, (middle) $\mi{w}{1}{3} = \mi{w}{3}{2} = \mi{w}{2}{1} = 1$, (right) $\mi{w}{1}{2}  = \mi{w}{2}{1} = \mi{w}{2}{3} = \mi{w}{3}{2} = 1$. In each heatmap, we plot the value of the leading term in \eqref{eqn:FPfinal}, which is approximately the probability of mutant fixation when $\eps$ is small. In SI - section S.1.1, these heatmaps are presented with a different color scale, where the fixation probabilities are compared to that for the well-mixed system.} 
    \label{fig:3DFP}
\end{figure}

\begin{figure}[H]
    \centering
    \includegraphics[width=0.95\linewidth]{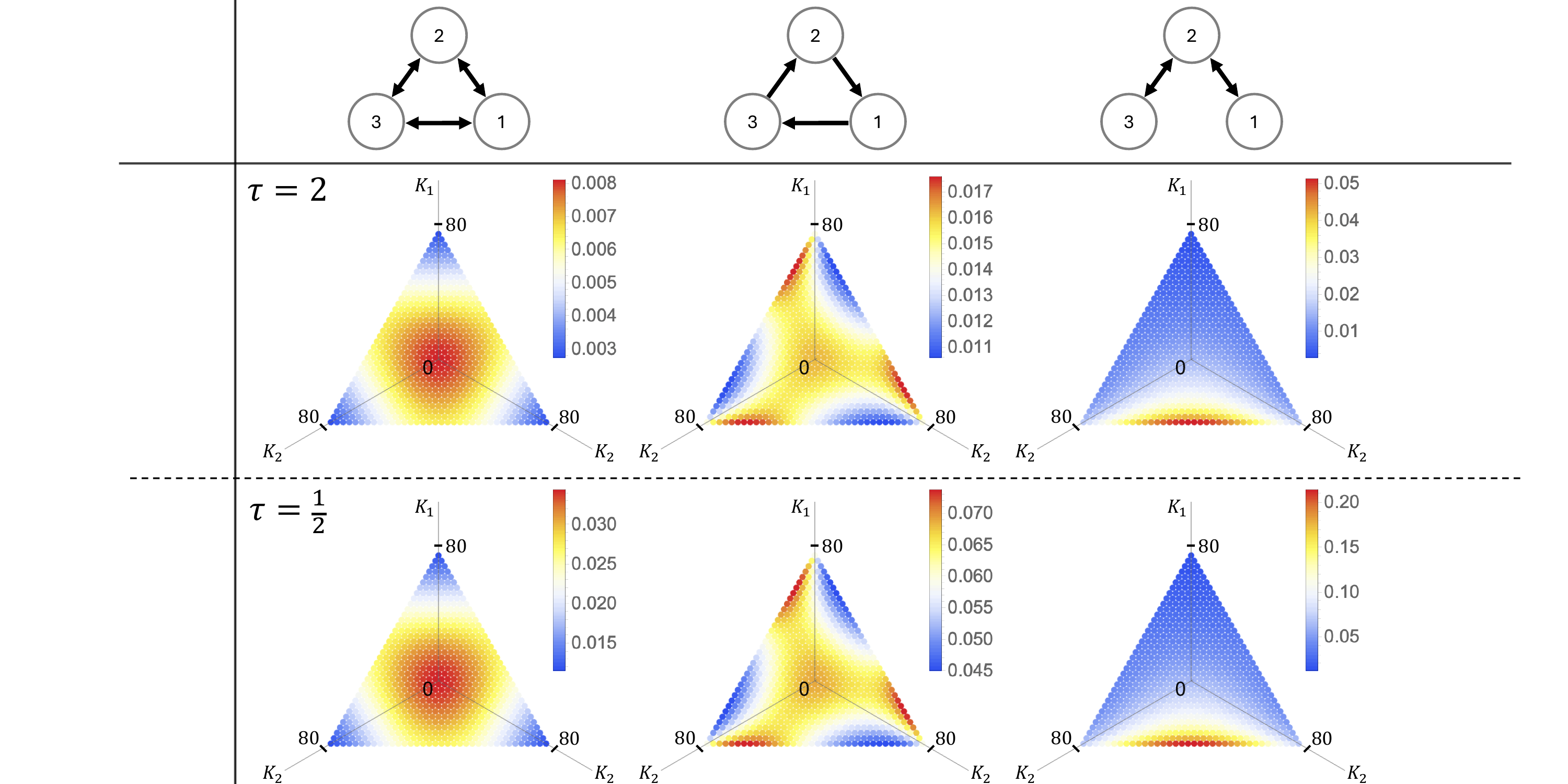}
    \vspace{-0.1cm}
    \caption{\textbf{Leading coefficients of expected fixation times (when either the wild-type or the mutant fixates) for three-deme population with a total capacity $100$.} 
    For parameter values, see Figure \ref{fig:3DFP}. In each heatmap, we plot the value of the leading coefficient in \eqref{eqn:EFTfinal}, which times $1 / \eps$ is approximately the expected time until either the wild-type or the mutant fixates for the system \eqref{eqn:DemeReactionsMain}--\eqref{eqn:ReactionRates} when $\eps$ is small. Please note the scale difference; the expected absorption times for the decelerated mutant ($\tau=1/2$) are approximately $4$ times larger than that for the corresponding accelerated mutant ($\tau=2$), see SI - section S.1.2 for a rationale for this near proportional scaling.}
    \label{fig:3DEFT}
\end{figure}
\begin{figure}[H]
    \centering
    \includegraphics[width=0.95\linewidth]{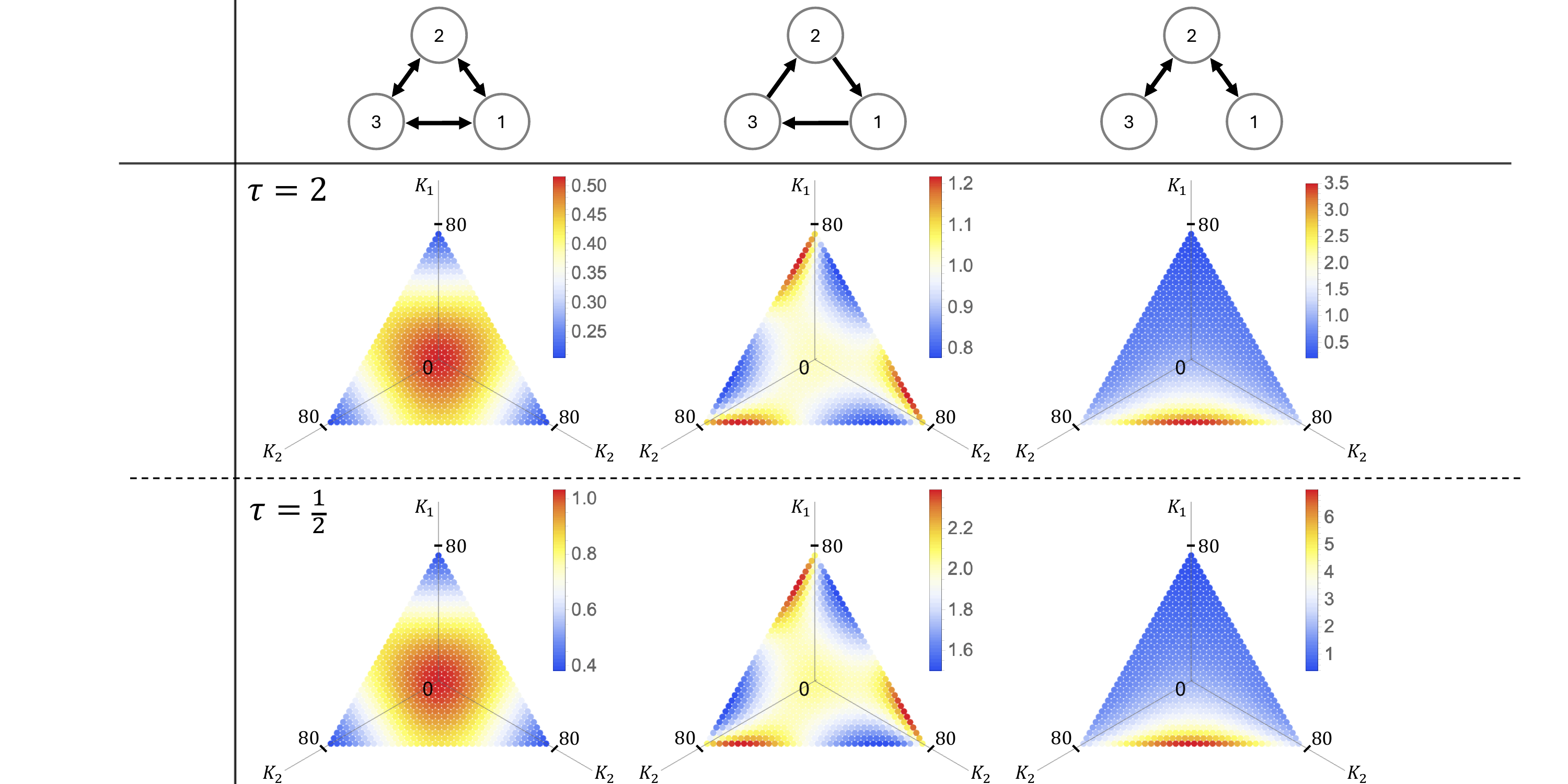}
    \vspace{-0.1cm}
    \caption{\textbf{Leading coefficients of conditional expected times for mutant fixation for three-deme population with a total capacity $100$.} For parameter values, see Figure \ref{fig:3DFP}.
    In each heatmap, we plot the value of the leading coefficient in \eqref{eqn:CEFTfinal}, which times $1 / \eps$ is approximately the conditional expected time for fixation given that the mutant fixates when $\eps$ is small. Please note the scale difference; the expected conditional fixation times for the decelerated mutant ($\tau=1/2$) are approximately $2$ times larger than that for the corresponding accelerated mutant ($\tau=2$), see SI - section S.1.2 for a rationale for this near proportional scaling.
    }
    \label{fig:3DCEFT}
\end{figure}

\section{Discussion}

In this paper we studied a logistic birth-death process (with density-dependent divisions) in a fragmented population with migration, in the absence of deme extinction. We showed (Theorems \ref{thm:AP}, \ref{thm:MFPT}, and \ref{thm:CMFPT}) that when migration is much slower than the within-deme dynamics ($\eps \to 0$), the original stochastic process on the full state space can be replaced by a reduced Markov chain whose states correspond to  configurations of homogeneous (fully wild-type or fully mutant) demes. Fixation probabilities and absorption times of the full process converge to those of this reduced process. Equations ((\ref{eqn:LhatN}), (\ref{eqn:hhatN}), (\ref{eqn:ghatN})) reveal a two-stage structure of the invasion process in the rare-migration limit. A newly introduced mutant must first establish within its founding deme, after which the subsequent dynamics are governed by a reduced Markov chain. We further obtained leading-order approximations for the mutant fixation probability (equation (\ref{eqn:FPfinal})), mean absorption time (equation (\ref{eqn:EFTfinal})), and mean conditional fixation time (equation (\ref{eqn:CEFTfinal})), where we assumed that the mutant was initiated by an external migration event to a deme chosen with the probability proportional to its carrying capacity.

Results derived here are quite general and hold for demes of unequal carrying capacities, under any (deme-dependent) division and death rates of wild-types and mutants, and any migration rates as long as the migration network is strongly connected. 
Moreover, the theory applies  to the general version of the reaction network model \eqref{eqn:DemeReactionsMain}, not just the specific birth-death process mentioned.

Because population fragmentation is ubiquitous in nature, the theoretical framework developed here has potential applications across a wide range of biological and biomedical problems. Such systems include cancer cell populations evolving in spatially structured tissues, microbial communities and bacterial colonies, ecological populations inhabiting fragmented landscapes, and species whose dispersal patterns have been modified by anthropogenic habitat fragmentation and other forms of human intervention.

\appendix

\section{Additional Lemmas and Proofs}
\label{sec:proofs}
We define $T(\eps) = T_0 + \eps T_1$ and for $i \in \I$, define $R_i (\eps) = R_{0,i} + \eps R_{1,i}$. We will use these functions throughout this section.
To prove Theorems \ref{thm:AP}--\ref{thm:CMFPT}, one can invoke the results stated in Lemmas \ref{lem:APandMFPT} and \ref{lem:TepsInverse}. In particular, Lemma \ref{lem:TepsInverse}
characterizes the leading terms for $T^{-1}(\eps)$, when plugged into the expressions \eqref{eqn:lemmaLi}--\eqref{eqn:lemmagi} in Lemma \ref{lem:APandMFPT}, one would be able to get the results stated in Theorems \ref{thm:AP}--\ref{thm:CMFPT} after simplifying the expressions. The details of the algebraic simplification are provided in the SI. Lemmas \ref{lem:AP0}--\ref{lem:MWbases} provide technical results used in the proof of Lemma \ref{lem:TepsInverse}. We start the section with Lemma \ref{lem:QR} that shows the well-posedness of the reduced Markov chain $\hat{X}_\mathcal{R}$.

\begin{lemma}
\label{lem:QR}
    The matrix $Q_{\mathcal{R}}$ given in \eqref{eqn:QR} is an infinitesimal generator.
\end{lemma}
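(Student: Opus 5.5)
To show that $Q_{\mathcal{R}}$ is an infinitesimal generator we must check two things: its off-diagonal entries are nonnegative, and its row sums vanish. The rows indexed by $\I$ are identically zero, so both conditions hold trivially there. Everything reduces to the rows indexed by $j \in \J$. We will use the block structure of $Q^{(0)}$ and $Q^{(1)}$, the fact that $\Mo_0$ has nonnegative entries (rows are stationary distributions $\tilde{\Pn}_j$ supported on $\tilde{\CC}_j$), and the fact that $\Wo_0$ and the $\mathcal{L}_{0,i}$ have nonnegative entries, being absorption probabilities for $X^0$.

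\textbf{Nonnegativity.} Fix $j \in \J$. As noted after \eqref{eqn:Ji}, the vector $\Mo_0$ only sees the $\tilde{\CC}_j$ rows, so
\[
(J_i)_j = \tilde{\Pn}_j R^{(1)}_{ji}\one_i + \tilde{\Pn}_j \tilde{U}^{(1)}_j L_i .
\]
The blocks $R^{(1)}_{ji}$ and $\tilde{U}^{(1)}_j$ are off-diagonal blocks of $Q^{(1)}$. By the footnote to \eqref{eqn:XepsDecomp}, off-diagonal entries of $Q^{(0)}+\eps Q^{(1)}$ are nonnegative for all small $\eps>0$. Moreover, the corresponding $Q^{(0)}$ entries are zero, because the $\tilde{\CC}_j$ rows of $Q^{(0)}$ vanish outside the block $\tilde{E}^{(0)}_{jj}$, since $\tilde{\CC}_j$ is closed for $X^0$. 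Hence these $Q^{(1)}$ entries are nonnegative, and therefore $(J_i)_j\ge 0$.

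For the $\J\times\J$ block, a direct computation gives
\[
(\Mo_0 T_1 \Wo_0)_{jk} = \tilde{\Pn}_j\bigl(\tilde{E}^{(1)}_{jk}\tilde{\one}_k + \tilde{U}^{(1)}_j \tilde{L}_k\bigr).
\]
For $k\ne j$, the block $\tilde{E}^{(1)}_{jk}$ is off-diagonal in $Q^{(1)}$ and, by the same argument, is entrywise nonnegative. The block $\tilde{U}^{(1)}_j$ is also off-diagonal. So the entry is $\ge 0$.

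\textbf{Zero row sums.} This is the main step. Since $Q(\eps)\one=0$ for all $\eps$, we have $Q^{(0)}\one = 0$ and $Q^{(1)}\one=0$. Restricted to the $\T$ rows, this gives
\[
\sum_i R_{1,i}\one_i + T_1\one_{\T}=0 .
\]
The key identity is that, on $\T$,
\[
\one_{\T} = \sum_{i\in\I}\mathcal{L}_{0,i} + \Wo_0\one_{\J}
\]
(where $\one_{\J}$ denotes the all-ones vector indexed by $\J$). On $\tilde{\CC}_j$ rows this is immediate, because those rows of $\mathcal{L}_{0,i}$ vanish and $\Wo_0$ has $\tilde{\one}_j$ there. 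On $\tilde{\T}$ rows it says
\[
\sum_i (L_i)_y + \sum_j (\tilde{L}_j)_y = 1 .
\]
This holds because $X^0$ started in $\tilde{\T}$ is absorbed almost surely in some recurrent class, $\tilde{\T}$ being a finite set of transient states.

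Multiplying the identity by $T_1$ and using the relation from $Q^{(1)}\one=0$, we obtain
\[
\sum_i\bigl(R_{1,i}\one_i + T_1\mathcal{L}_{0,i}\bigr) + T_1\Wo_0\one_{\J} = 0 .
\]
Applying $\Mo_0$ on the left yields
\[
\sum_i J_i + \Mo_0T_1\Wo_0\one_{\J}=0 ,
\]
which is exactly the statement that the $\J$ rows of $Q_{\mathcal{R}}$ sum to zero. This completes the verification.
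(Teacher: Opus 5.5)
Your proof is correct and follows essentially the same route as the paper. The off-diagonal entries are nonnegative because the corresponding $Q^{(0)}$ blocks vanish in the $\tilde{\CC}_j$ rows, which makes $R^{(1)}_{ji}$, $\tilde{U}^{(1)}_j$ and $\tilde{E}^{(1)}_{jk}$ ($k\neq j$) entrywise nonnegative. The zero row sums come from the identity $\sum_{i\in\I}\mathcal{L}_{0,i}+\Wo_0\one_\J=\one_\T$, combined with $\sum_{i\in\I} R_{1,i}\one_i+T_1\one_\T=0$ and left multiplication by $\Mo_0$.
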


\begin{proof}
    Let $\one_\J$, $\one_{\tilde{\T}}$ and  $\one_{\T}$ be the vectors of all ones with sizes $|\J|$, $|\tilde{\T}|$ and $|\T|$, respectively. 
    To see the row sums of $Q_{\mathcal{R}}$ are zeros, we observe that
    \begin{eqnarray*}
        \Mo_0 T_1 \Wo_0 \one_\J + \sum_{i \in \I} J_i &=& \Mo_0 T_1 \Wo_0 \one_\J + \sum_{i \in \I} \Mo_0 \left( R_{1,i} \one_i + T_1 \mathcal{L}_{0,i} \right) \\
        &=& \Mo_0 \left( \sum_{i \in \I} R_{1,i} \one_i + T_1 \left( \Wo_0 \one_\J + \sum_{i \in \I} \mathcal{L}_{0,i} \right) \right)
    \end{eqnarray*}
    \begin{eqnarray*}
        \quad \qquad &=& \Mo_0 \left( \sum_{i \in \I} R_{1,i} \one_i + T_1 \one_\T \right) = 0,
    \end{eqnarray*}
    where we rearrange terms to get the second equality, the third equality holds since for each $y \in \tilde{\T}$,
    \begin{equation*}
        1 = \PP_y^0 [X^0(\tau^0) \in \X \setminus \tilde{\T} ] = \left( \sum_{i \in \I} \PP_y^0 [X^0(\tau^0) \in \CC_i] \right) + \left( \sum_{j \in \J} \PP_y^0 [X^0(\tau^0) \in \tilde{\CC}_j] \right)
    \end{equation*}
    and so $\left( \sum_{i \in \I} L_i \right) + \left( \sum_{j \in \J} \tilde{L}_j \right) = \one_{\tilde{\T}}$ which implies $ \left( \sum_{i \in \I} \mathcal{L}_{0,i} \right) + \Wo_0 \one_\J = \one_\T$, and the last equality holds since the row sums of $Q^{(1)}$ being zeros implies that $\left( \sum_{i \in \I} R_{1,i} \one_i \right) + T_1 \one_\T = 0$.
    
    For $j \in \J$ and $i \in \I$, since
    \begin{equation*}
        0 \leq Q_{xy} (\eps) = \eps \left( R_{ji}^{(1)} \right)_{xy} \qquad \text{ for } x \in \tilde{\CC}_j, y \in \CC_i, \eps \in (0,\eps_0),
    \end{equation*}
    \begin{equation*}
        0 \leq Q_{xy} (\eps) = \eps \left( \tilde{U}_{j}^{(1)} \right)_{xy} \qquad \text{ for } x \in \tilde{\CC}_j, y \in \tilde{\T}, \eps \in (0,\eps_0),
    \end{equation*}
    we have $R_{ji}^{(1)} \geq 0$ and $\tilde{U}_{j}^{(1)} \geq 0$, and since also $\tilde{\Pn}_j \geq 0$, $\one_i \geq 0$ and $L_i \geq 0$, we have
    \begin{equation*}
        \left( Q_{\mathcal{R}} \right)_{ji} = \tilde{\Pn}_j \left( R_{ji}^{(1)} \one_i + \tilde{U}_{j}^{(1)} L_i \right) \geq 0.
    \end{equation*}
    For $j \in \J$ and $k \in \J \setminus \{ j \}$, since
    \begin{equation*}
        0 \leq Q_{xy} (\eps) = \eps \left( \tilde{E}_{jk}^{(1)} \right)_{xy} \qquad \text{ for } x \in \tilde{\CC}_j, y \in \tilde{\CC}_k, \eps \in (0,\eps_0),
    \end{equation*}
    we have $\tilde{E}_{jk}^{(1)} \geq 0$, and since also $\tilde{\Pn}_j \geq 0$, $\tilde{\one}_k \geq 0$, $\tilde{U}_{j}^{(1)} \geq 0$ and $\tilde{L}_k \geq 0$, we have
    \begin{equation*}
        \left( Q_{\mathcal{R}} \right)_{jk} = \tilde{\Pn}_j \left( \tilde{E}_{jk}^{(1)} \tilde{\one}_k + \tilde{U}_{j}^{(1)} \tilde{L}_k \right) \geq 0.
    \end{equation*}
    Thus, $Q_{\mathcal{R}}$ is an infinitesimal generator.
\end{proof}

\begin{lemma}
\label{lem:APandMFPT}
    Consider $0 < \eps < \eps_0$. 
    The matrix $T(\eps)$ is invertible. Moreover, for $i \in \I$, $\mathcal{L}_{i} (\eps) = \left( \mathcal{L}_{yi} (\eps) \right)_{y \in \T}$ given in \eqref{eqn:Lyi} is 
    \begin{equation}
    \label{eqn:lemmaLi}
        \mathcal{L}_{i} (\eps) = - T^{-1} (\eps) R_i (\eps) \one_i,
    \end{equation}
    and $h (\eps) = \left( h_{y} (\eps) \right)_{y \in \T}$ given in \eqref{eqn:hy} is 
    \begin{equation}
    \label{eqn:lemmah}
        h (\eps) = - T^{-1} (\eps) \one_\T,
    \end{equation}
    where $\one_\T$ is the vector of all ones with size $|\T|$.
    For $i \in \I$ and $y \in \T$, if $\mathcal{L}_{yi} (\eps) >0$, then $g_{yi} (\eps)$ given in \eqref{eqn:gy} is
    \begin{equation}
    \label{eqn:lemmagi}
        g_{yi} (\eps) = \frac{\phi_{yi} (\eps)}{\mathcal{L}_{yi} (\eps)},
    \end{equation}
    where $\phi_{yi} (\eps) = \left( - T^{-1} (\eps) \mathcal{L}_{i} (\eps) \right)_y$ and $\mathcal{L}_{yi} (\eps) = \left( \mathcal{L}_{i} (\eps) \right)_y = \left( - T^{-1} (\eps) R_i (\eps) \one_i \right)_y$.
\end{lemma}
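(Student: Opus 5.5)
The plan is to use standard first-step analysis for a finite-state continuous time Markov chain killed upon leaving the transient set $\T$. Fix $0<\eps<\eps_0$. The restriction of $Q(\eps)$ to $\T\times\T$ is exactly $T(\eps)=T_0+\eps T_1$, and the block $\T\times\CC_i$ is $R_i(\eps)=R_{0,i}+\eps R_{1,i}$.

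\textbf{Invertibility of $T(\eps)$.} Every state of $\T$ is transient for $X^\eps$ and $\X$ is finite, so $\PP^\eps_y[\tau^\eps<\infty]=1$ for every $y\in\T$. In fact $\E^\eps_y[\tau^\eps]<\infty$: from each $y\in\T$ there is a path of positive-rate transitions leaving $\T$, which gives a uniform geometric tail bound on $\tau^\eps$. Suppose $T(\eps)v=0$. Then $v$ is harmonic for the killed chain, so by Dynkin's formula (or optional stopping applied to $v(X^\eps(t\wedge\tau^\eps))$, extended by $0$ outside $\T$),
\[
v_y=\E^\eps_y\bigl[v(X^\eps(t))\one_{\{t<\tau^\eps\}}\bigr]\to 0 \quad \text{as } t\to\infty.
\]
Hence $v=0$. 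Equivalently, $e^{tT(\eps)}$ is the sub-Markov semigroup of the killed chain, it tends to $0$, so all eigenvalues of $T(\eps)$ have negative real part.

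\textbf{Absorption probabilities and mean exit time.} Conditioning on the first jump, $\mathcal{L}_i(\eps)$ satisfies $T(\eps)\mathcal{L}_i(\eps)+R_i(\eps)\one_i=0$. For $y\in\T$, the rate-weighted first-step identity is
\[
\sum_{z\ne y}Q_{yz}(\eps)\bigl(u_z-u_y\bigr)=0,
\]
where $u=1$ on $\CC_i$ and $u=0$ on the other recurrent classes. Rearranging with $Q_{yy}(\eps)=-\sum_{z\ne y}Q_{yz}(\eps)$ gives the stated system, and invertibility gives \eqref{eqn:lemmaLi}. In the same way, the mean exit time satisfies $T(\eps)h(\eps)+\one_\T=0$, since the expected holding time at $y$ is $1/|Q_{yy}(\eps)|$. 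This yields \eqref{eqn:lemmah}; its finiteness was established in the previous step.

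\textbf{Conditional exit time.} Write $\phi_{yi}(\eps)=\E^\eps_y[\tau^\eps\one_{\{X^\eps(\tau^\eps)\in\CC_i\}}]$, so that $g_{yi}=\phi_{yi}/\mathcal{L}_{yi}$ whenever $\mathcal{L}_{yi}>0$. The step that needs the most care is deriving the linear system for $\phi_i$. I would use
\[
\tau^\eps\one_{\{X^\eps(\tau^\eps)\in\CC_i\}}=\int_0^\infty \one_{\{t<\tau^\eps\}}\one_{\{X^\eps(\tau^\eps)\in\CC_i\}}\,dt.
\]
Take expectations and apply the Markov property at time $t$ to get
\[
\phi_{yi}=\int_0^\infty\bigl(e^{tT(\eps)}\mathcal{L}_i(\eps)\bigr)_y\,dt=\bigl(-T^{-1}(\eps)\mathcal{L}_i(\eps)\bigr)_y,
\]
using $\int_0^\infty e^{tT}\,dt=-T^{-1}$, which converges because the spectrum of $T(\eps)$ lies in the open left half-plane. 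The same semigroup argument, with $\mathcal{L}_i$ replaced by $\one_\T$, reproduces \eqref{eqn:lemmah}. This gives \eqref{eqn:lemmagi}.
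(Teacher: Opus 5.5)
Your proof is correct. Its skeleton matches the paper's: invertibility of $T(\eps)$, linear systems for $\mathcal{L}_i(\eps)$ and $h(\eps)$ by first-step analysis, then $g_{yi}=\phi_{yi}/\mathcal{L}_{yi}$. Two steps are argued differently.

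\emph{Invertibility.} The paper does not prove this; it cites Lemma S.4 of Bruno et al. You derive it directly from transience, using Dynkin's formula on a null vector extended by zero outside $\T$. This also gives the stronger fact that the spectrum of $T(\eps)$ lies in the open left half-plane.

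\emph{The quantity $\phi_i(\eps)$.} The paper again uses first-step analysis. It splits $\tau^\eps$ at the first jump, uses that the holding time is independent of the jump destination, and obtains $T(\eps)\phi_i(\eps)+\mathcal{L}_i(\eps)=0$. You instead apply the Markov property at deterministic times to get $\phi_i(\eps)=\int_0^\infty e^{tT(\eps)}\mathcal{L}_i(\eps)\,dt$, that is, $\phi_{yi}(\eps)=\E^\eps_y\bigl[\int_0^{\tau^\eps}\mathcal{L}_{X^\eps(t),i}(\eps)\,dt\bigr]$. This shows that $h(\eps)$ and $\phi_i(\eps)$ are both images under the Green's operator $-T^{-1}(\eps)$ of the killed chain, and convergence comes free from your spectral bound.

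\emph{Trade-off.} Your route relies on identifying $e^{tT(\eps)}$ with the killed transition function $\PP^\eps_y[X^\eps(t)=z,\,t<\tau^\eps]$. That is standard for finite chains, but it is an extra ingredient. The paper's route stays within jump-chain arguments and treats all three quantities uniformly as solutions of $T(\eps)u+f=0$ for suitable forcing $f$.
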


\begin{proof}
    When $\eps > 0$, $\T$ is the set of all transient states in $X^\eps$. By Lemma S.4 in Bruno et al. \cite{bib:epifinite}, we have that $T(\eps)$ is invertible. The formulae for $\mathcal{L}_{i} (\eps)$ and $h (\eps)$ can be derived through first step analysis (see a proof for this, for example, in Lemma S.4 and Equation (3.2) in Bruno et al. \cite{bib:epifinite}). 
    The formula for $g_{yi} (\eps)$ can be obtained through first step analysis for $\phi_{yi} (\eps) = \E_y \left[ \tau^\eps \one_{ \{ X^\eps (\tau^\eps) \in \CC_i \} } \right]$, where one gets $\phi_i (\eps) = \left(\phi_{yi} (\eps) \right)_{y \in \T} = - T^{-1} (\eps) \mathcal{L}_{i} (\eps)$, along with the fact that $g_{yi} (\eps) = \E_y^\eps \left[ \tau^\eps | X^\eps (\tau^\eps) \in \CC_i \right] = \frac{\E_y^\eps \left[ \tau^\eps \one_{ \{ X^\eps (\tau^\eps) \in \CC_i \} } \right]} {\PP_y^\eps \left[  X^\eps (\tau^\eps) \in \CC_i \right]} = \frac{\phi_{yi} (\eps)}{\mathcal{L}_{yi} (\eps)}$.
\end{proof}

\begin{lemma}
\label{lem:AP0}
    The matrix $\tilde{T}^{(0)}$ is invertible. Moreover, for $i \in \I$ and $j \in \J$,
    \begin{equation*}
        L_i = - \left(\tilde{T}^{(0)} \right)^{-1} R_i^{(0)} \one_i \qquad \text{ and } \qquad \tilde{L}_j = - \left(\tilde{T}^{(0)} \right)^{-1} \tilde{R}_j^{(0)} \tilde{\one}_j.
    \end{equation*}
\end{lemma}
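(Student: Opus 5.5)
Since $\tilde{\T}$ is exactly the set of transient states of $X^0$, the plan is to view $\tilde{T}^{(0)}$ as the sub-generator of $X^0$ killed on leaving $\tilde{\T}$ and then run the standard first-step analysis. This is the $\eps=0$ analogue of Lemma \ref{lem:APandMFPT}. Looking at the block form \eqref{eqn:Q0}, the rows of $Q^{(0)}$ indexed by $\tilde{\T}$ are $(R_1^{(0)},\dots,R_\numI^{(0)},\tilde{R}_1^{(0)},\dots,\tilde{R}_\numJ^{(0)},\tilde{T}^{(0)})$. The rows indexed by each $\CC_i$ and each $\tilde{\CC}_j$ have no entries leading into $\tilde{\T}$, which reflects that these sets are closed for $X^0$. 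If $\tilde{\T}=\emptyset$ there is nothing to prove, so assume it is nonempty.

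For invertibility, I would argue as in Lemma S.4 of Bruno et al.\ \cite{bib:epifinite}, which is already used in Lemma \ref{lem:APandMFPT}. Every state of $\tilde{\T}$ is transient for $X^0$ and the state space is finite, so from any $y\in\tilde{\T}$ the chain leaves $\tilde{\T}$ almost surely and $\E^0_y[\tau^0]<\infty$. The semigroup $e^{t\tilde{T}^{(0)}}$ gives $(e^{t\tilde{T}^{(0)}})_{yz}=\PP^0_y[X^0(t)=z,\ t<\tau^0]$, and these entries tend to $0$ as $t\to\infty$. Hence every eigenvalue of $\tilde{T}^{(0)}$ has negative real part, and in particular $\tilde{T}^{(0)}$ is invertible. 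An alternative route is to show that $\tilde{T}^{(0)}v=0$ forces $v=0$, via a maximum principle: $\tilde{T}^{(0)}$ is weakly diagonally dominant, and from every state of $\tilde{\T}$ there is a path to a row with strict dominance. I would cite the lemma rather than reprove either argument.

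For the absorption probabilities, fix $i$ and set $u_y=\PP^0_y[X^0(\tau^0)\in\CC_i]$ for $y\in\tilde{\T}$. Since $\CC_i$ and the $\tilde{\CC}_j$ are closed under $X^0$, the exit state $X^0(\tau^0)$ lies in exactly one of them. Conditioning on the first jump from $y$ gives $\sum_{z\in\tilde{\T}}\tilde{T}^{(0)}_{yz}u_z+\sum_{x\in\CC_i}(R_i^{(0)})_{yx}=0$, because a jump into $\CC_i$ contributes $1$ and a jump into some $\tilde{\CC}_j$ or some other $\CC_{i'}$ contributes $0$. In vector form this reads $\tilde{T}^{(0)}L_i+R_i^{(0)}\one_i=0$, and multiplying by $(\tilde{T}^{(0)})^{-1}$ gives the stated formula. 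The identical computation with $\tilde{\CC}_j$ in place of $\CC_i$ yields $\tilde{T}^{(0)}\tilde{L}_j+\tilde{R}_j^{(0)}\tilde{\one}_j=0$.

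The only real obstacle is the invertibility step, which is what makes the solution of the linear system unique. I would settle it by citing Lemma S.4 of \cite{bib:epifinite}, applied to the chain $X^0$ with transient set $\tilde{\T}$; that lemma does not need irreducibility outside the transient set.
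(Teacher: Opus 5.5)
Your proposal is correct and takes the same route as the paper, which proves this lemma by analogy with Lemma~\ref{lem:APandMFPT}: invertibility of $\tilde{T}^{(0)}$ via Lemma S.4 of \cite{bib:epifinite}, and the two formulas via first-step analysis, exactly as you do. One small imprecision is that $\tilde{\T}$ need not be \emph{all} the transient states of $X^0$, since states inside some $\CC_i$ may become transient at $\eps=0$; this is harmless, because the $\CC_i$ and $\tilde{\CC}_j$ are closed under $X^0$, so your killed-semigroup argument still gives invertibility of $\tilde{T}^{(0)}$.
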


\begin{proof}
    The proof is similar to that of Lemma \ref{lem:APandMFPT}.
\end{proof}

\begin{lemma}
\label{lem:MTWInvertible}
    Suppose Assumption \ref{ass:QRstructure} holds. Then, $\Mo_0 T_1 \Wo_0$ is invertible. 
    Moreover, for $i \in \I$, $\hat{\mathcal{L}}_{i} = \left( \hat{\mathcal{L}}_{ji} \right)_{j \in \J} = - \left( \Mo_0 T_1 \Wo_0 \right)^{-1} J_i$, 
    and $\hat{h} = \left( \hat{h}_j \right)_{j \in \J} = - \left(\Mo_0 T_1 \Wo_0 \right)^{-1} \one_\J$ 
    where $\one_\J$ is the vector of all ones with size $|\J|$.
    For $j \in \J$ and $i \in \I$, if $\hat{\mathcal{L}}_{ji} > 0$, then $\hat{g}_{ji} =\frac{\hat{\phi}_{ji}}{\hat{\mathcal{L}}_{ji}}$
    where $\hat{\phi}_{ji} = \left( - \left( \Mo_0 T_1 \Wo_0 \right)^{-1} \hat{\mathcal{L}}_{i} \right)_j$ and $\hat{\mathcal{L}}_{ji} = \left( - \left( \Mo_0 T_1 \Wo_0 \right)^{-1} J_i \right)_j$.
\end{lemma}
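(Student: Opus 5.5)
The plan is to recognise $\Mo_0 T_1 \Wo_0$ as the $\J\times\J$ block of the generator $Q_{\mathcal{R}}$ in \eqref{eqn:QR}. Once that is in place, every claim of the lemma becomes a standard fact about absorbing continuous time Markov chains, applied to $\hat{X}_\mathcal{R}$. By Lemma \ref{lem:QR}, $Q_{\mathcal{R}}$ is an infinitesimal generator whose states in $\I$ are absorbing, and by Assumption \ref{ass:QRstructure} every state in $\J$ is transient. So $\hat{X}_\mathcal{R}$ is a finite chain with transient set $\J$, transient-to-transient block $B := \Mo_0 T_1 \Wo_0$, and transient-to-$i$ column $J_i$. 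This is exactly the setting of Lemma \ref{lem:APandMFPT}, with $\T$ replaced by $\J$, $T(\eps)$ by $B$, and $R_i(\eps)\one_i$ by $J_i$ (a single absorbing state $i$, so $\one_i$ is the scalar $1$).

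\emph{Invertibility.} I would argue directly rather than rely on citation alone, although one can also invoke Lemma S.4 of \cite{bib:epifinite}, as in Lemma \ref{lem:APandMFPT}. The matrix $B$ has nonnegative off-diagonal entries and nonpositive row sums, since $B\one_\J = -\sum_i J_i \le 0$. Transience of every $j\in\J$ means that from each $j$ there is a path in the transition graph of $Q_{\mathcal{R}}$ to some state of $\I$. Let $j^*$ maximise $|v_j|$ for a putative $v\neq 0$ with $Bv=0$. A standard weak diagonal dominance argument then propagates maximality along that path until it reaches a row with strictly negative row sum, which gives a contradiction. Equivalently, $e^{tB}\to 0$ because $\hat{\PP}_j[\hat\tau>t]\to 0$, so $0$ is not an eigenvalue of $B$. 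This step, which is that transience of all of $\J$ is equivalent to nonsingularity of the sub-generator, is the only place needing care; everything else is routine.

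\emph{Formulas.} First step analysis on $\hat{X}_\mathcal{R}$ gives, for $j\in\J$, $\sum_{k\in\J} B_{jk}\hat{\mathcal{L}}_{ki} + (J_i)_j = 0$. Hence $\hat{\mathcal{L}}_i = -B^{-1}J_i$. Similarly, $\sum_k B_{jk}\hat h_k + 1 = 0$, so $\hat h = -B^{-1}\one_\J$, and finiteness of $\hat h$ follows from transience.

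For the conditional time, set $\hat\phi_{ji} = \hat{\E}_j[\hat\tau\,\one_{\{\hat{X}_\mathcal{R}(\hat\tau)=i\}}]$. Conditioning on the first holding time, whose mean is $1/|B_{jj}|$, and the first jump, the reward accrued before absorption in $i$ is $\hat\tau$ weighted by the indicator, which leads to $B\hat\phi_i + \hat{\mathcal{L}}_i = 0$. (Alternatively, use the Feynman--Kac type identity $\hat\phi_i = \int_0^\infty e^{tB}\hat{\mathcal{L}}_i\,dt = -B^{-1}\hat{\mathcal{L}}_i$, valid because $e^{tB}\to 0$ exponentially.) Finally, $\hat g_{ji} = \hat\phi_{ji}/\hat{\mathcal{L}}_{ji}$ whenever $\hat{\mathcal{L}}_{ji}>0$, by the definition of conditional expectation.
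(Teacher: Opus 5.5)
Your proposal is correct and takes essentially the paper's route. The paper's proof is just ``similar to Lemma \ref{lem:APandMFPT}'': it treats $\Mo_0 T_1 \Wo_0$ as the transient-to-transient block of the generator of $\hat{X}_\mathcal{R}$, obtains invertibility from the transience of $\J$ (via Lemma S.4 of \cite{bib:epifinite}), and gets the formulas for $\hat{\mathcal{L}}_i$, $\hat h$, $\hat\phi_i$ and $\hat g_{ji}$ by first step analysis. Your self-contained invertibility argument (weak diagonal dominance along a path to $\I$, or $e^{t \Mo_0 T_1 \Wo_0} \to 0$) simply fills in what the paper delegates to that citation.
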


\begin{proof}
    The proof is similar to that of Lemma \ref{lem:APandMFPT}.
\end{proof}

\begin{lemma}
\label{lem:MWbases}
    The null space of $T_0$ has dimension $\numJ$. Moreover, $\Mo_0 T_0 = 0$ and $T_0 \Wo_0 = 0$.
\end{lemma}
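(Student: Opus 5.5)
We need to prove that $\Mo_0 T_0 = 0$ and $T_0 \Wo_0 = 0$, and that $\dim \ker T_0 = \numJ$. The plan is to read these off from the block structure of $T_0$ together with the defining properties of $\tilde{\Pn}_j$, $\tilde{\one}_j$ and $\tilde{L}_j$.

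Restricted to $\T$, the matrix $T_0$ has block form: the diagonal blocks are $\tilde{E}^{(0)}_{jj}$ for $j \in \J$, the bottom row is $(\tilde{R}^{(0)}_1,\dots,\tilde{R}^{(0)}_{\numJ},\tilde{T}^{(0)})$, and all other blocks are zero. In particular, the rows indexed by $\tilde{\CC}_j$ vanish outside the $(j,j)$ block, because each $\tilde{\CC}_j$ is closed for $X^0$.

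\textbf{Step 1: $\Mo_0 T_0 = 0$.} The $j$-th row of $\Mo_0$ is $(0,\dots,\tilde{\Pn}_j,\dots,0,0)$. Multiplying it by $T_0$ gives $\tilde{\Pn}_j$ times the $\tilde{\CC}_j$ rows of $T_0$, which is $(0,\dots,\tilde{\Pn}_j\tilde{E}^{(0)}_{jj},\dots,0)$. This vanishes because $\tilde{E}^{(0)}_{jj}$ is the generator of $X^0$ restricted to the closed class $\tilde{\CC}_j$ and $\tilde{\Pn}_j$ is its stationary distribution.

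\textbf{Step 2: $T_0 \Wo_0 = 0$.} Take the $j$-th column of $\Wo_0$. In the $\tilde{\CC}_k$ rows the product equals $\tilde{E}^{(0)}_{kk}\tilde{\one}_j\delta_{jk}$, which is $0$ since the rows of $\tilde{E}^{(0)}_{jj}$ sum to zero (the class is closed). In the $\tilde{\T}$ rows the product equals $\tilde{R}^{(0)}_j\tilde{\one}_j + \tilde{T}^{(0)}\tilde{L}_j$. This is $0$ by Lemma~\ref{lem:AP0}, which gives $\tilde{L}_j = -(\tilde{T}^{(0)})^{-1}\tilde{R}^{(0)}_j\tilde{\one}_j$.

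\textbf{Step 3: the dimension count.} The columns of $\Wo_0$ are linearly independent because their $\tilde{\CC}$-parts have disjoint supports. Hence $\dim\ker T_0 \ge \numJ$.

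For the reverse inequality, let $T_0 v = 0$ and write $v = (v_1,\dots,v_{\numJ},v_{\tilde{\T}})$.
\begin{itemize}
\item For each $j$ we have $\tilde{E}^{(0)}_{jj} v_j = 0$. Since $\tilde{E}^{(0)}_{jj}$ is an irreducible generator on the recurrent class $\tilde{\CC}_j$, its kernel consists of constant vectors, so $v_j = c_j\tilde{\one}_j$. To justify this, apply the maximum principle to the irreducible finite chain, or use the fact that the kernel of an irreducible generator is one-dimensional.
\item The $\tilde{\T}$ rows then give $\sum_j c_j\tilde{R}^{(0)}_j\tilde{\one}_j + \tilde{T}^{(0)}v_{\tilde{\T}} = 0$. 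Since $\tilde{T}^{(0)}$ is invertible by Lemma~\ref{lem:AP0}, this forces $v_{\tilde{\T}} = \sum_j c_j\tilde{L}_j$.
\end{itemize}
Therefore $v = \Wo_0 c$ with $c = (c_1,\dots,c_{\numJ})$, so $\ker T_0 = \operatorname{col}(\Wo_0)$ and $\dim\ker T_0 = \numJ$.

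The only step that is not purely formal is the claim that the kernel of an irreducible generator is spanned by $\one$. This is standard Perron–Frobenius fact, and I would cite it directly. If $\numJ = 0$, the statement reduces to the invertibility of $T_0 = \tilde{T}^{(0)}$, which is Lemma~\ref{lem:AP0}.
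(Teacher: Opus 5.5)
Your proof is correct and follows the paper's route. The paper gets $\Mo_0 T_0 = 0$ from $\tilde{\Pn}_j \tilde{E}^{(0)}_{jj} = 0$, $T_0 \Wo_0 = 0$ from $\tilde{E}^{(0)}_{jj}\tilde{\one}_j = 0$ together with Lemma~\ref{lem:AP0}, and the dimension count from the one-dimensional null spaces of the $\tilde{E}^{(0)}_{jj}$ plus the invertibility of $\tilde{T}^{(0)}$; your Step~3 just spells out the block-triangular back-substitution behind that last step, showing that the null space of $T_0$ is exactly the column space of $\Wo_0$, which the paper leaves implicit.
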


\begin{proof}
    For $j \in \J$, since $\tilde{\CC}_j$ is a recurrent class in $X^0$, the dimension of the null space of $\tilde{E}_{jj}^{(0)}$ is $1$. Since also $\tilde{T}^{(0)}$ is invertible by Lemma \ref{lem:AP0}, the dimension of the null space of $T_0$ is $\abs{\J} = \numJ$.
    For $j \in \J$, since $\tilde{\Pn}_j$ is the stationary distribution for the recurrent class $\tilde{\CC}_j$ in $X^0$, we have $\tilde{\Pn}_j \tilde{E}_{jj}^{(0)} = 0$, and so $\Mo_0 T_0 = 0$.
    For $j \in \J$, since $\tilde{E}_{jj}^{(0)} \tilde{\one}_{j} = 0$ and by Lemma \ref{lem:AP0}, $\tilde{R}_j^{(0)} \tilde{\one}_{j} + \tilde{T}^{(0)} \tilde{L}_{j} = \tilde{R}_j^{(0)} \tilde{\one}_{j} + \tilde{T}^{(0)} \left(\tilde{T}^{(0)} \right)^{-1} \tilde{R}_j^{(0)} \tilde{\one}_j = 0$, we have $T_0 \Wo_0 = 0$.
\end{proof}

\begin{lemma}
\label{lem:TepsInverse}
    Suppose Assumption \ref{ass:QRstructure} holds. There exists $\eps_1 \in (0,\eps_0)$ such that
    \begin{equation}
        \label{eqn:TLaurentSeries}
        T^{-1}(\eps) = \sum_{k=-1}^{\infty} \eps^k B^{(k)}, \quad  0 < \eps < \eps_1,
    \end{equation}
    where $\{B^{(k)}:\: k \geq -1\}$ is a sequence of matrices in $\R^{|\T| \times |\T|}$ and $\sum_{k = -1}^{\infty} \eps^k\norm{B^{(k)}} < \infty$ for each $0 < \eps < \eps_1$. Furthermore,
    \begin{equation}
    \label{eqn:B-1Formula}
        B^{(-1)} = \Wo_0 \left(\Mo_0 T_1 \Wo_0 \right)^{-1} \Mo_0
    \end{equation}
    and 
    \begin{equation}
    \label{eqn:B0Formula}
        B^{(0)} = - \left( I - B^{(-1)} T_1 \right) \left( \left( \Wo_0 \Mo_0 -T_0 \right)^{-1} - \Wo_0 \Mo_0 \right)
        \left( I - T_1 B^{(-1)}\right).
    \end{equation}
\end{lemma}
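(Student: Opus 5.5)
We will derive the Laurent expansion of $T^{-1}(\eps)=(T_0+\eps T_1)^{-1}$ with the classical reduction of singular perturbation theory (in the spirit of Avrachenkov et al.\ \cite{Avrachenkov2013}), using the bases from Lemma \ref{lem:MWbases}. First, we note that $\Mo_0\Wo_0=I_{\J}$: the block $\tilde{\Pn}_j\tilde{\one}_j=1$, and the $\tilde{\T}$ columns of $\Mo_0$ vanish. Hence $P:=\Wo_0\Mo_0$ is a projection. By Lemma \ref{lem:MWbases}, the rows of $\Mo_0$ span the left null space of $T_0$, and the columns of $\Wo_0$ span its right null space. We then show that $T_0-P$ is invertible, which is equivalent to the invertibility of $\Wo_0\Mo_0-T_0$. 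Suppose $(T_0-P)v=0$. Multiplying on the left by $\Mo_0$ gives $\Mo_0 v=0$, so $Pv=0$ and $T_0v=0$. Then $v=\Wo_0 c$, and $c=\Mo_0\Wo_0 c=\Mo_0 v=0$. In particular, $0$ is a semisimple eigenvalue of $T_0$: its index is one, and $\ker T_0\oplus \mathrm{range}\, T_0$ is the whole space. The matrix $Z:=(T_0-P)^{-1}+P$ then plays the role of a group (deviation) inverse. Using $(T_0-P)\Wo_0=-\Wo_0$ and $\Mo_0(T_0-P)=-\Mo_0$, we get $Z\Wo_0=0$, $\Mo_0 Z=0$, and $ZT_0=T_0Z=I-P$.

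Next, we establish the existence and analyticity of the expansion. We decompose the space along $\ker T_0\oplus\mathrm{range}\,T_0$ using the change of basis $S=[\Wo_0\ \ V]$, where the columns of $V$ span $\mathrm{range}\,T_0=\ker \Mo_0$. In this basis,
\[
S^{-1}(T_0+\eps T_1)S=\begin{pmatrix}\eps A_{11} & \eps A_{12}\\ \eps A_{21} & N+\eps A_{22}\end{pmatrix},
\]
where $N$ is invertible and $A_{11}=\Mo_0T_1\Wo_0$. By Lemma \ref{lem:MTWInvertible}, $A_{11}$ is invertible. Factoring $\eps$ out of the first block row and taking a Schur complement with respect to the $N$ block shows that the resulting matrix is an analytic, invertible perturbation of $\begin{pmatrix}A_{11}&A_{12}\\0&N\end{pmatrix}$ at $\eps=0$. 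Its inverse is therefore a convergent power series for small $\eps$. Undoing the factor $\eps$ produces at most a simple pole. This proves \eqref{eqn:TLaurentSeries} with absolute convergence for $\eps<\eps_1$. Alternatively, and we expect this to be slightly cleaner, we will use the identity $T^{-1}(\eps)=(T(\eps)-\eps P)^{-1}$-type manipulations. The block argument already suffices, since convergence follows from Neumann series bounds.

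Finally, we identify the coefficients. We substitute the expansion into $(T_0+\eps T_1)\sum_k\eps^kB^{(k)}=I$, together with the left-sided identity, and match powers of $\eps$:
\[
T_0B^{(-1)}=0,\qquad T_0B^{(0)}+T_1B^{(-1)}=I,\qquad B^{(-1)}T_0=0,\qquad B^{(0)}T_0+B^{(-1)}T_1=I.
\]
The first and third equations force $B^{(-1)}=\Wo_0 C\Mo_0$ for some matrix $C$. Multiplying the second equation on the left by $\Mo_0$ gives $\Mo_0T_1\Wo_0C\Mo_0=\Mo_0$, so $C=(\Mo_0T_1\Wo_0)^{-1}$. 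This gives \eqref{eqn:B-1Formula}.

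For $B^{(0)}$, we apply $Z$ to the second equation. Since $ZT_0=I-P$, this yields $(I-P)B^{(0)}=Z(I-T_1B^{(-1)})$. Symmetrically, the fourth equation gives $B^{(0)}(I-P)=(I-B^{(-1)}T_1)Z$. The remaining piece $PB^{(0)}P$ is obtained from the order-$\eps^1$ equations, $\Mo_0(T_0B^{(1)}+T_1B^{(0)})\Wo_0=0$, so that $\Mo_0T_1B^{(0)}\Wo_0=0$. Combining these three pieces and using $B^{(-1)}T_1P=P$ and $PT_1B^{(-1)}=P$, which follow from $\Mo_0\Wo_0=I$, the expression should collapse to $B^{(0)}=(I-B^{(-1)}T_1)Z(I-T_1B^{(-1)})$ up to sign. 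This is \eqref{eqn:B0Formula} once we recall $Z=(T_0-P)^{-1}+P=-\big((P-T_0)^{-1}-P\big)$.

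We expect this last algebraic assembly, keeping track of signs and the projection identities, to be the main obstacle. We will verify it by checking the candidate formula against all three projected equations, which together determine $B^{(0)}$ uniquely.
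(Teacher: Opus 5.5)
Your proposal is correct, but it takes a partly different route from the paper.

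\textbf{Existence.} The paper cites Theorem 2.4 of Avrachenkov et al.\ and Proposition 1.1 of Bruno et al.\ for a Laurent series of some pole order $p$. It then forces $p=1$ by a leading-coefficient argument: $B^{(-p)}=C^{(-p)}\Mo_0$, and the $\eps^{-p+1}$ coefficient of $T^{-1}(\eps)T(\eps)$ applied to $\Wo_0$ is $C^{(-p)}\Mo_0T_1\Wo_0\neq 0$. Your change of basis $[\Wo_0\ V]$ along $\ker T_0\oplus\mathrm{range}\,T_0$ instead proves existence, absolute convergence and pole order at most one in one self-contained step. This requires first establishing the index-one structure, which you do correctly via invertibility of $T_0-P$, $P=\Wo_0\Mo_0$.

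\textbf{The coefficient $B^{(0)}$.} The paper uses only $T^{-1}T=I$. It takes the general solution $C^{(0)}\Mo_0+(I-B^{(-1)}T_1)(T_0)^\dagger$ with an arbitrary generalized inverse, fixes $C^{(0)}$ from the order-$\eps$ equation, and then verifies that $-\big((\Wo_0\Mo_0-T_0)^{-1}-\Wo_0\Mo_0\big)$ is a generalized inverse. That verification goes through the explicit block form \eqref{eqn:GIforD0} built from the deviation matrices $\tilde D_j$. Your use of the group inverse $Z$ together with both one-sided equations is cleaner.

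The step you flag as the main obstacle closes with no sign ambiguity. The matrix $(I-B^{(-1)}T_1)Z(I-T_1B^{(-1)})$ satisfies your three projected equations exactly, using $PZ=ZP=0$, $PB^{(-1)}=B^{(-1)}P=B^{(-1)}$ and $\Mo_0T_1(I-B^{(-1)}T_1)=0$.

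What the paper's explicit block form buys is reuse. The SI proof of Theorem \ref{thm:AP} uses it to get $\big((\Wo_0\Mo_0-T_0)^{-1}-\Wo_0\Mo_0\big)R_{0,i}\one_i=\mathcal{L}_{0,i}$. In your framework this follows equally quickly from $(T_0-P)\mathcal{L}_{0,i}=-R_{0,i}\one_i$ and $PR_{0,i}=0$.

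\textbf{Minor points.} Drop the aside about ``$T^{-1}(\eps)=(T(\eps)-\eps P)^{-1}$-type manipulations'', since that is not an identity and the block argument already suffices. Also record the case $|\J|=0$, where $T_0$ is invertible, $B^{(-1)}=0$ and $B^{(0)}=T_0^{-1}$, as the paper does.
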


\begin{remark}
    In Lemma \ref{lem:TepsInverse}, we slightly abuse the notation when $\abs{\J} = 0$, in which case $\Wo_0$ and $\Mo_0$ are degenerate, $T_0$ is invertible, and \eqref{eqn:B-1Formula}--\eqref{eqn:B0Formula} become $B^{(-1)} = 0$ and $B^{(0)} = (T_0)^{-1}$, which means $T^{-1}(\eps) = (T_0)^{-1} + O(\eps)$.
\end{remark}

\begin{proof}
    Since $T(\eps) = T_0 + \eps T_1$ is invertible for each $0 < \eps < \eps_0$ (see Lemma \ref{lem:APandMFPT}), we have by Theorem 2.4 in Avrachenkov el al. \cite{Avrachenkov2013} and Proposition 1.1 in Bruno et al. \cite{bib:epifinite} that
    there exists $\eps_1 \in (0,\eps_0)$ and $p \in \Z_+$ such that $T^{-1}(\eps)$ has a Laurent series expansion 
    \begin{equation}
        \label{eqn:TLaurentSeriesP}
        T^{-1}(\eps) = \sum_{k=-p}^{\infty} \eps^k B^{(k)}, \quad  0 < \eps < \eps_1.
    \end{equation}

    We first consider the case where $\J$ is non-empty. By Lemma \ref{lem:MTWInvertible}, we have that $\Mo_0 T_1 \Wo_0$ is invertible under Assumption \ref{ass:QRstructure}. We are going to show that $\Mo_0 T_1 \Wo_0$ being invertible implies that $p=1$, which is stated in Theorem 2.9 in Avrachenkov el al. \cite{Avrachenkov2013}. We further give explicit expression for the first two leading terms in \eqref{eqn:TLaurentSeries} where we utilize the deviation matrix for $X^0$ to serve the purpose of the generalized inverse in \cite{Avrachenkov2013}. 
    
    For this, since $T^{-1}(\eps) T(\eps) = I$, we have by \eqref{eqn:TLaurentSeriesP} that if $p=0$, then $B^{(-p)} T_0 = I$.
    Since the dimension of the null space of $T_0$ is $|\J| \geq 1$ (see Lemma \ref{lem:MWbases}), there is no $B^{(-p)} \in \R^{|\T| \times |\T|}$ such that $B^{(-p)} T_0 = I$. Thus, $p>0$ and
    \begin{equation*}
        B^{(-p)} T_0 = 0.
    \end{equation*}
    Since rows of $\Mo_0$ form a basis for the null space of the transpose of $T_0$ (see Lemma \ref{lem:MWbases}), we have $B^{(-p)} = C^{(-p)} \Mo_0$ for some non-zero $C^{(-p)} \in \R^{|\T| \times |\J|}$. Since $T_0 \Wo_0 = 0$ (see Lemma \ref{lem:MWbases}), we have
    \begin{equation}
    \label{eqn:Cp}
        \left( B^{(-p+1)} T_0 + B^{(-p)} T_1 \right) \Wo_0 = \left( B^{(-p+1)} T_0 + C^{(-p)} \Mo_0 T_1 \right) \Wo_0 = 0 + C^{(-p)} \Mo_0 T_1 \Wo_0.
    \end{equation}
    Since $\Mo_0 T_1 \Wo_0$ is invertible (see Lemma \ref{lem:MTWInvertible}) and $C^{(-p)} \neq 0$, \eqref{eqn:Cp} cannot be zero. Thus, $\left(B^{(-p+1)} T_0 + B^{(-p)} T_1 \right) \eps^{-p+1}$ is the leading term of the series expansion for $T^{-1}(\eps) T(\eps)$, 
    which implies the order of the pole is one, i.e., $p=1$, and 
    \begin{equation}
    \label{eqn:Bp}
        B^{(0)} T_0 + B^{(-1)} T_1 = I.
    \end{equation}
    Then, by \eqref{eqn:Cp}--\eqref{eqn:Bp}, we have that $C^{(-1)} = \Wo_0 \left(\Mo_0 T_1 \Wo_0 \right)^{-1}$ and so
    \begin{equation}
    \label{eqn:B-1}
        B^{(-1)} = \Wo_0 \left(\Mo_0 T_1 \Wo_0 \right)^{-1} \Mo_0.
    \end{equation}
    Since rows of $\Mo_0$ form a basis for the null space of the transpose of $T_0$ (see Lemma \ref{lem:MWbases}), we have by \eqref{eqn:Bp} that
    \begin{eqnarray}
    \label{eqn:C0}
        B^{(0)} = C^{(0)} \Mo_0 + \left( I - B^{(-1)} T_1 \right) (T_0)^\dagger,
    \end{eqnarray}
    for some $C^{(0)} \in \R^{|\T| \times |\J|}$, where $(T_0)^\dagger$ is a generalized inverse\footnote{A generalized inverse $(T_0)^\dagger$ of $T_0$ is such that $T_0 (T_0)^\dagger T_0 = T_0$.} of $T_0$.
    Since $T_0 \Wo_0 = 0$ and $B^{(1)} T_0 + B^{(0)} T_1 = 0$ (because $T^{-1}(\eps) T(\eps) = I$), we have $C^{(0)}$ in \eqref{eqn:C0} must satisfies
    \begin{equation*}
        0 = (B^{(1)} T_0 + B^{(0)} T_1) \Wo_0 = 0 + C^{(0)} \Mo_0 T_1 \Wo_0 + \left( I - B^{(-1)} T_1 \right) (T_0)^\dagger T_1 \Wo_0.
    \end{equation*}
    Since $\Mo_0 T_1 \Wo_0$ is invertible, we have $C^{(0)} = - \left( I - B^{(-1)} T_1 \right) (T_0)^\dagger T_1 \Wo_0 \left(\Mo_0 T_1 \Wo_0 \right)^{-1}$, and so
    \small
    \begin{equation*}
        B^{(0)} = \left( I - B^{(-1)} T_1 \right) (T_0)^\dagger 
        \left( - T_1  \Wo_0 \left(\Mo_0 T_1 \Wo_0 \right)^{-1} \Mo_0 + I \right) = \left( I - B^{(-1)} T_1 \right) (T_0)^\dagger 
        \left( I - T_1 B^{(-1)} \right).
    \end{equation*}
    \normalsize
    Lastly, to show $B^{(0)}$ is given by \eqref{eqn:B0Formula}, it suffices to show that $\left( - \left( \left( \Wo_0 \Mo_0 -T_0 \right)^{-1} - \Wo_0 \Mo_0 \right) \right)$ is a generalized inverse of $T_0$. For this, observe that 
    for each $j \in \J$, since $\tilde{\Pn}_j$ is the unique stationary distribution for the recurrent class $\tilde{\CC}_j$ in $X^0$, we have 
    $\tilde{\Pn}_j \tilde{E}_{jj}^{(0)} = 0$, which implies $u = 0$ is the only solution to $\left(\tilde{\one}_{j} \tilde{\Pn}_{j} - \tilde{E}_{jj}^{(0)} \right) u = 0$.
    Thus, $\tilde{\one}_{j} \tilde{\Pn}_{j} - \tilde{E}_{jj}^{(0)}$ is invertible. By Lemma \ref{lem:AP0}, $\tilde{T}^{(0)}$ is invertible. 
    Let $\tilde{D}_j = \left(\tilde{\one}_{j} \tilde{\Pn}_{j} - \tilde{E}_{jj}^{(0)} \right)^{-1} - \tilde{\one}_{j} \tilde{\Pn}_{j}$ be the deviation matrix for a continuous time Markov chain with infinitesimal generator $\tilde{E}_{jj}^{(0)}$. One can check that $\tilde{\Pn}_{j} \tilde{D}_j = 0$ and $\tilde{E}_{jj}^{(0)} \tilde{D}_j \tilde{E}_{jj}^{(0)} = -\tilde{E}_{jj}^{(0)}$.
    Then, using the Schur complement to obtain the inverse of the invertible matrix $\Wo_0 \Mo_0 -T_0$, we have
    \begin{eqnarray}
        && \left( \Wo_0 \Mo_0 -T_0 \right)^{-1} - \Wo_0 \Mo_0 \nonumber \\
        &=& 
        \begin{tikzpicture}[baseline={0ex},mymatrixenv]
        \matrix [mymatrix,inner sep=5pt] (m)  
        {
        \tilde{\one}_{1} \tilde{\Pn}_{1} - \tilde{E}_{11}^{(0)} & 0 & 0 & 0 \\
        0 & \ddots & 0 & \vdots \\
        0 & 0 & \tilde{\one}_{\numJ} \tilde{\Pn}_{\numJ} - \tilde{E}_{\numJ\numJ}^{(0)} & 0 \\
        \tilde{L}_{1} \tilde{\Pn}_{1} - \tilde{R}_1^{(0)} & \cdots & \tilde{L}_{\numJ} \tilde{\Pn}_{\numJ} - \tilde{R}_{\numJ}^{(0)} & - \tilde{T}^{(0)} \\
        };
        \draw[densely dashed] (2,-1.3) -- (2,1.3);
        \draw[densely dashed] (-3,-0.65) -- (3,-0.65);
        \end{tikzpicture}^{-1} 
        -
        \begin{tikzpicture}[baseline={0ex},mymatrixenv]
        \matrix [mymatrix,inner sep=5pt] (m)  
        {
        \tilde{\one}_{1} \tilde{\Pn}_{1}  & 0 & 0 & 0 \\
        0 & \ddots & 0 & \vdots \\
        0 & 0 & \tilde{\one}_{\numJ} \tilde{\Pn}_{\numJ} & 0 \\
        \tilde{L}_{1} \tilde{\Pn}_{1} & \cdots & \tilde{L}_{\numJ} \tilde{\Pn}_{\numJ} & 0 \\
        };
        \draw[densely dashed] (1.2,-1.2) -- (1.2,1.2);
        \draw[densely dashed] (-1.9,-0.6) -- (1.9,-0.6);
        \end{tikzpicture} \nonumber \\
        \quad &=& 
        \begin{tikzpicture}[baseline={0ex},mymatrixenv]
        \matrix [mymatrix,inner sep=5pt] (m)  
        {
        \tilde{D}_1 & 0 & 0 & 0 \\
        0 & \ddots & 0 & \vdots \\
        0 & 0 & \tilde{D}_{\numJ} & 0 \\
        \left( - \tilde{T}^{(0)} \right)^{-1} \left(\tilde{R}_1^{(0)} \tilde{D}_1 - L_1 \tilde{\Pn}_{1} \right) & \cdots & \left( - \tilde{T}^{(0)} \right)^{-1} \left(\tilde{R}_{\numJ}^{(0)} \tilde{D}_{\numJ} - L_{\numJ} \tilde{\Pn}_{\numJ} \right) & \left( - \tilde{T}^{(0)} \right)^{-1} \\
        };
        \draw[densely dashed] (4.4,-1.3) -- (4.4,1.3);
        \draw[densely dashed] (-6.4,-0.45) -- (6.4,-0.45);
        \end{tikzpicture} \qquad \qquad \label{eqn:GIforD0}
    \end{eqnarray}
    where we use the fact that $\tilde{L}_j = - \left(\tilde{T}^{(0)} \right)^{-1} \tilde{R}_j^{(0)} \tilde{\one}_j$ for each $j \in \J$ (see Lemma \ref{lem:AP0}). From \eqref{eqn:GIforD0}, one can check that $T_0 \left( - \left( \left( \Wo_0 \Mo_0 -T_0 \right)^{-1} - \Wo_0 \Mo_0 \right) \right) T_0 = T_0$, and thus $B^{(0)}$ can be given by \eqref{eqn:B0Formula}.

    When $|\J| = 0$, $\T$ is the set of all transient states in $X^0$, and so $T_0$ is invertible by Lemma S.4 in Bruno et al. \cite{bib:epifinite}. Thus, $B^{(-p)} T_0 = I$ and so $p=0$, which is consistent with \eqref{eqn:B-1Formula}--\eqref{eqn:B0Formula}.
\end{proof}

\section*{Acknowledgments}
The authors would like to thank Ruth J. Williams for many helpful discussions and valuable feedback on  this manuscript.

\section*{Author Contributions}
All authors have made substantial intellectual contributions to the study conception, execution, and design of the work. All authors have read and approved the final manuscript.  In addition, the following contributions occurred:  Conceptualization: Yi Fu, Natalia L. Komarova; Methodology: Yi Fu, Natalia L. Komarova; Formal analysis and investigation: Yi Fu; Writing - original draft preparation: Yi Fu; Writing - review and editing: Yi Fu, Natalia L. Komarova;  Supervision: Natalia L. Komarova.

\section*{Access to Code}
The code for generating figures is available at \url{https://github.com/yiiif/MutantFixationInFragmentedPopulations.git}.

\printbibliography[heading=bibintoc, title={References}]
\end{refsection}

\newpage
\begin{refsection}

\section*{Supplementary Information (SI)}

\appendixpageoff
\appendixtitleoff
\renewcommand{\appendixtocname}{Supplementary Information}
\begin{appendices}

\setcounter{section}{19}
\crefalias{section}{supp}
\setcounter{figure}{0}
\renewcommand{\thefigure}{S.\arabic{figure}}

\setcounter{equation}{0}
\renewcommand{\theequation}{S.\arabic{equation}}

\pagenumbering{arabic}
\renewcommand*{\thepage}{\arabic{page}}	

\subsection{Another look at the figures}

\subsubsection{{\color{black} Figure \ref{fig:3DFP}} for probabilities of mutant fixation}
In Figure \ref{fig:3DFP-SI}, we present the heatmaps shown {\color{black} in Figure \ref{fig:3DFP}} but with an alternative color scale. 
\begin{figure}[H]
    \centering
    \includegraphics[width=1\linewidth]{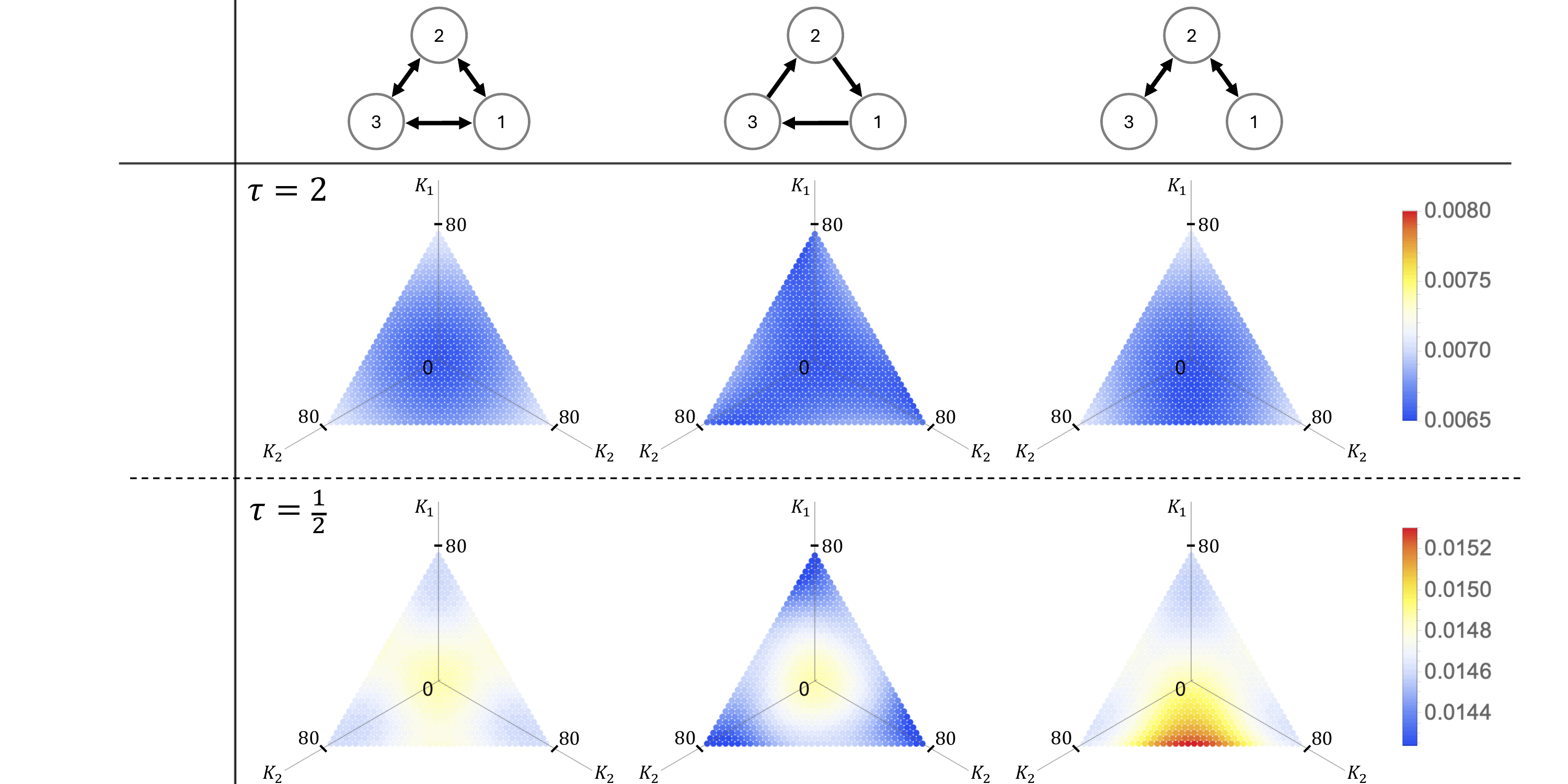}
    \caption{\textbf{Probabilities of mutant fixation for three-deme population with a total capacity $100$.} For parameter values, {\color{black} see Figure \ref{fig:3DFP}.} In each heatmap, we plot the value of {\color{black} the leading term in \eqref{eqn:FPfinal}}, which is approximately the probability of mutant fixation when $\eps$ is small. In the upper panel, we choose the color scale to encodes the values ranging from $0.00650$ to $0.00803$, where the midpoint ($0.007265$) is the probability of mutant fixation for well-mixed population with $\tau=2$ and carrying capacity being $100$. In the lower panel, we choose the color scale to encodes the values ranging from $0.01424$ to $0.01530$, where the midpoint ($0.01477$) is the probability of mutant fixation for well-mixed population with $\tau=1/2$ and carrying capacity being $100$.} 
    \label{fig:3DFP-SI}
\end{figure}

\subsubsection{{\color{black} Figures \ref{fig:3DEFT}--\ref{fig:3DCEFT}} for unconditional and conditional fixation times}

Here, we provide a rationale for the approximately proportional scaling in the fixation times between the $\tau=2$ case and the $\tau=1/2$ case, as seen in {\color{black} Figures \ref{fig:3DEFT}--\ref{fig:3DCEFT}}. We assume the quasi-neutral parameter regime as in {\color{black} section \ref{sec:general}}:
\begin{equation*}
    \rr{m}{\ell} = \tau \rr{w}{\ell}, \qquad \dd{m}{\ell} = \tau \dd{w}{\ell}, \qquad \mi{m}{\ell}{k} = \tau \mi{w}{\ell}{k}.
\end{equation*}
For $\ell \in \{ 1,2,\dots,D \}$, we let $N_\ell = K_\ell \left( 1-\frac{d_w}{r_w} \right)$. 

When the carrying capacity $K_\ell$ for deme $\ell$ is large, if the deme is occupied by the wild-type, the approximated expected size of the deme is
\begin{equation*}
    \size{w}{\ell} \approx K_\ell \left( 1-\frac{d_w}{r_w} \right) = N_\ell,
\end{equation*}
and the probability that a migrant mutant ultimately fixates in the wild-type deme is
\begin{equation*}
    \fp{m}{\ell} \approx \frac{1}{N_\ell} \frac{2}{\tau+1}.
\end{equation*}
Similarly, if the deme is occupied by the mutant, the approximated expected size of the deme is
\begin{equation*}
     \size{m}{\ell} \approx K_\ell \left( 1-\frac{d_m}{r_m} \right) = K_\ell \left( 1-\frac{\tau d_w}{\tau r_w} \right) = N_\ell,
\end{equation*} 
and the probability that a migrant wild-type ultimately fixates in the mutant deme $\ell$ is
\begin{equation*}
    \fp{w}{\ell} \approx \frac{1}{N_\ell} \frac{2}{1/\tau+1}
\end{equation*}
Using these approximations, the infinitesimal generator for $\hat{X}_{\mathcal{R}}$ {\color{black} given by \eqref{eqn:QRk}--\eqref{eqn:QRl}} is such that for $\mathcal{M} \subsetneq \{ 1,2,\cdots, D\}$ and $k \in \{ 1,2,\cdots, D\} \setminus \mathcal{M}$,
\begin{equation*}
    \left(Q_\mathcal{R} \right)_{\mathcal{M},\mathcal{M} \cup \{ k \}} = \sum_{\ell \in \mathcal{M}}  \mi{m}{\ell}{k} \size{m}{\ell} \fp{m}{k} \approx \sum_{\ell \in \mathcal{M}} \tau \mi{w}{\ell}{k} N_\ell \frac{1}{N_k} \frac{2}{\tau+1} = \frac{2\tau}{\tau+1} \sum_{\ell \in \mathcal{M}} \frac{N_\ell}{N_k} \mi{w}{\ell}{k},
\end{equation*}
and for $\mathcal{M} \subseteq \{ 1,2,\cdots, D\}$ such that $\mathcal{M} \neq \emptyset$ and $\ell \in \mathcal{M}$, 
\begin{equation*}
    \left(Q_\mathcal{R} \right)_{\mathcal{M},\mathcal{M} \setminus \{\ell\}} = \sum_{k \notin \mathcal{M}} \mi{w}{k}{\ell} \size{w}{k} \fp{w}{\ell} \approx \sum_{k \notin \mathcal{M}} \mi{w}{k}{\ell} N_k \frac{1}{N_\ell} \frac{2}{1/\tau+1} = \frac{2\tau}{\tau+1} \sum_{k \notin \mathcal{M}} \frac{N_k}{N_\ell} \mi{w}{k}{\ell}.
\end{equation*}
Thus, $\hat{X}_{\mathcal{R}}$ is accelerated (or decelerated) by $\frac{2\tau_0}{\tau_0+1}$ when $\tau=\tau_0$ compared to when $\tau=1$, and so
\begin{equation*}
    \frac{\hat{h}_{\{\ell\}} (\tau_0)}{\hat{h}_{\{\ell\}} (1)} \approx \frac{\tau_0+1}{2\tau_0}  \quad \text{ and } \quad \frac{\hat{g}_{\{\ell\},\{1,2,\dots,D\}} (\tau_0)}{\hat{g}_{\{\ell\},\{1,2,\dots,D\}} (1)} \approx \frac{\tau_0+1}{2\tau_0}.
\end{equation*}
Since $\frac{\fp{m}{\ell} (\tau_0)}{\fp{m}{\ell} (1)} \approx \frac{2}{\tau_0+1}$, we have {\color{black} from \eqref{eqn:EFTfinal}} that the ratio between the expected fixation time for $\tau=\tau_0$ and for $\tau=1$ is approximately $\frac{\fp{m}{\ell} (\tau_0)}{\fp{m}{\ell} (1)}  \frac{\hat{h}_{\{\ell\}} (\tau_0)}{\hat{h}_{\{\ell\}} (1)} \approx \frac{1}{\tau_0}$, and {\color{black} from \eqref{eqn:CEFTfinal}} that the ratio between the conditional expected time for mutant fixation for $\tau=\tau_0$ and for $\tau=1$ is approximately $\frac{\hat{g}_{\{\ell\},\{1,2,\dots,D\}} (\tau_0)}{\hat{g}_{\{\ell\},\{1,2,\dots,D\}} (1)} \approx \frac{\tau_0+1}{2\tau_0}$.

\subsection{Proof of {\color{black} Theorems \ref{thm:AP}--\ref{thm:CMFPT}}}

Recall that $T(\eps) = T_0 + \eps T_1$ and for $i \in \I$, $R_i (\eps) = R_{0,i} + \eps R_{1,i}$. 

\noindent \textbf{Proof of {\color{black} Theorem \ref{thm:AP}}:}

    Consider $i \in \I$. Suppose $\abs{\J} \geq 1$. By {\color{black} Lemmas \ref{lem:APandMFPT} and \ref{lem:TepsInverse}}, 
    \begin{equation}
    \label{eqn:SILi}
        \mathcal{L}_{i} (\eps) = - T^{-1} (\eps) R_i (\eps) \one_i = - \left( B^{(-1)} R_{0,i} \one_i \right) \frac{1}{\eps} - \left( B^{(-1)} R_{1,i} \one_i + B^{(0)} R_{0,i} \one_i \right) + O(\eps),
    \end{equation}
    where 
    \begin{equation}
    \label{eqn:SIB-1}
        B^{(-1)} = \Wo_0 \left(\Mo_0 T_1 \Wo_0 \right)^{-1} \Mo_0
    \end{equation}
    and
    \begin{equation*}
        B^{(0)} = - \left( I - B^{(-1)} T_1 \right) \left( \left( \Wo_0 \Mo_0 -T_0 \right)^{-1} - \Wo_0 \Mo_0 \right) \left( I - T_1 B^{(-1)}\right).
    \end{equation*}
    Since 
    \begin{equation*}
        \Mo_0 R_{0,i} = 
        \begin{tikzpicture}[baseline={-0.5ex},mymatrixenv]
        \matrix [mymatrix,inner sep=5pt] (m)  
        {
        \tilde{\Pn}_{1} & 0 & 0 & 0 \\
        0 & \ddots & 0 & \vdots \\
        0 & 0 & \tilde{\Pn}_{\numJ} & 0 \\
        };
        \draw[densely dashed] (1,-0.9) -- (1,0.9);
        \end{tikzpicture}
        \begin{tikzpicture}[baseline={0ex},mymatrixenv]
        \matrix [mymatrix,inner sep=5pt] (m)  
        {
        0 \\
        \vdots \\
        0 \\
        R_i^{(0)} \\
        };
        \draw[densely dashed] (-0.5,-0.5) -- (0.5,-0.5);
        \end{tikzpicture}
        =0
    \end{equation*}
    and by {\color{black} \eqref{eqn:GIforD0} and Lemma \ref{lem:AP0},} 
    \begin{eqnarray*}
        && \left( \left( \Wo_0 \Mo_0 -T_0 \right)^{-1} - \Wo_0 \Mo_0 \right) R_{0,i} \one_i \\
        &=& 
        \begin{tikzpicture}[baseline={0ex},mymatrixenv]
        \matrix [mymatrix,inner sep=5pt] (m)  
        {
        \tilde{D}_1 & 0 & 0 & 0 \\
        0 & \ddots & 0 & \vdots \\
        0 & 0 & \tilde{D}_{\numJ} & 0 \\
        * & \cdots & * & \left( - \tilde{T}^{(0)} \right)^{-1} \\
        };
        \draw[densely dashed] (0.5,-1.3) -- (0.5,1.3);
        \draw[densely dashed] (-2.4,-0.45) -- (2.4,-0.45);
        \end{tikzpicture}
        \begin{tikzpicture}[baseline={0ex},mymatrixenv]
        \matrix [mymatrix,inner sep=5pt] (m)  
        {
        0 \\
        \vdots \\
        0 \\
        R_i^{(0)} \\
        };
        \draw[densely dashed] (-0.5,-0.5) -- (0.5,-0.5);
        \end{tikzpicture}
        \one_i =
        \begin{tikzpicture}[baseline={0ex},mymatrixenv]
        \matrix [mymatrix,inner sep=5pt] (m)  
        {
        0 \\
        \vdots \\
        0 \\
        - \left( \tilde{T}^{(0)} \right)^{-1} R_i^{(0)} \one_i \\
        };
        \draw[densely dashed] (-1.5,-0.45) -- (1.5,-0.45);
        \end{tikzpicture} = \mathcal{L}_{0,i},
    \end{eqnarray*}
    we have
    \begin{equation*}
        B^{(-1)} R_{0,i} = \Wo_0 \left(\Mo_0 T_1 \Wo_0 \right)^{-1} \Mo_0 R_{0,i} = 0
    \end{equation*}
    and
    \begin{eqnarray*}
        B^{(0)} R_{0,i} \one_i &=& - \left( I - B^{(-1)} T_1 \right) \left( \left( \Wo_0 \Mo_0 -T_0 \right)^{-1} - \Wo_0 \Mo_0 \right) \left( I - T_1 B^{(-1)}\right) R_{0,i} \one_i \\
        &=& - \left( I - B^{(-1)} T_1 \right) \left( \left( \Wo_0 \Mo_0 -T_0 \right)^{-1} - \Wo_0 \Mo_0 \right)  R_{0,i} \one_i \\
        &=& - \left( I - B^{(-1)} T_1 \right) \mathcal{L}_{0,i}.
    \end{eqnarray*}
    Thus, \eqref{eqn:SILi} becomes
    \begin{eqnarray*}
        \mathcal{L}_{i} (\eps) &=& - 0 - \left(B^{(-1)} R_{1,i} \one_i - \left( I - B^{(-1)} T_1 \right) \mathcal{L}_{0,i} \right) + O(\eps) \\
        &=& \mathcal{L}_{0,i} - B^{(-1)} \left(R_{1,i} \one_i + T_1 \mathcal{L}_{0,i} \right) + O(\eps) \\
        &=& \mathcal{L}_{0,i} - \Wo_0 \left(\Mo_0 T_1 \Wo_0 \right)^{-1} \Mo_0 \left(R_{1,i} \one_i + T_1 \mathcal{L}_{0,i} \right) + O(\eps) \\
        &=& \mathcal{L}_{0,i} - \Wo_0 \left(\Mo_0 T_1 \Wo_0 \right)^{-1} J_i + O(\eps),
    \end{eqnarray*}
    where the second equality holds by rearranging the terms, the third equality holds by \eqref{eqn:SIB-1} and the last equality holds by the definition of $J_i$ {\color{black} in \eqref{eqn:Ji}}.
    
    Thus, for $j \in \J$, 
    the probability for $X^\eps$ to be absorbed in $\CC_i$ starting from $y \in \tilde{C}_j$ is
    \begin{equation*}
        \mathcal{L}_{yi} (\eps) = \left( \mathcal{L}_{i} (\eps) \right)_y = 0 - e_j^T \left(\Mo_0 T_1 \Wo_0 \right)^{-1} J_i + O(\eps) = \hat{\mathcal{L}}_{ji} + O(\eps)
    \end{equation*}
    where $e_j \in \R^{\abs{\J}}$ is the standard unit basis vector for $j \in \J$, and the last equality holds {\color{black} by Lemma \ref{lem:MTWInvertible}.} 
    Similarly, the probability for $X^\eps$ to be absorbed in $\CC_i$ starting from $y \in \tilde{\T}$ is
    \begin{eqnarray}
        \mathcal{L}_{yi} (\eps) = \left( \mathcal{L}_{i} (\eps) \right)_y &=& (L_{i})_y - \left( \sum_{j \in \J} (\tilde{L}_j)_y e_j^T \right) \left(\Mo_0 T_1 \Wo_0 \right)^{-1} J_i + O(\eps) \nonumber \\
        &=& (L_{i})_y + \sum_{j \in \J} (\tilde{L}_j)_y \hat{\mathcal{L}}_{ji} + O(\eps). \label{eqn:Lhat_Ttilde}
    \end{eqnarray}

    Suppose $\abs{\J} = 0$. Then, $\T = \tilde{\T}$, and by {\color{black} Lemmas \ref{lem:APandMFPT} and \ref{lem:TepsInverse}}, 
    \begin{equation*}
        \mathcal{L}_{i} (\eps) = - T^{-1} (\eps) R_i (\eps) \one_i = - (T_0)^{-1} R_{0,i} \one_i + O(\eps) = L_{i} + O(\eps).
    \end{equation*}
    which is consistent with the formula in \eqref{eqn:Lhat_Ttilde}.
\qed

\noindent \textbf{Proof of {\color{black} Theorem \ref{thm:MFPT}}:}

    Suppose $\abs{\J} \geq 1$. By {\color{black} Lemmas \ref{lem:APandMFPT} and \ref{lem:TepsInverse}}, 
    \begin{eqnarray*}
        h (\eps) = - T^{-1} (\eps) \one_\T 
        &=& - \frac{1}{\eps} \Wo_0 \left(\Mo_0 T_1 \Wo_0 \right)^{-1} \Mo_0 \one_\T + O(1) \\
        &=& - \frac{1}{\eps} \Wo_0 \left(\Mo_0 T_1 \Wo_0 \right)^{-1} \one_\J + O(1).
    \end{eqnarray*}
    where the last equality holds because $\tilde{\Pn}_{j} \hat{\one}_j=1$ for each $j \in \J$ and so
    \begin{equation*}
        \Mo_0 \one_\T = 
        \begin{tikzpicture}[baseline={-0.5ex},mymatrixenv]
        \matrix [mymatrix,inner sep=5pt] (m)  
        {
        \tilde{\Pn}_{1} & 0 & 0 & 0 \\
        0 & \ddots & 0 & \vdots \\
        0 & 0 & \tilde{\Pn}_{\numJ} & 0 \\
        };
        \draw[densely dashed] (1,-0.9) -- (1,0.9);
        \end{tikzpicture}
        \begin{tikzpicture}[baseline={0ex},mymatrixenv]
        \matrix [mymatrix,inner sep=5pt] (m)  
        {
        \tilde{\one}_1 \\
        \vdots \\
        \tilde{\one}_{\numJ} \\
        \one_{\tilde{\T}} \\
        };
        \draw[densely dashed] (-0.5,-0.6) -- (0.5,-0.6);
        \end{tikzpicture}
        =\one_\J.
    \end{equation*}
    
    Thus, for $j \in \J$, 
    the expected value of the first time that $X^\eps$ exits from the transient set $\T$ starting from $y \in \tilde{C}_j$ is
    \begin{equation*}
        h_y (\eps) = - \frac{1}{\eps} e_j^T \left(\Mo_0 T_1 \Wo_0 \right)^{-1} \one_\J + O(1) = \hat{h}_j \eps^{-1} + O(1),
    \end{equation*}
    where $e_j \in \R^{\abs{\J}}$ is the standard unit basis vector for $j \in \J$, and the last equality holds {\color{black} by Lemma \ref{lem:MTWInvertible}.} 
    Similarly, the expected value of the first time that $X^\eps$ exits from the transient set $\T$ starting from $y \in \tilde{\T}$ is
    \begin{equation*}
        h_y (\eps) = - \frac{1}{\eps} \left( \sum_{j \in \J} (\tilde{L}_j)_y e_j^T \right)  \left(\Mo_0 T_1 \Wo_0 \right)^{-1} \one_\J = \left( \sum_{j \in \J} (\tilde{L}_j)_y \hat{h}_j \right) \eps^{-1} + O(1).
    \end{equation*}

    Suppose $\abs{\J} = 0$. Then, $\T = \tilde{\T}$, and by {\color{black} Lemmas \ref{lem:APandMFPT} and \ref{lem:TepsInverse},} 
    \begin{equation*}
        h (\eps) = - T^{-1} (\eps) \one_\T = - (T_0)^{-1} \one_\T + O(\eps),
    \end{equation*}
    where for $y \in \T = \tilde{\T}$, $\E_y^0 \left[ \tau^0 \right] = (- (T_0)^{-1} \one_\T)_y < \infty$ by a similar argument to the one used in  {\color{black} Lemma \ref{lem:APandMFPT}}.
\qed

\noindent \textbf{Proof of {\color{black} Theorem \ref{thm:CMFPT}}:}

Consider $i \in \I$. 
Suppose $\abs{\J} \geq 1$. 
We first find $\phi_i (\eps) = - T^{-1} (\eps) \mathcal{L}_{i} (\eps)$ {\color{black} defined in Lemma \ref{lem:APandMFPT}}. By {\color{black} Lemma \ref{lem:TepsInverse} and Theorem \ref{thm:AP}}, 
\begin{eqnarray*}
     \phi_i (\eps) = - T^{-1} (\eps) \mathcal{L}_{i} (\eps) &=& - \left( \frac{1}{\eps} B^{(-1)} \one_\T + O(1) \right) \left( \mathcal{L}_{0,i} - \Wo_0 \left(\Mo_0 T_1 \Wo_0 \right)^{-1} J_i + O(\eps) \right) \\
    &=& - \frac{1}{\eps} \Wo_0 \left(\Mo_0 T_1 \Wo_0 \right)^{-1} \Mo_0 \left( \mathcal{L}_{0,i} - \Wo_0 \left(\Mo_0 T_1 \Wo_0 \right)^{-1} J_i \right) + O(1) \\
    &=& - \frac{1}{\eps} \Wo_0 \left(\Mo_0 T_1 \Wo_0 \right)^{-1} \left(\Mo_0 T_1 \Wo_0 \right)^{-1} J_i + O(1).
\end{eqnarray*}
where in the third equality, we used the fact that 
\begin{equation*}
    \Mo_0 \mathcal{L}_{0,i} = 
    \begin{tikzpicture}[baseline={-0.5ex},mymatrixenv]
    \matrix [mymatrix,inner sep=5pt] (m)  
    {
    \tilde{\Pn}_{1} & 0 & 0 & 0 \\
    0 & \ddots & 0 & \vdots \\
    0 & 0 & \tilde{\Pn}_{\numJ} & 0 \\
    };
    \draw[densely dashed] (1,-0.9) -- (1,0.9);
    \end{tikzpicture}
    \begin{tikzpicture}[baseline={0ex},mymatrixenv]
    \matrix [mymatrix,inner sep=5pt] (m)  
    {
    0 \\
    \vdots \\
    0 \\
    L_i \\
    };
    \draw[densely dashed] (-0.4,-0.6) -- (0.4,-0.6);
    \end{tikzpicture}
    =0
\end{equation*}
and
\begin{equation*}
    \Mo_0 \Wo_0 = 
    \begin{tikzpicture}[baseline={-0.5ex},mymatrixenv]
    \matrix [mymatrix,inner sep=5pt] (m)  
    {
    \tilde{\Pn}_{1} & 0 & 0 & 0 \\
    0 & \ddots & 0 & \vdots \\
    0 & 0 & \tilde{\Pn}_{\numJ} & 0 \\
    };
    \draw[densely dashed] (1,-0.9) -- (1,0.9);
    \end{tikzpicture}
    \begin{tikzpicture}[baseline={0ex},mymatrixenv]
    \matrix [mymatrix,inner sep=5pt] (m)  
    {
    \tilde{\one}_{1} & 0 & 0 \\
    0 & \ddots & 0 \\
    0 & 0 & \tilde{\one}_{\numJ} \\
    \tilde{L}_{1} & \cdots & \tilde{L}_{\numJ} \\
    };
    \draw[densely dashed] (-1.4,-0.6) -- (1.4,-0.6);
    \end{tikzpicture}
    =I.
\end{equation*}
By {\color{black} Lemma \ref{lem:MTWInvertible}}, 
we have $\hat{\mathcal{L}}_{i} = \left( \hat{\mathcal{L}}_{ji} \right)_{j \in \J} = \left(\Mo_0 T_1 \Wo_0 \right)^{-1} J_i$, and so
\begin{equation*}
    \phi_i (\eps) = - \frac{1}{\eps} \Wo_0 \left(\Mo_0 T_1 \Wo_0 \right)^{-1} \hat{\mathcal{L}}_{i} + O(1).
\end{equation*}

Thus, by {\color{black} Lemma \ref{lem:APandMFPT}},
for $j \in \J$, if $\hat{\mathcal{L}}_{ji} > 0$,
the conditional expected value of the first time that $X^\eps$ exits from the transient set $\T$ given that $X^\eps$ enters $\CC_i$, starting from $y \in \tilde{C}_j$, is
\begin{equation*}
    g_{yi} (\eps) = \frac{\phi_{yi} (\eps)}{\mathcal{L}_{yi} (\eps)} = \frac{- \frac{1}{\eps} e_j^T \left(\Mo_0 T_1 \Wo_0 \right)^{-1} \hat{\mathcal{L}}_{i} + O(1)}{\hat{\mathcal{L}}_{ji} + O(\eps)}
    = \hat{g}_{ji} \eps^{-1} + O(1).
\end{equation*}
where $e_j \in \R^{\abs{\J}}$ is the standard unit basis vector for $j \in \J$, and the last equality holds {\color{black} by Lemma \ref{lem:MTWInvertible}.} 
Similarly, for $y \in \tilde{\T}$ such that $(L_{i})_y + \sum_{j \in \J} (\tilde{L}_j)_y \hat{\mathcal{L}}_{ji} > 0$, the conditional expected value of the first time that $X^\eps$ exits from the transient set $\T$ given that $X^\eps$ enters $\CC_i$, starting from $y$, is
\begin{eqnarray*}
    g_{yi} (\eps) = \frac{\phi_{yi} (\eps)}{\mathcal{L}_{yi} (\eps)} &=& \frac{- \frac{1}{\eps} \left( \sum_{j \in \J} (\tilde{L}_j)_y e_j^T \right)  \left(\Mo_0 T_1 \Wo_0 \right)^{-1} \hat{\mathcal{L}}_{i} + O(1)}{(L_{i})_y + \sum_{j \in \J} (\tilde{L}_j)_y \hat{\mathcal{L}}_{ji} + O(\eps)} \\
    &=& \left( \frac{\sum_{j \in \J} (\tilde{L}_j)_y \hat{\phi}_{ji}}{(L_{i})_y + \sum_{j \in \J} (\tilde{L}_j)_y \hat{\mathcal{L}}_{ji}} \right) \eps^{-1} + O(1).
\end{eqnarray*}
where $\hat{\phi}_{ji} = \left( - \left( \Mo_0 T_1 \Wo_0 \right)^{-1} \hat{\mathcal{L}}_{i} \right)_j$.

Suppose $\abs{\J} = 0$. Then, $\T = \tilde{\T}$. By {\color{black} Theorem \ref{thm:AP}},
$\mathcal{L}_{i} (\eps) = L_{i} + O(\eps)$,
and so by {\color{black} Lemmas \ref{lem:APandMFPT} and \ref{lem:TepsInverse}},
\begin{equation*}
    \phi_i (\eps) = - T^{-1} (\eps) \mathcal{L}_{i} (\eps) = (- (T_0)^{-1} + O(\eps)) (L_{i} + O(\eps)) = - (T_0)^{-1} L_{i} + O(\eps).
\end{equation*}
Thus, for $y \in \T$ such that $(L_i)_y =\PP_y^0 [X^0(\tau^0) \in \CC_i]  > 0$, 
\begin{equation*}
    g_{yi} (\eps) = \frac{(- (T_0)^{-1} L_{i})_y + O(\eps)}{(L_i)_y + O(\eps)} = \frac{(- (T_0)^{-1} L_{i})_y}{(L_i)_y} + O(\eps)
\end{equation*}
where $\E_y^0 \left[ \tau^0 | X^0(\tau^0) \in \CC_i \right] = \frac{(- (T_0)^{-1} L_{i})_y}{(L_i)_y}< \infty$ by a similar argument to the one used in {\color{black} Lemma \ref{lem:APandMFPT}}. 
\qed

\subsection{Mutant Fixation Analysis for two-deme population}
\label{sec:2Dresults}

As a supplement to the analysis {\color{black} in section \ref{sec:general}}, here we provide an explicit exposition on the mutant fixation analysis for fragmented population with two demes introduced {\color{black} in section \ref{sec:2D}}. Recall that the family of continuous time Markov chains $\{ X^\eps: \eps \geq 0 \}$ has a common finite state space
\begin{equation*}
    \X = \{(w_1,m_1,w_2,m_2): 1 \leq w_1+m_1 \leq K_1, 1 \leq w_2+m_2 \leq K_2 \}.
\end{equation*}
When $\eps > 0$, the Markov chain $X^\eps$ has two recurrent classes: 
\begin{equation*}
\begin{aligned}
\CC_{ww} &= \{(w_1,0,w_2,0) \in \X: 1 \leq w_1 \leq K_1, 1 \leq w_2 \leq K_2\}, \\
\CC_{mm} &= \{(0,m_1,0,m_2) \in \X: 1 \leq m_1 \leq K_1, 1 \leq m_2 \leq K_2\},   
\end{aligned}
\end{equation*}
and the set of transient states for $X^\eps$ is $\T = \X \setminus \left( \CC_{ww} \cup \CC_{mm} \right)$. 
When $\eps = 0$, 
$\CC_{ww}$ and $\CC_{mm}$ are still two recurrent classes for $X^0$. In addition, there are two more recurrent classes for $X^0$, which are
\begin{equation*}
\begin{aligned}
\tilde{\CC}_{wm} &= \{(w_1,0,0,m_2) \in \X: 1 \leq w_1 \leq K_1, 1 \leq m_2 \leq K_2\}, \\
\tilde{\CC}_{mw} &= \{(0,m_1,w_2,0) \in \X: 1 \leq m_1 \leq K_1, 1 \leq w_2 \leq K_2\},   
\end{aligned}
\end{equation*}
and the set of transient states for $X^0$ is $\tilde{\T} 
= \X \setminus \left( \CC_{ww} \cup \CC_{mm} \cup \tilde{\CC}_{wm} \cup \tilde{\CC}_{mw} \right)$. 

In the absence of migration events (i.e., when $\eps = 0$), each deme evolves independently, and so the dynamics in deme $\ell$ can be described by the system presented {\color{black} in section \ref{sec:1D}} with $K=K_\ell$.
To describe the quantities introduced {\color{black} in section \ref{sec:1D}} for deme $\ell$, we add a superscript $\ell$ to the notations. More precisely, we use $\qss{w}{\ell} (\cdot)$ and $\qss{m}{\ell} (\cdot)$ to denote the corresponding stationary distributions for the wild-type and for the mutant in deme $\ell$, respectively, and we use $\fpone{m}{\ell} (\cdot,\cdot)$ to denote the probabilities of mutant fixation in deme $\ell$. Similarly, for $x = (n,k) \in \X$, we use $\fpone{w}{\ell}  (n,k)$ to denote the probability of wild-type fixation starting from $n$ wild-type individuals and $k$ mutant individuals in deme $\ell$.
For $\ell \in \{ 1,2 \}$,
we let 
\begin{equation*}
    \size{w}{\ell} = \sum_{n=2}^{K_\ell} n \qss{w}{\ell} (n) \qquad \text{ and } \qquad \fp{m}{\ell} = \sum_{n=1}^{K_\ell-1} \qss{w}{\ell} (n) \fpone{m}{\ell} (n,1).
\end{equation*}
This $\size{w}{\ell}$ is approximately the expected number of wild-type individuals in deme $\ell$ under its stationary distribution, provided $\qss{w}{\ell} (1)$ is small, and $\fp{m}{\ell}$ is the probability of mutant fixation when a single mutant is introduced from outside of deme $\ell$ to deme $\ell$ with a population of only the wild-type under its stationary distribution.
Similarly, for $\ell \in \{ 1,2 \}$, we let 
\begin{equation*}
    \size{m}{\ell} = \sum_{n=2}^{K_\ell} n \qss{m}{\ell} (n) \qquad \text{ and } \qquad \fp{w}{\ell} = \sum_{n=1}^{K_\ell-1} \qss{m}{\ell} (n) \fpone{w}{\ell} (1,n).
\end{equation*}

Using the construction {\color{black} in section \ref{sec:reducedMC}}, we obtain a reduced process $\hat{X}_\mathcal{R}$, which is a continuous time Markov chain with state space $\mathcal{R} = \{ww,wm,mw,mm\}$ where the states $ww$ and $mm$ are absorbing and the states $wm$ and $mw$ are transient. We proceed to find the non-zero infinitesimal transition rates for $\hat{X}_\mathcal{R}$.

Since each deme evolves independently when $\eps=0$, the stationary distribution $\tilde{\Pn}_{wm}$ supported on $\tilde{\CC}_{wm}$ for $X^0$ is such that for $1 \leq w_1 \leq K_1, 1 \leq m_2 \leq K_2$, 
\begin{equation*}
    \tilde{\Pn}_{wm} (w_1,0,0,m_2) =  \qssnumeric{w}{1} (w_1) \qssnumeric{m}{2} (m_2),
\end{equation*}
and moreover, for $1 \leq w_1 \leq K_1 - 1, 1 \leq m_2 \leq K_2$, starting from $(w_1,1,0,m_2)$, the probability that $X^0$ will be absorbed to the recurrent class $\tilde{\CC}_{mm}$ is 
\begin{equation}
\label{eqn:Pw10m}
    \PP_{(w_1, 1, 0, m_2)}^0 \left[X^0(\tau^0) \in \CC_{mm} \right] = \fponenumeric{m}{1} (w_1,1).
\end{equation}
Thus, the infinitesimal transition rate from $wm$ to $mm$ in $\hat{X}_\mathcal{R}$ is 
\begin{eqnarray*}
    (Q_\mathcal{R})_{wm,mm} 
    &=& \sum_{\substack{x = (w_1, 0, 0, m_2): \\ 1 \leq w_1 \leq K_1 \\ 1 \leq m_2 \leq K_2}} \tilde{\Pn}_{wm} (x) Q^{(1)}_{x,x+(0, 1, 0, -1)} \PP_{x+(0, 1, 0, -1)}^0 \left[X^0(\tau^0) \in \CC_{mm} \right] \\
    &=&  \sum_{w_1=1}^{K_1-1} \sum_{m_2=2}^{K_2} \qssnumeric{w}{1} (w_1) \qssnumeric{m}{2} (m_2) \bar{\mu}_m m_2 \fponenumeric{m}{1} (w_1,1) \\
    &=& \bar{\mu}_m \sum_{w_1=1}^{K_1-1} \qssnumeric{w}{1} (w_1) \fponenumeric{m}{1} (w_1,1) \sum_{m_2=2}^{K_2} m_2 \qssnumeric{m}{2} (m_2)    \\
    &=& \bar{\mu}_m \sizenumeric{m}{2} \fpnumeric{m}{1}.
\end{eqnarray*}
This means the infinitesimal transition rate in $\hat{X}_\mathcal{R}$ to turn the first letter of the state $wm$ from a ``$w$" to a ``$m$" is proportional to the mutant migration rate, (approximately) the expected size of the mutant deme, and the probability of mutant fixation in the wild-type deme.
Similarly, the other positive infinitesimal transition rates are
\begin{equation*}
    (Q_\mathcal{R})_{mw,ww} 
    = \bar{\mu}_w \sizenumeric{w}{2} \fpnumeric{w}{1}, \qquad (Q_\mathcal{R})_{mw,mm} = \bar{\mu}_m \sizenumeric{m}{1} \fpnumeric{m}{2}, \qquad (Q_\mathcal{R})_{wm,ww} = \bar{\mu}_w \sizenumeric{w}{1} \fpnumeric{w}{2}.
\end{equation*}
With this, we will be able to find the probability of mutant fixation and the (unconditional and conditional) expected time for fixation, which we now present.

\subsubsection{Probability of mutant fixation}

We first find the probability that both demes are fixated by the mutant, starting from any deterministic state with only one mutant. 

Let $y=(w_1,1,w_2,0)$ where $1 \leq w_1 \leq K_1 - 1$ and $1 \leq w_2 \leq K_2$. We observe that $\PP_y^0 [X^0(\tau^0) \in \CC_{mm}] = \PP_y^0 [X^0(\tau^0) \in \tilde{\CC}_{wm}] = 0$, and similar to the reasoning for obtaining \eqref{eqn:Pw10m}, we have $\PP_y^0 [X^0(\tau^0) \in \tilde{\CC}_{mw}] = \fponenumeric{m}{1} (w_1,1)$. Moreover, $\hat{\mathcal{L}}_{mw,mm} = \frac{(Q_\mathcal{R})_{mw,mm}}{(Q_\mathcal{R})_{mw,ww}+(Q_\mathcal{R})_{mw,mm}} = \frac{\bar{\mu}_m \sizenumeric{m}{1} \fpnumeric{m}{2}}{\bar{\mu}_w \sizenumeric{w}{2} \fpnumeric{w}{1} + \bar{\mu}_m \sizenumeric{m}{1} \fpnumeric{m}{2}}$. Thus, {\color{black} from \eqref{eqn:mainthmAP} in Theorem \ref{thm:AP}}, we have 
\begin{eqnarray}
    && \mathcal{L}_{y,mm} (\eps) \nonumber \\
    &=& \left( \PP_y^0 [X^0(\tau^0) \in \CC_{mm}] + \PP_y^0 [X^0(\tau^0) \in \tilde{\CC}_{wm}] \hat{\mathcal{L}}_{wm,mm} + \PP_y^0 [X^0(\tau^0) \in \tilde{\CC}_{mw}] \hat{\mathcal{L}}_{mw,mm}  \right) + O(\eps) \nonumber \\
    &=& \fponenumeric{m}{1} (w_1,1)  \frac{\bar{\mu}_m \sizenumeric{m}{1} \fpnumeric{m}{2}}{\bar{\mu}_w \sizenumeric{w}{2} \fpnumeric{w}{1} + \bar{\mu}_m \sizenumeric{m}{1} \fpnumeric{m}{2}} + O(\eps). \nonumber \\ \label{eqn:FP2D-1}
\end{eqnarray}
Similarly, 
for $z=(w_1,0,w_2,1)$ where $1 \leq w_1 \leq K_1$ and $1 \leq w_2 \leq K_2$,
\begin{equation}
\label{eqn:FP2D-2}
    \mathcal{L}_{z,mm} (\eps) 
    = \fponenumeric{m}{2} (w_2,1) \frac{\bar{\mu}_m \sizenumeric{m}{2} \fpnumeric{m}{1}}{\bar{\mu}_w \sizenumeric{w}{1} \fpnumeric{w}{2} + \bar{\mu}_m \sizenumeric{m}{2} \fpnumeric{m}{1}} + O(\eps).
\end{equation}

Assume the wild-type population has come to equilibrium, i.e., $X^\eps$ is restricted to the recurrent class $\CC_{ww}$. For $\eps \geq 0$, $X^\eps$ restricted to the recurrent class $\CC_{ww}$ is positive recurrent and has a unique stationary distribution $\Pn^\eps_{ww}$. Thus, by Theorem 4.1 in Campos et al. \cite{bib:epifinite}, 
\begin{equation}
\label{eqn:WQSS2D}
    \Pn^\eps_{ww} (w_1,0,w_2,0) = \Pn_{ww} (w_1,0,w_2,0)  + O(\eps) = \qssnumeric{w}{1} (w_1) \qssnumeric{w}{2} (w_2) + O(\eps).
\end{equation}
where $\Pn_{ww}$ is the stationary distribution supported on $\CC_{ww}$ for $X^0$. We then consider an immigration event where a mutant is introduced from an external source 
and the probability it falls in a particular deme is proportional to the carrying capacity of that deme. 
Then, from \eqref{eqn:FP2D-1}--\eqref{eqn:WQSS2D}, we have that the probability that both demes are fixated by the mutant is
\begin{eqnarray*}
    && \frac{K_1}{K_1+K_2} \sum_{w_1=1}^{K_1-1} \sum_{w_2=1}^{K_2} \Pn^\eps_{ww} (w_1,0,w_2,0) \mathcal{L}_{(w_1,1,w_2,0),mm} (\eps) \\
    && \qquad \qquad + \frac{K_1}{K_1+K_2} \sum_{w_1=1}^{K_1} \sum_{w_2=1}^{K_2-1} \Pn^\eps_{ww} (w_1,0,w_2,0) \mathcal{L}_{(w_1,0,w_2,1),mm} (\eps) \\
    &=& \frac{K_2}{K_1+K_2} \fpnumeric{m}{1} \frac{\bar{\mu}_m \sizenumeric{m}{1} \fpnumeric{m}{2}}{\bar{\mu}_w \sizenumeric{w}{2} \fpnumeric{w}{1} + \bar{\mu}_m \sizenumeric{m}{1} \fpnumeric{m}{2}} + \frac{K_2}{K_1+K_2} \fpnumeric{m}{2} \frac{\bar{\mu}_m \sizenumeric{m}{2} \fpnumeric{m}{1}}{\bar{\mu}_w \sizenumeric{w}{1} \fpnumeric{w}{2} + \bar{\mu}_m \sizenumeric{m}{2} \fpnumeric{m}{1}} + O(\eps).
\end{eqnarray*}

\subsubsection{Expected time for fixation}

Similarly to the previous subsection, we first find the expected time when either the mutant or the wild-type fixates, starting from any deterministic state with only one mutant. 

Let $y=(w_1,1,w_2,0)$ where $1 \leq w_1 \leq K_1 - 1$ and $1 \leq w_2 \leq K_2$. Since $\PP_y^0 [X^0(\tau^0) \in \tilde{\CC}_{wm}] = 0$,  $\PP_y^0 [X^0(\tau^0) \in \tilde{\CC}_{mw}] = \fponenumeric{m}{1} (w_1,1)$ and $\hat{h}_{mw} = \frac{1}{(Q_\mathcal{R})_{mw,ww}+(Q_\mathcal{R})_{mw,mm}} = \frac{1}{\bar{\mu}_w \sizenumeric{w}{2} \fpnumeric{w}{1} + \bar{\mu}_m \sizenumeric{m}{1} \fpnumeric{m}{2}}$, {\color{black} from \eqref{eqn:mainthmET} in Theorem \ref{thm:MFPT}}, we have 
\begin{eqnarray}
    h_y (\eps) &=& \left( \PP_y^0 [X^0(\tau^0) \in \tilde{\CC}_{wm}] \hat{h}_{wm} + \PP_y^0 [X^0(\tau^0) \in \tilde{\CC}_{mw}] \hat{h}_{mw} \right) \eps^{-1} + O(1) \nonumber \\
    &=& \frac{\fponenumeric{m}{1} (w_1,1)}{\bar{\mu}_w \sizenumeric{w}{2} \fpnumeric{w}{1} + \bar{\mu}_m \sizenumeric{m}{1} \fpnumeric{m}{2}} \eps^{-1} + O(1). \label{eqn:MFPT2D-1}
\end{eqnarray}
Similarly, 
for $z=(w_1,0,w_2,1)$ where $1 \leq w_1 \leq K_1, 1 \leq w_2 \leq K_2$,
\begin{equation}
\label{eqn:MFPT2D-2}
        h_z (\eps) 
        = \frac{\fponenumeric{m}{2} (w_2,1)}{\bar{\mu}_w \sizenumeric{w}{1} \fpnumeric{w}{2} + \bar{\mu}_m \sizenumeric{m}{2} \fpnumeric{m}{1}} \eps^{-1} + O(1).
\end{equation}

Assuming the wild-type population has come to equilibrium, we then consider an immigration event where a mutant is introduced from an external source 
and the probability it falls in a particular deme is proportional to the carrying capacity of that deme. Thus, from \eqref{eqn:WQSS2D}--\eqref{eqn:MFPT2D-2}, we have that the expected time when either the mutant or the wild-type fixates is
\begin{eqnarray*}
    && \frac{K_1}{K_1+K_2} \sum_{w_1=1}^{K_1-1} \sum_{w_2=1}^{K_2} \Pn^\eps_{ww} (w_1,0,w_2,0) h_{(w_1,1,w_2,0)} (\eps) \\
    && \qquad \qquad + \frac{K_2}{K_1+K_2} \sum_{w_1=1}^{K_1} \sum_{w_2=1}^{K_2-1} \Pn^\eps_{ww} (w_1,0,w_2,0) h_{(w_1,0,w_2,1)} (\eps) \\
    &=& \left(\frac{K_1}{K_1+K_2} \fpnumeric{m}{1} \frac{1}{\bar{\mu}_w \sizenumeric{w}{2} \fpnumeric{w}{1} + \bar{\mu}_m \sizenumeric{m}{1} \fpnumeric{m}{2}} + \frac{K_2}{K_1+K_2} \fpnumeric{m}{2} \frac{1}{\bar{\mu}_w \sizenumeric{w}{1} \fpnumeric{w}{2} + \bar{\mu}_m \sizenumeric{m}{2} \fpnumeric{m}{1}} \right) \eps^{-1} + O(1).
\end{eqnarray*}

\subsubsection{Conditional expected time for mutant fixation}
Similarly to the previous two subsections, we first find the conditional expected time for mutant fixation, starting from any deterministic state with only one mutant. 

Let $y=(w_1,1,w_2,0)$ where $1 \leq w_1 \leq K_1 - 1$ and $1 \leq w_2 \leq K_2$. Since $\PP_y^0 [X^0(\tau^0) \in \CC_{mm}] = \PP_y^0 [X^0(\tau^0) \in \tilde{\CC}_{wm}] = 0$,
{\color{black} from \eqref{eqn:mainthmCET} in Theorem \ref{thm:CMFPT}}, we have 
\begin{eqnarray*}
    && g_{y,mm} (\eps) \\
    &=& \frac{\PP_y^0 [X^0(\tau^0) \in \tilde{\CC}_{wm}] \hat{\phi}_{wm,mm} + \PP_y^0 [X^0(\tau^0) \in \tilde{\CC}_{mw}] \hat{\phi}_{mw,mm}}{ \PP_y^0 [X^0(\tau^0) \in \CC_{mm}] + \PP_y^0 [X^0(\tau^0) \in \tilde{\CC}_{wm}] \hat{\mathcal{L}}_{wm,mm} + \PP_y^0 [X^0(\tau^0) \in \tilde{\CC}_{mw}] \hat{\mathcal{L}}_{mw,mm}} \eps^{-1} + O(1) \\
    &=& \frac{\hat{\phi}_{mw,mm}}{\hat{\mathcal{L}}_{mw,mm}} \eps^{-1} + O(1).\\
\end{eqnarray*}
where
\begin{eqnarray*}
    \hat{\phi}_{mw,mm} = \hat{\E}_{mw} \left[ \hat{\tau} \one_{ \{ \hat{X}_\mathcal{R} (\hat{\tau}) = mm \} } \right] &=& \hat{\E}_{mw} \left[ \hat{\tau} \right] \hat{\PP}_{mw} \left[\hat{X}_\mathcal{R} (\hat{\tau}) = mm \right] \\
    &=& \frac{1}{(Q_\mathcal{R})_{mw,ww}+(Q_\mathcal{R})_{mw,mm}} \hat{\mathcal{L}}_{mw,mm},
\end{eqnarray*}
and so
\begin{equation}
\label{eqn:CMFPT2D-1}
g_{y,mm} (\eps) 
= \frac{1}{\bar{\mu}_w \sizenumeric{w}{1} \fpnumeric{w}{2} + \bar{\mu}_m \sizenumeric{m}{2} \fpnumeric{m}{1}} \eps^{-1} + O(1).
\end{equation}
Similarly, 
for $z=(w_1,0,w_2,1)$ where $1 \leq w_1 \leq K_1, 1 \leq w_2 \leq K_2$,
\begin{equation}
\label{eqn:CMFPT2D-2}
        g_{z,mm} (\eps) 
        = \frac{1}{\bar{\mu}_w \sizenumeric{w}{1} \fpnumeric{w}{2} + \bar{\mu}_m \sizenumeric{m}{2} \fpnumeric{m}{1}} \eps^{-1} + O(1).
\end{equation}

Assuming the wild-type population has come to equilibrium, we then consider an immigration event where a mutant is introduced from an external source 
and the probability it falls in a particular deme is proportional to the carrying capacity of that deme. Thus, from \eqref{eqn:WQSS2D} and \eqref{eqn:CMFPT2D-1}--\eqref{eqn:CMFPT2D-2}, we have that the conditional expected time for mutant fixation is
\begin{eqnarray*}
    && \frac{K_1}{K_1+K_2} \sum_{w_1=1}^{K_1-1} \sum_{w_2=1}^{K_2} \Pn^\eps_{ww} (w_1,0,w_2,0) g_{(w_1,1,w_2,0),mm} (\eps) \\
    && \qquad \qquad + \frac{K_2}{K_1+K_2} \sum_{w_1=1}^{K_1} \sum_{w_2=1}^{K_2-1} \Pn^\eps_{ww} (w_1,0,w_2,0) g_{(w_1,0,w_2,1),mm} (\eps) \\
    &=& \left( \frac{K_1}{K_1+K_2} \frac{1 - \qssnumeric{w}{1} (K_1)}{\bar{\mu}_w \sizenumeric{w}{2} \fpnumeric{w}{1} + \bar{\mu}_m \sizenumeric{m}{1} \fpnumeric{m}{2}}  + \frac{K_2}{K_1+K_2} \frac{1 - \qssnumeric{w}{2} (K_2)}{\bar{\mu}_w \sizenumeric{w}{1} \fpnumeric{w}{2} + \bar{\mu}_m \sizenumeric{m}{2} \fpnumeric{m}{1}} \right) \eps^{-1} + O(1).
\end{eqnarray*}

\printbibliography[heading=bibintoc, title={References}]
\end{appendices}
\end{refsection}

\end{document}